\definecolor{DarkGreen}{rgb}{0.2,0.6,0.2}
\definecolor{purple}{rgb}{0.6,0.3,0.8}
\def\d{\mathrm{d}}
\newcommand{\R}{\mathbb{R}}
\newcommand{\p}{\mathbb{P}}
\newcommand{\X}{\mathcal{X}}
\newcommand{\M}{\mathcal{M}}
\newcommand{\id}{\mathds{1}}
\renewcommand{\ge}{\geqslant}
\renewcommand{\le}{\leqslant}
\renewcommand{\geq}{\geqslant}
\renewcommand{\leq}{\leqslant}
\renewcommand{\epsilon}{\varepsilon}
\renewcommand{\cdots}{\dots}
\theoremstyle{plain}
\newtheorem{theorem}{Theorem}
\newtheorem{corollary}{Corollary}
\newtheorem{lemma}{Lemma}
\newtheorem{proposition}{Proposition}
\theoremstyle{definition}
\newtheorem{definition}{Definition}
\newtheorem{example}{Example}
\theoremstyle{remark}
\newcommand{\thedate}{\today}
\begin{document}
\title{Robust $\Lambda$-quantiles and {extremal distributions}}
\author{
Xia Han\thanks{School of Mathematical Sciences, LPMC and AAIS,  Nankai University, China. Email: xiahan@nankai.edu.cn}~~and~
Peng Liu\thanks{School of Mathematics, Statistics and Actuarial Science, University of Essex, UK. Email: peng.liu@essex.ac.uk} 
}

  \date{\thedate}

 \maketitle
\begin{abstract}

 In this paper, we investigate the robust models for $\Lambda$-quantiles with partial information regarding the loss distribution, where $\Lambda$-quantiles extend the classical quantiles by replacing the fixed probability level with a probability/loss function $\Lambda$. We find that, under some assumptions, the robust $\Lambda$-quantiles equal the $\Lambda$-quantiles of the {extremal distributions}. This finding allows us to obtain the robust $\Lambda$-quantiles by applying the results of robust quantiles in the literature. Our results are applied to  uncertainty sets characterized by three different constraints respectively:  moment constraints, probability distance constraints via the Wasserstein metric, and marginal constraints in risk aggregation.  We obtain some explicit expressions for robust $\Lambda$-quantiles by deriving the {extremal distributions} for each uncertainty set. These results are applied to optimal portfolio selection under model uncertainty.

\begin{bfseries}Key-words\end{bfseries}: Quantiles; $\Lambda$-quantiles; Distributionally robust optimization; Robust $\Lambda$-quantiles; {Extremal distributions}; Wasserstein distance; Portfolio selection.
\end{abstract}

 \section{Introduction}\label{sec:1}

To assess the impact of model misspecification and offer a robust quantification of the optimization problem, distributionally robust optimization (DRO) has been extensively  studied in recent years and has become a fast-growing field. In DRO, some object of interest is assessed over a set of alternative distributions, where the object can be some risk measures (functionals) or expected utility and the set of alternative distributions is called the uncertainty set or ambiguity set. 
Typically, we are concerned with the ``worst-case'' (maximal) and ``best-case'' (minimal) values of some risk measures over the
uncertainty set, called the robust risk measures, representing the robust quantification of the risk
measures, which is independent of the choice of the distributions lying in the uncertainty set.  The importance of the robustness of risk measures has been emphasized in the academic response of Basel Accord 3.5 in \cite{EPRWB14}, and the robustness of the risk measures has been studied in  e.g., \cite{CDS10} and \cite{HKP22}.

The uncertainty sets are always characterized by some constraints, representing the partially known information.
The moment constraints are popularly used to characterize the uncertainty sets in the DRO literature. This means that  some information on the moments of the distributions is known. Under this type of  uncertainty sets, robust-quantiles were studied in \cite{GOO03}, robust expected shortfall (ES) was considered in \cite{NSU10} and \cite{ZF09},  robust convex risk measures were investigated in \cite{L18}, and robust distortion risk measures and riskmetrics  were considered in \cite{PWW20},  \cite{SZ23}, \cite{SZ24}, and \cite{CLM23}.  One popular probability distance constraint is defined by the Wasserstein metric, where the Wasserstein distance represents the tolerance of the discrepancy of the alternative distributions with the baseline distribution;  see e.g., \cite{EK18} and \cite{BM19}.  The worst-case values of quantiles,  spectral risk measures and distortion riskmetrics over this uncertainty set were studied in \cite{LMWW22}, \cite{BPV24} and \cite{PWW20}. The marginal constraint is always imposed on the risk aggregation, where the marginal distributions are known but the dependence structure is completely unknown. For this uncertainty set, the robust-quantiles are the main concern in the literature, which is only solved for some special marginal distributions.  The robust-quantiles were derived for marginal distributions with monotone densities in \cite{WW16}, \cite{JHW16} and \cite{BLLW20}. We refer to \cite{PR12} and \cite{EKP20} for numerical studies. The uncertainty sets characterized by the above three constraints will be explored later. Recently, the robust models for quantiles under ambiguity were obtained in \cite{LMW24}, where the uncertainty set is a collection of probability measures, different from our setup in the paper.

{ In this paper, we focus on $\Lambda$-quantiles, first introduced by \cite{FMP14}, which extend the classical quantiles by replacing the fixed probability level with a probability/loss function. For $\Lambda: \R \to [0,1]$ and  a distribution function $F$,   one version of $\Lambda$-quantiles is defined as 
$$
q_{\Lambda}^-(F) = \inf \{x \in \mathbb R: F(x) \ge \Lambda(x)\}
$$
with the convention that $\inf\emptyset=\infty$. Four different definitions of $\Lambda$-quantiles introduced  in  \cite{BP22} will be given in Section \ref{sec:2}. 

The $\Lambda$-quantiles offer greater flexibility by choosing the function $\Lambda$  instead of fixing a confidence level, which allows one to handle multiple confidence levels simultaneously.  Typically, the $\Lambda$ function is chosen to be either increasing or decreasing,  and  the choice and estimation of the $\Lambda$ function were studied in \cite{HMP18}. Interestingly, $\Lambda$-quantiles  still retain some good properties of quantiles such as monotonicity, robustness, elicitability, consistency and locality; see  \cite{BPR17} and \cite{BP22}.  As shown in  \cite{HWWX23},  $\Lambda$-quantiles with decreasing $\Lambda$ functions satisfy quasi-star-shapedness, indicating that convex combination with a deterministic risk does not increase the risk,  which is a property weaker than quasi-convexity and penalizing the special type of concentration. In addition, $\Lambda$-quantiles with decreasing $\Lambda$ functions also satisfy cash-subadditivity, which is desirable for measuring the risk of the financial position in the presence of stochastic or ambiguous interest rates; see \cite{KR09} and \cite{HWWX23}. Recently, an axiomatization of $\Lambda$-quantiles was offered in \cite{BP22},  and  a new expression of $\Lambda$-quantiles with a decreasing $\Lambda$ function was provided in \cite{HWWX23}.}  
  
{  The application of $\Lambda$-quantiles has also attracted some attention recently. For instance,
  \cite{IPP22}  studied the risk contribution (sensitivity analysis) of $\Lambda$-quantiles; \cite{BBB23} and \cite{BCHW24} focused on the applications of $\Lambda$-quantiles  to optimal reinsurance design and premium calculation; \cite{L24} and \cite{XH24} investigated the risk-sharing problem with $\Lambda$-quantiles representing the agents' preferences, which was extended to the case with heterogeneous beliefs by \cite{LTW24}; and \cite{PV24} used the scoring function of $\Lambda$-quantiles for elicitability as the cost function to study  the  Monge-Kantorovich optimal transport.}  

 In this paper, we aim to  establish the robust models for $\Lambda$-quantiles with limited information on the underlying distribution of loss, where the function $\Lambda$  can  either decrease or increase. Despite the prevailing notion in existing literature that most researchers consider the risk in the worst-case scenario, wherein the decision maker (DM)  is extremely ambiguity-averse, empirical studies (see e.g., \cite{HT91} and \cite{KLT15}) demonstrate that the DM's attitude toward ambiguity is not uniformly negative. In fact, they may exhibit ambiguity-loving tendencies if they perceive themselves as knowledgeable or competent in the context.  Thus,  both the  worst and best possible values for $\Lambda$-quantiles  over a set of
alternative distributions  will be  considered.


 In Section \ref{sec:2}, we introduce four different definitions of $\Lambda$-quantiles given by \cite{BP22} and summarize some new properties of $\Lambda$-quantiles. In particular, we show the quasi-star-shapedness for all four $\Lambda$-quantiles with decreasing $\Lambda$ functions and show that all those four $\Lambda$-quantiles are cash-subadditive for decreasing $\Lambda$ functions  and cash-supadditive for increasing $\Lambda$ functions. Those properties motivate the study and application of $\Lambda$-quantiles with monotone $\Lambda$ functions. Moreover, we offer an alternative expression for one $\Lambda$-quantile with decreasing $\Lambda$ functions as an analog of the expression in \cite{HWWX23}, which enhances interpretability and offers  convenience for computing robust $\Lambda$-quantiles.

  In Section \ref{sec:3}, we show that, under some assumption, the robust $\Lambda$-quantiles over a general uncertainty set is equal to the $\Lambda$-quantile of the { extremal  distributions} over the same uncertainty set, which is the main finding of this paper.  { In this paper, the extremal distribution is an increasing function representing the pointwise supremum or infimum of the distribution functions over an uncertainty set.} This fact shows that the computation of robust $\Lambda$-quantiles can be decomposed into two steps under some conditions: i) compute the {extremal distributions}, and  ii) compute the $\Lambda$-quantiles of the obtained {extremal distributions}. This means that, under some conditions,  in order to compute the robust $\Lambda$-quantiles, it suffices to compute the {extremal distributions}, which can be derived via the results of robust quantiles.  The crucial assumption for our results is the attainability of {extremal distributions}. Verifying this property for popular uncertainty sets is not trivial, and will be discussed in Section \ref{sec:4}.  Moreover, we construct several counterexamples for the assumptions of our main results and discuss the existence of the optimal distributions in the robust $\Lambda$-quantiles. Notably,  our findings are closely {linked} to \cite{MWW22}, where a novel approach, called  the model aggregation approach, was introduced to evaluate the risk under model uncertainty for some risk measures including quantiles and ES. 

In Section \ref{sec:4}, we consider the uncertainty sets characterized by three different constraints respectively: moment constraints, probability constraints via the Wasserstein metric, and marginal constraints in risk aggregation. We obtain the {extremal distributions} for each uncertainty set replying on the results on robust quantiles over the same uncertainty sets in the literature and check the attainability of the {extremal distributions} for each uncertainty set. By applying our main results in Section \ref{sec:3}, we directly obtain the robust $\Lambda$-quantiles, showing the usefulness of our findings. Moreover, we apply our results to the portfolio selection problem with model uncertainty, where the risk of the portfolio is assessed by $\Lambda$-quantiles. In particular, for the uncertainty sets with marginal constraints in risk aggregation, we show that diversification is penalized for the portfolio of assets with the same marginal distribution, extending the results for quantiles in \cite{CLLW22}. Some numerical studies have been performed to show the values of the robust $\Lambda$-quantiles and optimal portfolio positions. The paper is concluded in Section \ref{sec:5}.


   \section{Notation and preliminaries}\label{sec:2}

Let $\mathcal X$ be a set of random variables {on} a given atomless probability space $(\Omega,\mathcal F,\p)$. 
  Denote by $\mathcal{M} $  the set of  cumulative distribution functions  (cdfs) of all random variables in $\X$.  { In this paper,  we consider the risk functional $\rho: \M\to\mathbb{R}$. For convenience,  we sometimes use $\rho(X)$ for $X\in \X$ to represent $\rho(F_X)$, where $F_X$ is the distribution of $X$.}  For a distribution $F\in\mathcal{M}$, its left quantile at level $\alpha\in [0,1]$ is defined as
$$q_{\alpha}^-(F)=F^{-1}(\alpha)=\inf\{x\in \R: F(x) \ge \alpha \},$$
and its right quantile  at level $\alpha\in [0,1]$ is given by $$q_{\alpha}^+(F)=F_+^{-1}(\alpha)=\inf\{x\in\R: F(x)>\alpha\}$$
with the convention that $\inf\emptyset =\infty$. Note that by the above definition, we have $q_0^-(F)=-\infty$ and $q_1^+(F)=\infty$.
We refer to  e.g., \cite{MFE15} for more discussions on the properties and applications of quantiles. 
Throughout this paper, terms such as ``increasing'' and ``decreasing'' are in the non-strict sense, and the notation $a\vee b$  (resp.~$a\wedge b$) is the maximum (resp.~minimum) between real numbers $a$ and $b$, and $a_+=a\vee 0$.

We next present four  definitions of $\Lambda$-quantiles given in  \cite{BP22}.  


 \begin{definition}\label{def:1}For $\Lambda:\R\to [0,1]$, and an increasing function $f:\R\to[0,1]$, the $\Lambda$-quantiles of $f$ are defined as\footnote{{Note that we use the notation $f$ here to distinguish it from $F$, as $f$ is not necessarily a distribution function.}}
 \begin{align}\label{lambdaL}
 q_\Lambda^-(f)=\inf\{x\in \R: f(x) \ge \Lambda(x) \},&~~q_\Lambda^+(f)=\inf\{x\in \R: f(x)>\Lambda(x) \},\nonumber\\
 \tilde{q}_\Lambda^-(f)=\sup\{x\in \R: f(x)<\Lambda(x) \},&~~\tilde{q}_\Lambda^+(f)=\sup\{x\in \R: f(x)\leq\Lambda(x) \},
  \end{align}
 where $\inf\emptyset=\infty$ and $\sup\emptyset=-\infty$ by convention.  \end{definition}
In Definition \ref{def:1}, agents can choose $\Lambda$ as either increasing or decreasing functions.  Without additional assumptions, the four $\Lambda$-quantiles can all differ; see Examples 4 and 5 of \cite{BP22}. In general, an increasing $\Lambda$ implies that the DM accepts only a smaller probability for larger losses. Conversely, a decreasing $\Lambda$ suggests a tendency to tolerate a greater probability as losses increase.

A simple example of  $\Lambda$-quantile is the two-level $\Lambda$-quantile presented in Example 7 of \cite{BP22}, where the $\Lambda$ function is defined as $\Lambda: x\mapsto \beta \id_{\{x< z\}} + \alpha \id_{\{x\geq z\}}$. In particular,  if $\Lambda$ is a constant, then  $\Lambda$-quantiles boil down to quantiles.

  A mapping $\rho: \X \rightarrow \R$ is said to satisfy  \emph{monotonicity} if $\rho(X) \leq \rho(Y)$ for all $X,Y\in \X$ with $X \leq Y$, and satisfy
 {\emph{cash-additivity}} if $\rho(X+m)=\rho(X)+m$ for all $X\in \X$ and $m\in\R$.   It is well known that $q^-_\alpha$ and $q^+_\alpha$  {satisfy  both properties}.  However, $\Lambda$-quantiles are not cash-additive in general.

 A mapping $\rho: \X \rightarrow \R$ is said to satisfy
{\emph{cash-subadditivity} (resp. cash-supadditivity)} if $\rho(X+m)\leq\rho(X)+m$ (resp. $\rho(X+m)\geq\rho(X)+m$) for all $X\in \X$ and $m\ge 0$. 
A mapping $\rho$ is said to be   {\emph{quasi-star-shaped}} if $\rho(\lambda X+(1-\lambda) t) \leqslant \max \{\rho(X), \rho(t)\}$  for all $X \in \mathcal{X}, t \in \mathbb{R}$ and $\lambda \in[0,1]$.
 In the context of capital requirements, {cash-subadditivity} or supadditivity  allows a non-linear increase in the capital requirement as cash is added to the future financial position.  Cash-subadditivity is desirable if the interest rate is stochastic or ambiguous, and it allows the current reserve and the future risk positions to preserve their own num\'{e}raire. Here cash-subadditivity reflects the time value of money; see \cite{KR09} and \cite{HWWX23}. Cash-subadditivity is  satisfied by convex loss-based risk measures proposed in \cite{CDH13}.  Quasi-star-shapedness means  that $\rho$ has quasi-convexity at each constant, representing some consideration of diversification, which  is weaker than quasi-convexity. Its theoretic-decision interpretation is available at  \cite{HWWX23}.

In Theorem 3.1 of \cite{HWWX23}, a representation of $q_{\Lambda}^{-}$ is given in terms of quantiles. In the following proposition,  we obtain a similar formula for $q_{\Lambda}^{+}$. For completeness, we also include the representation for $q_{\Lambda}^{-}$ in the following proposition.
\begin{proposition} If $\Lambda:\R\to [0,1]$ is a decreasing function, then  for $F\in \mathcal{M}$, we have
 \begin{equation}\label{eq:alt_pre}
q_{\Lambda}^{-} (F) = \inf_{x\in \mathbb{R}} \left\{q_{\Lambda(x)}^{-} (F) \vee x\right\}
= \sup_{x\in \mathbb{R}} \left\{q_{\Lambda(x)}^{-} (F) \wedge x\right\},
\end{equation}
and
 \begin{equation}\label{eq:alt_pre2}
q_{\Lambda}^{+}(F)
= \inf_{x\in \R} \left\{ q_{\Lambda(x)}^{+}(F)  \vee  x \right\}= \sup_{x\in \R} \left\{ q_{\Lambda(x)}^{+}(F)  \wedge  x \right\}.
\end{equation}
\end{proposition}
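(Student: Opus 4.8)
The two identities have the same shape, so the plan is to prove \eqref{eq:alt_pre2} in full detail and then observe that \eqref{eq:alt_pre} follows by repeating the argument verbatim with each strict inequality ``$>$'' replaced by ``$\ge$'' (alternatively, \eqref{eq:alt_pre} is Theorem~3.1 of \cite{HWWX23} and may simply be cited). Write $L:=q_\Lambda^+(F)$ and $g(x):=q_{\Lambda(x)}^+(F)\in\RR$ for $x\in\R$. Since $\alpha\mapsto q_\alpha^+(F)=\inf\{z\in\R:F(z)>\alpha\}$ is increasing in $\alpha$ and $\Lambda$ is decreasing, the function $g$ is decreasing on $\R$.

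First I would isolate a purely order-theoretic fact: for \emph{every} decreasing $g:\R\to\RR$,
\begin{equation*}
\sup_{x\in\R}\{g(x)\wedge x\}\;\le\;\inf_{x\in\R}\{g(x)\vee x\}.
\end{equation*}
This is seen by comparing a sup-term at $x$ with an inf-term at $y$: if $x\le y$ then $g(x)\wedge x\le x\le y\le g(y)\vee y$; if $x>y$ then $g(x)\wedge x\le g(x)\le g(y)\le g(y)\vee y$, where the middle step uses that $g$ is decreasing. Hence it is enough to prove the two one-sided bounds $\inf_{x}\{g(x)\vee x\}\le L$ and $L\le\sup_{x}\{g(x)\wedge x\}$; combined with the display above they give $L\le\sup_x\{g(x)\wedge x\}\le\inf_x\{g(x)\vee x\}\le L$, so all three quantities coincide and equal $q_\Lambda^+(F)$, which is \eqref{eq:alt_pre2}.

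The two bounds rest on the monotone level-set structure. Since $F$ is increasing and $\Lambda$ decreasing, $x\mapsto F(x)-\Lambda(x)$ is increasing, so $S:=\{x\in\R:F(x)>\Lambda(x)\}$ is an interval unbounded above with $\inf S=L$; in particular $F(x)\le\Lambda(x)$ whenever $x<L$ and $F(x)>\Lambda(x)$ whenever $x>L$. For the upper bound: if $L\in S$ then $L\in\{z:F(z)>\Lambda(L)\}$, whence $g(L)=q_{\Lambda(L)}^+(F)\le L$ and $g(L)\vee L=L$; if $L\notin S$, choose $x_n\downarrow L$ with $x_n\in S$, so that $F(x_n)>\Lambda(x_n)$ forces $g(x_n)\le x_n$ and $g(x_n)\vee x_n=x_n\to L$. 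For the lower bound: for any $x<L$ we have $F(x)\le\Lambda(x)$, so $x\notin\{z:F(z)>\Lambda(x)\}$, and since that set is again an up-set this forces $g(x)=q_{\Lambda(x)}^+(F)\ge x$; thus $g(x)\wedge x=x$, and letting $x\uparrow L$ gives $\sup_x\{g(x)\wedge x\}\ge L$. The degenerate endpoints are disposed of directly: if $L=+\infty$ the upper bound is vacuous, while $S=\emptyset$ means $F\le\Lambda$ everywhere, so $g(x)\ge x$ for all $x$ and the supremum is $+\infty$; if $L=-\infty$ the lower bound is vacuous, while $S$ is unbounded below, so running the upper-bound computation along points of $S$ tending to $-\infty$ shows the infimum is $-\infty$.

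The main obstacle is not conceptual but a matter of careful bookkeeping at the boundary point $L$ --- distinguishing whether $L$ itself belongs to the relevant level set --- together with the degenerate cases $L\in\{-\infty,+\infty\}$ and $\Lambda(x)\in\{0,1\}$ (where $q_{\Lambda(x)}^\pm(F)\in\{-\infty,+\infty\}$ and one must keep to the conventions $\inf\emptyset=+\infty$, $\sup\emptyset=-\infty$). I would be careful to invoke only the ``safe'' implications $F(x)>\alpha\Rightarrow q_\alpha^+(F)\le x$ and $F(x)\le\alpha\Rightarrow q_\alpha^+(F)\ge x$, together with their $q^-$-analogues $F(x)\ge\alpha\Rightarrow q_\alpha^-(F)\le x$ and $F(x)<\alpha\Rightarrow q_\alpha^-(F)\ge x$ (all of which come from the up-set property of the sets $\{z:F(z)>\alpha\}$ and $\{z:F(z)\ge\alpha\}$), since the converse implications can fail at atoms of $F$. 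Beyond that, the whole argument is the short chain $L\le\sup_x\{g(x)\wedge x\}\le\inf_x\{g(x)\vee x\}\le L$.
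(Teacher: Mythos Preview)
Your proposal is correct and follows essentially the same route as the paper. Both arguments establish the sandwich $q_\Lambda^+(F)\le\sup_x\{q_{\Lambda(x)}^+(F)\wedge x\}\le\inf_x\{q_{\Lambda(x)}^+(F)\vee x\}\le q_\Lambda^+(F)$ via the order-theoretic inequality for decreasing $g$ and the two ``safe'' implications you list; the only cosmetic difference is that the paper obtains the lower bound by invoking the identity $q_\Lambda^+=\tilde q_\Lambda^+$ for decreasing $\Lambda$ (Proposition~6 of \cite{BP22}) and arguing set-theoretically, whereas you derive the same bound from $F(x)\le\Lambda(x)$ for $x<L$ and a limit $x\uparrow L$, which is the same content unpacked.
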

\begin{proof}
  Note that \eqref{eq:alt_pre} follows directly from Theorem 3.1 of \cite{HWWX23}.  Next, we show \eqref{eq:alt_pre2}.
Note that for $x\in \R$ and $t\in [0,1]$, $F(x) \le t $ implies that  $q^+_t(F)\ge x$.
Hence,  for a decreasing $\Lambda$, we have  \begin{align*}
q_{\Lambda}^{+}(F) =\tilde{q}_\Lambda^+(F)&= \sup \{x\in \R: F(x) \le \Lambda(x)\}
\\&  \leq \sup \{x\in \R: q_{\Lambda(x)}^{+}(F) \ge  x\}
\\&  = \sup \{q_{\Lambda(x)}^{+}(F) \wedge  x: q_{\Lambda(x)}^{+}(F) \ge  x\}
  \le \sup_{x\in \R} \left\{q_{\Lambda(x)}^{+}(F) \wedge  x \right\}.
\end{align*}
Moreover, because $F(x)>t $ implies  that  $q^+_t(F)\le x$, we have
\begin{align*}
q_{\Lambda}^{+}(F) &= \inf \{x\in \R: F(x) > \Lambda(x)\}
\\&  \geq \inf \{x\in \R: q_{\Lambda(x)}^{+}(F) \leq x\}
\\&  = \inf \{q_{\Lambda(x)}^{+}(F) \vee  x: q_{\Lambda(x)}^{+}(F) \leq x\}
  \ge \inf_{x\in \R} \left\{q_{\Lambda(x)}^{+}(F) \vee  x \right\}.
\end{align*}
Since  $q_{\Lambda(x)}^{+}(F) \wedge  x   \le q_{\Lambda(y)}^{+}(F) \vee  y $ for any $x,y\in \R$, we have $$
q_{\Lambda}^{+}(F)   \le \sup_{x\in \R} \left\{ q_{\Lambda(x)}^{+}(F)  \wedge  x \right\}
\le \inf_{x\in \R} \left\{ q_{\Lambda(x)}^{+}(F)  \vee  x \right\} \le q_{\Lambda}^{+}(F),
$$ which  implies \eqref{eq:alt_pre2}.
\end{proof}
In \cite{BP22}, many properties of $\Lambda$-quantiles  such as monotonicity, locality, and quasi-convexity on distributions are discussed.
We next summarize some of the properties possessed by $\Lambda$-quantiles that were not included in \cite{BP22} in  the following proposition.

\begin{proposition}\label{prop:1} For $F\in \mathcal{M}$, the following statements hold.
\begin{itemize}
\item[(i)]
     $q_{\Lambda}^{-}(F)\leq \tilde{q}_{\Lambda}^{-}(F)$ and $q_{\Lambda}^{+}(F)\leq \tilde{q}_{\Lambda}^{+}(F)$, and the inequalities become equalities if $\Lambda$ is decreasing.
     \item[(ii)] { All four $\Lambda$-quantiles $q_{\Lambda}^{-}, q_{\Lambda}^{+}, \tilde{q}_{\Lambda}^{-}$, and $\tilde{q}_{\Lambda}^{+}$ are  cash-subadditive if $\Lambda$ is decreasing and cash-supadditive if $\Lambda$ is increasing. Furthermore,  if $\Lambda:\R\to (0,1)$ is right-continuous and has at most finite discontinuous points, then the converse conclusion also holds.}
\item[(iii)] If $\Lambda$ is decreasing, all four $\Lambda$-quantiles $q_{\Lambda}^{-}, q_{\Lambda}^{+}, \tilde{q}_{\Lambda}^{-}$, and $\tilde{q}_{\Lambda}^{+}$ are  quasi-star-shaped.
\end{itemize}
\end{proposition}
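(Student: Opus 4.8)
The plan is to treat the three parts separately, leveraging the definitions in Definition \ref{def:1} directly and exploiting the monotonicity of $\Lambda$ in parts (ii) and (iii).

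For part (i), I would argue purely from the set-theoretic definitions. For the first inequality, $q_\Lambda^-(f)=\inf\{x: f(x)\ge\Lambda(x)\}$ and $\tilde q_\Lambda^-(f)=\sup\{x: f(x)<\Lambda(x)\}$; if $f(y)<\Lambda(y)$ then every $x$ with $f(x)\ge\Lambda(x)$ must satisfy $x\ge y$ is \emph{not} automatic, so instead I would show that for any $a<q_\Lambda^-(f)$ we have $f(a)<\Lambda(a)$ (since $a$ is below the infimum of the set where $f\ge\Lambda$), hence $a\le\tilde q_\Lambda^-(f)$; taking the supremum over such $a$ gives $q_\Lambda^-(f)\le\tilde q_\Lambda^-(f)$. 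The same bookkeeping with the strict/weak inequalities swapped handles $q_\Lambda^+(f)\le\tilde q_\Lambda^+(f)$. For the equality claim when $\Lambda$ is decreasing and $f=F$ is a cdf: here $f$ is increasing and $\Lambda$ is decreasing, so $f-\Lambda$ is increasing, which forces the ``crossing'' to happen at a single threshold; below it $F<\Lambda$ and above it $F\ge\Lambda$ (with some care at the boundary point where $F$ may jump over $\Lambda$), so $\inf\{x:F(x)\ge\Lambda(x)\}=\sup\{x:F(x)<\Lambda(x)\}$, and similarly for the $+$ versions. This is essentially the standard argument for why the two-sided $\Lambda$-VaR definitions collapse in the monotone case.

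For part (ii), fix $m\ge0$ and a decreasing $\Lambda$, and compare $q_\Lambda^-(F_{X+m})$ with $q_\Lambda^-(F_X)+m$, where $F_{X+m}(x)=F_X(x-m)$. Writing $y=x-m$, $q_\Lambda^-(F_{X+m})=\inf\{x: F_X(x-m)\ge\Lambda(x)\}=m+\inf\{y: F_X(y)\ge\Lambda(y+m)\}$. Since $\Lambda$ is decreasing and $m\ge0$, $\Lambda(y+m)\le\Lambda(y)$, so the set $\{y:F_X(y)\ge\Lambda(y+m)\}$ contains $\{y:F_X(y)\ge\Lambda(y)\}$, hence its infimum is $\le q_\Lambda^-(F_X)$; adding $m$ gives cash-subadditivity. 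The $q_\Lambda^+$ case is identical with strict inequality; for $\tilde q_\Lambda^\pm$ one runs the same substitution on the $\sup$-defining sets, noting that $\Lambda(y+m)\le\Lambda(y)$ now \emph{shrinks} the set $\{y:F_X(y)<\Lambda(y+m)\}$, so the supremum decreases, again yielding the $\le$ after adding $m$. The increasing-$\Lambda$ case for cash-supadditivity is the mirror image, with all inclusions reversed. I expect no real obstacle here beyond carefully tracking which direction each set inclusion goes.

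For part (iii), I would use the alternative representation to reduce to the known quasi-star-shapedness of ordinary quantiles. By Proposition on \eqref{eq:alt_pre}--\eqref{eq:alt_pre2} (and part (i), which identifies $\tilde q_\Lambda^\pm$ with $q_\Lambda^\pm$ when $\Lambda$ is decreasing), it suffices to handle $q_\Lambda^-$ and $q_\Lambda^+$; and since $q_\Lambda^-(F)=\sup_x\{q_{\Lambda(x)}^-(F)\wedge x\}$, quasi-star-shapedness will follow if $F\mapsto q_{\Lambda(x)}^-(F)\wedge x$ is quasi-star-shaped for each fixed $x$, because a pointwise supremum of quasi-star-shaped maps is quasi-star-shaped (the bound $\max\{\rho(X),\rho(t)\}$ passes through $\sup$). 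So the crux is: each ordinary quantile $q_\alpha^-$ is quasi-star-shaped — indeed $q_\alpha^-(\lambda X+(1-\lambda)t)\le\max\{q_\alpha^-(X),t\}$ follows from translation-equivariance and positive homogeneity of quantiles (if $t\le q_\alpha^-(X)$ then $\lambda X+(1-\lambda)t\le_{\mathrm{st}}$-type domination gives $q_\alpha^-(\lambda X+(1-\lambda)t)\le q_\alpha^-(X)$; if $t> q_\alpha^-(X)$ a symmetric bound gives $\le t$) — and then $q_\alpha^-(\cdot)\wedge x$ inherits it since $\min$ with a constant preserves the $\max$-bound. The main obstacle I anticipate is making the ``sup of quasi-star-shaped is quasi-star-shaped'' step and the base case for quantiles airtight, particularly the edge cases where quantiles take values $\pm\infty$; I would state the quantile quasi-star-shapedness as a small lemma or cite it, then assemble the pieces. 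An alternative, more self-contained route is to prove (iii) directly from Definition \ref{def:1}: set $Z=\lambda X+(1-\lambda)t$, let $r=\max\{q_\Lambda^-(X),t\}$, and check that for $x>r$ one has $F_Z(x)\ge\Lambda(x)$ by combining $F_X(\cdot)\ge\Lambda(\cdot)$ past $q_\Lambda^-(X)$ with the fact that $Z\le x$ whenever $X\le$ the relevant rescaled threshold — but this is messier, so I would prefer the representation-based proof.
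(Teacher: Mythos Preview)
Your proposal is correct and follows essentially the same approach as the paper: part (i) uses the set-theoretic definitions (the paper argues from the $\tilde q$ side and cites Proposition~6 of \cite{BP22} for the equality, you argue from the $q$ side and give the equality directly via monotonicity of $F-\Lambda$), part (ii) is the same translation-and-compare argument (the paper packages it as $q_\Lambda^-(X+c)=q_{\Lambda^c}^-(X)+c$ with $\Lambda^c(\cdot)=\Lambda(\cdot+c)$ and invokes monotonicity in $\Lambda$), and part (iii) uses the representations \eqref{eq:alt_pre}--\eqref{eq:alt_pre2} together with quasi-star-shapedness of ordinary quantiles. The only noteworthy variation is that for (iii) you use the $\sup\{\,q_\alpha^-\wedge x\,\}$ form and the elementary fact that a supremum of quasi-star-shaped maps is quasi-star-shaped, whereas the paper uses the $\inf\{\,q_\alpha^+\vee x\,\}$ form and appeals to Lemma~3.1 of \cite{HWWX23}; your route is slightly more self-contained.
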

\begin{proof} We first focus on (i). We first show $q_{\Lambda}^{-}(F)\leq \tilde{q}_{\Lambda}^{-}(F)$. Clearly, if $\tilde{q}_{\Lambda}^{-}(F)=-\infty$, then $F(x)\geq \Lambda(x)$ for all $x\in\R$. This implies that $q_{\Lambda}^{-}(F)=-\infty$. Hence, $q_{\Lambda}^{-}(F)\leq \tilde{q}_{\Lambda}^{-}(F)$ holds. If $\tilde{q}_{\Lambda}^{-}(F)=+\infty$, the inequality holds trivially. Next, we consider the case $x_0=\tilde{q}_{\Lambda}^{-}(F)\in\R$. By the definition, we have $F(x)\geq \Lambda(x)$ for all $x>x_0$, which implies that  $q_{\Lambda}^{-}(F)\leq x_0=\tilde{q}_{\Lambda}^{-}(F)$. We can similarly show $q_{\Lambda}^{+}(F)\leq \tilde{q}_{\Lambda}^{+}(F)$. The equalities for decreasing $\Lambda$ follow directly from Proposition 6 of \cite{BP22}.

Next, we show (ii).   Note that for $c\ge 0$,
  $q_\Lambda^-(X+c) = q^-_{\Lambda^c}(X)+c $, where $\Lambda^c(t)=\Lambda(t+c)$ for $t\in \R$. Moreover, $\Lambda^c\le \Lambda$ if $\Lambda$ is decreasing and $\Lambda^c\ge \Lambda$ if $\Lambda$ is increasing. By Proposition 3 of \cite{BP22}, we have $q_{\Lambda^c}^-(X)\leq q_\Lambda^-(X)$ if $\Lambda$ is decreasing, and $q_{\Lambda^c}^-(X)\geq q_\Lambda^-(X)$ if $\Lambda$ is increasing.
Therefore, if $\Lambda$ is decreasing,  $q_\Lambda^- (X+c) \le q_\Lambda^- (X)+c$, implying $q_\Lambda^-$ is cash-subadditive; if $\Lambda$ is increasing,  $q_\Lambda^- (X+c) \ge q_\Lambda^- (X)+c$, implying that $q_\Lambda^-$ is cash-supadditive. The analogous arguments can be employed to show the properties of other $\Lambda$-quantiles.
{ Next, we show the converse conclusion.  We denote all the discontinuous points of $\Lambda$ by $D$. Since $q_\Lambda^-$ is cash-subadditive, 
 we have, for $c\ge 0$,  $q_\Lambda^-(X+c) = q^-_{\Lambda^c}(X)+c \leq q_\Lambda^-(X)+c$, which implies that $q^-_{\Lambda^c}(X)\leq q_\Lambda^-(X) $  for any $c\ge 0$. Suppose that $F_X$ is continuous.  Let $x_0=q_\Lambda^-(X)$ and $x_1=q^-_{\Lambda^c}(X)$.  Then we have $x_1\leq x_0$. By the definition of $q_{\Lambda}^-$ and  the right-continuity of $\Lambda$, we have $F_X(x_1) \geq \Lambda(x_1+c),~ F_X(x)<\Lambda(x+c)$ for all $x<x_1$, and $F_X(x_0)\geq \Lambda(x_0),~ F(x)<\Lambda(x)$ for all $x<x_0$.  If $x_1<x_0$, then we have $\Lambda(x_1+c)\leq F_X(x_1)<\Lambda(x_1)$, which implies $\Lambda(x_1+c)\leq \Lambda(x_1)$. If $x_1=x_0$, then we have $F_X(x_1)\geq \max(\Lambda(x_1+c),\Lambda(x_1))$ and $F_X(x)<\min (\Lambda(x),\Lambda(x+c))$ for all $x<x_1$. If $x_1, x_1+c\notin D$, then it follows from the continuity of $\Lambda$ and $F_X$ that $\max(\Lambda(x_1+c),\Lambda(x_1))\leq F_X(x_1)\leq \min (\Lambda(x_1),\Lambda(x_1+c))$, which implies $\Lambda(x_1)=\Lambda(x_1+c)$. Hence, if $F_X$ is continuous and $x_1, x_1+c\notin D$, then we have $\Lambda(x_1+c)\leq \Lambda(x_1)$ with $x_1=q^-_{\Lambda^c}(X)$. For any $z\in\R$ and $\epsilon>0$, let $F_{z,\epsilon}(x)=\frac{(x-z+\epsilon)_+}{2\epsilon}\wedge 1,~x\in\R$. Then it follows that $q^-_{\Lambda^c}(F_{z,\epsilon})\in (z-\epsilon, z+\epsilon)$. Hence, fixing $c\geq 0$, $S_c:=\{q^-_{\Lambda^c}(X): F_X~\text{is continuous with bounded support}\}$ is a dense subset of  $\R$, which implies that $\Lambda(x+c)\leq \Lambda(x)$ for all $x\in S_c$ if $x, x+c\notin D$. It follows from the right-continuity of $\Lambda$ that $\Lambda(x+c)\leq \Lambda(x)$ for all $x\in \R$, which further implies that $\Lambda(x+c)\leq \Lambda(x)$ for all $x\in \R$ and $c\geq 0$. Therefore, we conclude that cash-subaditivity of $q_{\Lambda}^-$ implies that $\Lambda$ is decreasing.  

Next, we suppose that $q_{\Lambda}^-$ is cash-supadditive. Then for $c\geq 0$, we have $q_\Lambda^-(X+c) = q^-_{\Lambda^c}(X)+c \geq q_\Lambda^-(X)+c$, which implies that $q^-_{\Lambda^c}(X)\geq q_\Lambda^-(X) $  for any $c\ge 0$.  Using the same arguments as above, we can show that  $\Lambda(x+c)\geq \Lambda(x)$ for all $x\in \R$ and $c\geq 0$.
Using the similar arguments for $q_{\Lambda}^-$,  for other three $\Lambda$-quantiles, we can show that cash-subadditivity (cash-supadditivity) implies that $\Lambda$ is decreasing (increasing).}

Finally, we consider (iii). By (i), we have $\tilde{q}_{\Lambda}^{-}=q_{\Lambda}^{-}$. Hence,
it follows from  Theorem 3.1 of \cite{HWWX23} that  $\tilde{q}_{\Lambda}^{-}$ and $q_{\Lambda}^{-}$ are  quasi-star-shaped. Similarly, in light of (i), we have $\tilde{q}_{\Lambda}^{+}(F)=q_{\Lambda}^{+}(F)$. Thus it suffices to show that $q_{\Lambda}^{+}$ is quasi-star-shaped. We first show that $q_{\alpha}^+\vee x$ is quasi-star-shaped for $\alpha\in [0,1]$ and $x\in\R$.  If $\alpha=1$, it is trivial.  Then we suppose $\alpha\in [0,1)$. For $\lambda\in [0,1], t\in\R$ and $X\in\X$, we have  $$q_{\alpha}^{+}(\lambda X+(1-\lambda)t)  \vee  x=\left(\lambda q_{\alpha}^{+}(X)+(1-\lambda)t\right)  \vee  x\leq \max(q_{\alpha}^{+}(X)\vee x, t\vee x).$$
Hence, $q_{\alpha}^+\vee x$ is quasi-star-shaped. In light of Lemma 3.1 of \cite{HWWX23} and the expression \eqref{eq:alt_pre2}, we have that $q_{\Lambda}^{+}$ is quasi-star-shaped.
\end{proof}

 {Cash-supadditivity} is the dual property of  {cash-subadditivity}.  As we can see from (ii) of Proposition \ref{prop:1}, whether a $\Lambda$-quantile satisfies cash-subadditivity or cash-supadditivity depends on the direction of monotonicity of the $\Lambda$ function. The intermediate property is {cash-additivity}.  A $\Lambda$-quantile with monotone $\Lambda$ functions satisfies {cash-additivity} if and only if it is a quantile; see Proposition 1 of \cite{L24}.


 Note that $q_{\Lambda}^{-}$   or  $q_{\Lambda}^{+}$ with  increasing $\Lambda$ is in general not quasi-star-shaped even when $\Lambda$ is a continuous function. 
One can see the following counterexample.
\begin{example} For $0<\alpha<1/2<\beta< 1$, consider the following continuous and increasing function $$\Lambda(x)=\alpha\id_{\{x\le1/2\}}+((\beta-\alpha)x-\beta/2+3\alpha/2)\id_{\{1/2< x< 3/2\}}+\beta\id_{\{x\ge 3/2\}}, ~~x\in\R.$$  For $t=7/4$, $\lambda=1/8$, and a random variable $X$ satisfying
$\mathbb P(X=0)=\mathbb P(X=2)=1/2$, we have $$\mathbb P(\lambda X+(1-\lambda)t=57/32)=\mathbb P(\lambda X+(1-\lambda)t=49/32)=1/2.$$ Direct computation gives $$q_{\Lambda}^{-}(X)=0,~q_{\Lambda}^{-}(\lambda X+(1-\lambda)t)=57/32,~ q_{\Lambda}^{-}(t)=7/4,$$ which implies that 
$q_{\Lambda}^{-}(\lambda X+(1-\lambda)t)>\max\{q_{\Lambda}^{-}(X),q_{\Lambda}^{-}(t)\}.$
Hence, $q_{\Lambda}^{-}$ is not quasi-star-shaped.  Note that all the above arguments still  hold if we  replace $q_{\Lambda}^{-}$ with $q_{\Lambda}^{+}$. Hence, $q_{\Lambda}^{+}$ is not quasi-star-shaped.
\end{example}

\section{Robust $\Lambda$-quantiles}\label{sec:3}
In this section, we investigate the worst-case and best-case values of $\Lambda$-quantiles over general uncertainty sets, which
is our primary objective in this paper.  For a set of distributions $\M$, we denote $F_{\M}^-(x)=\inf_{F\in\M}F(x)$, $x\in\R$ and $F_{\M}^+(x)=\sup_{F\in\M}F(x)$, $x\in\R$.  Note that both $F_{\M}^-$ and $F_{\M}^+$ are increasing functions.  Additionally, $F_{\M}^-$ is right-continuous, whereas $F_{\M}^+$ may be not. Both of them are called \emph{extremal distributions}, which may not be distribution functions.

{ Before stating our main theorem,  we first define \emph{attainability}, which is crucial for our results. 
\begin{definition}
We say that $F_{\M}^-$ (resp. $F_{\M}^+$) is \emph{attainable}  if for any $z\in\R$, there exists $F\in\M$ such that $F(z)=F_{\M}^-(z)$ (resp. $F(z)=F_{\M}^+(z)$).\end{definition}} 
The following theorem is  the main finding of this paper. 

 \begin{theorem}\label{Th:main} Let $\M$ be a set of distributions and $\Lambda:\R\to [0,1]$. Then we have
\begin{enumerate}[(i)]
 \item $\sup_{F\in\M}\tilde{q}_{\Lambda}^-(F)=\tilde{q}_{\Lambda}^-(F_{\M}^-)$ and $\inf_{F\in\M}q_{\Lambda}^+(F)=q_{\Lambda}^+(F_{\M}^+)$;
\item  $\sup_{F\in\M}\tilde{q}_{\Lambda}^+(F)=\tilde{q}_{\Lambda}^+(F_{\M}^-)$ if $F_{\M}^-$ is attainable, and $\inf_{F\in\M}q_{\Lambda}^-(F)=q_{\Lambda}^-(F_{\M}^+)$ if $F_{\M}^+$ is attainable;
 \item If $\Lambda$ is decreasing, then (i) remains true by replacing $\tilde{q}_\Lambda^-$ by $q_\Lambda^-$ and $q_\Lambda^+$  by $\tilde{q}_\Lambda^+$ and (ii) remains true by replacing $\tilde{q}_\Lambda^+$ by $q_\Lambda^+$  and  $q_\Lambda^-$ by $\tilde{q}_\Lambda^-$.
 \end{enumerate}
\end{theorem}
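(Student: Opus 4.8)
The plan is to establish the identities in (i) directly from the definitions of the $\Lambda$-quantiles as suprema/infima of real sets, and to handle (ii) via an attainability argument; part (iii) will then follow by invoking Proposition~\ref{prop:1}(i).

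First I would prove the two identities in (i). For $\sup_{F\in\M}\tilde q_\Lambda^-(F)=\tilde q_\Lambda^-(F_\M^-)$, recall $\tilde q_\Lambda^-(F)=\sup\{x: F(x)<\Lambda(x)\}$. The key observation is that for a fixed $x$, the condition $F_\M^-(x)<\Lambda(x)$ is equivalent to the existence of some $F\in\M$ with $F(x)<\Lambda(x)$ (since $F_\M^-(x)=\inf_{F\in\M}F(x)$, and if the infimum is strictly below $\Lambda(x)$ some $F$ achieves a value below $\Lambda(x)$; conversely any such $F$ forces the infimum below $\Lambda(x)$). Hence $\{x: F_\M^-(x)<\Lambda(x)\}=\bigcup_{F\in\M}\{x: F(x)<\Lambda(x)\}$, and taking sup of both sides gives the claim. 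The dual identity $\inf_{F\in\M}q_\Lambda^+(F)=q_\Lambda^+(F_\M^+)$ is symmetric: $q_\Lambda^+(F)=\inf\{x: F(x)>\Lambda(x)\}$, and $F_\M^+(x)>\Lambda(x)$ iff some $F\in\M$ satisfies $F(x)>\Lambda(x)$ (here using $F_\M^+(x)=\sup_{F\in\M}F(x)$), so the sets $\{x: F(x)>\Lambda(x)\}$ union up to $\{x: F_\M^+(x)>\Lambda(x)\}$ and taking infima yields the result. Note this part needs \emph{no} attainability — the equivalence ``$\inf<c$ iff achieved-below-$c$'' and ``$\sup>c$ iff achieved-above-$c$'' are automatic for strict inequalities.

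For (ii) the subtlety is that $\tilde q_\Lambda^+(F)=\sup\{x: F(x)\le\Lambda(x)\}$ involves a \emph{weak} inequality, so the clean set-union argument above breaks: $F_\M^-(x)\le\Lambda(x)$ does \emph{not} in general imply the existence of $F\in\M$ with $F(x)\le\Lambda(x)$ (the infimum might only be approached, never attained). This is exactly where the hypothesis that $F_\M^-$ is attainable enters: attainability at the point $x$ says precisely that the infimum $F_\M^-(x)$ is realized by some $F\in\M$, so $F_\M^-(x)\le\Lambda(x)$ \emph{does} give an $F\in\M$ with $F(x)\le\Lambda(x)$. With this, $\{x: F_\M^-(x)\le\Lambda(x)\}=\bigcup_{F\in\M}\{x: F(x)\le\Lambda(x)\}$ and taking suprema gives $\tilde q_\Lambda^+(F_\M^-)\le\sup_{F\in\M}\tilde q_\Lambda^+(F)$; the reverse inequality is trivial since $F\ge F_\M^-$ pointwise forces $\{x: F(x)\le\Lambda(x)\}\subseteq\{x:F_\M^-(x)\le\Lambda(x)\}$, hence $\tilde q_\Lambda^+(F)\le\tilde q_\Lambda^+(F_\M^-)$ for every $F$. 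The dual statement $\inf_{F\in\M}q_\Lambda^-(F)=q_\Lambda^-(F_\M^+)$ when $F_\M^+$ is attainable is handled the same way, using $q_\Lambda^-(F)=\inf\{x: F(x)\ge\Lambda(x)\}$ and the attainability of the supremum $F_\M^+(x)$.

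Finally, part (iii) is immediate from Proposition~\ref{prop:1}(i): when $\Lambda$ is decreasing we have $q_\Lambda^-=\tilde q_\Lambda^-$ and $q_\Lambda^+=\tilde q_\Lambda^+$ as functionals on $\M$ — but one must be slightly careful, as this coincidence is asserted for genuine distribution functions, whereas $F_\M^-$ and $F_\M^+$ need only be increasing functions valued in $[0,1]$, not cdfs. So I would check that the proof of Proposition~6 of \cite{BP22} (equivalently the relevant part of Proposition~\ref{prop:1}(i)) only uses monotonicity of the argument, not right-continuity or the limits at $\pm\infty$; inspecting the argument, $q_\Lambda^-\le\tilde q_\Lambda^-$ and the reverse for decreasing $\Lambda$ rest purely on the monotonicity of $F$ versus the monotonicity of $\Lambda$, so the identities extend to $F_\M^{\pm}$. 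Granting this, replacing $\tilde q_\Lambda^-$ by $q_\Lambda^-$ and $q_\Lambda^+$ by $\tilde q_\Lambda^+$ in (i), and the analogous swap in (ii), is just rewriting the same identities. The main obstacle is therefore the weak-inequality gap in (ii) and making the attainability hypothesis do exactly the work needed there; everything else is bookkeeping with suprema and infima of subsets of $\R$.
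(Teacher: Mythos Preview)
Your proof is correct and, in fact, more streamlined than the paper's own argument. For (i) and (ii) the paper proceeds by a case analysis on whether $z:=\tilde q_\Lambda^-(F_\M^-)$ (respectively the other extreme quantile) equals $-\infty$, $+\infty$, or lies in $\R$, and in the finite case further splits according to whether $F_\M^-(z)<\Lambda(z)$ or not, constructing approximating sequences $y_n\uparrow z$ in the boundary subcase to produce distributions $G\in\M$ witnessing the inequality. Your set-identity observation $\{x:F_\M^-(x)<\Lambda(x)\}=\bigcup_{F\in\M}\{x:F(x)<\Lambda(x)\}$ bypasses all of this, since the supremum of a union equals the supremum of the suprema; the same device handles (ii) once attainability restores the analogous set equality for the weak inequality $\le$. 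The two routes are logically equivalent---the paper's sequence argument is effectively reproving that $\sup$ commutes with unions---but yours isolates the combinatorial core and avoids the case distinctions entirely. For (iii) both you and the paper invoke Proposition~\ref{prop:1}(i); your additional remark that the coincidence $q_\Lambda^\pm=\tilde q_\Lambda^\pm$ for decreasing $\Lambda$ must be verified for merely increasing $[0,1]$-valued functions (since $F_\M^\pm$ need not be cdfs) is a point the paper leaves implicit, and your observation that the underlying argument uses only monotonicity is correct and worth recording.
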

\begin{proof}
Case (i).  We start with the first equality. Note that $F_\M^-(x)\leq F(x),~x\in\R$ for any $F\in\M$. Hence in light of the monotonicity of $\Lambda$-quantile, we have
$\tilde{q}_{\Lambda}^-(F)\leq \tilde{q}_{\Lambda}^-(F_\M^-)$ for all $F\in\M$. This implies that $\sup_{F\in\M}\tilde{q}_{\Lambda}^-(F)\leq \tilde{q}_{\Lambda}^-(F_{\M}^-)$.

We next show the inverse inequality. Let $z=\tilde{q}_{\Lambda}^-(F_{\M}^-)$. If $z=-\infty$, $\sup_{F\in\M}\tilde{q}_{\Lambda}^-(F)=\tilde{q}_{\Lambda}^-(F_{\M}^-)=-\infty$ holds obviously.  If $z=\infty$, then there exists a sequence of $x_n$ such that $x_n\uparrow \infty$ and $F_{\M}^-(x_n)<\Lambda(x_n)$. For any fixed $x_n$, we can find $G\in\M$ such that
$F_{\M}^-(x_n)\leq G(x_n)<\Lambda(x_n)$. This implies $\sup_{F\in\M}\tilde{q}_{\Lambda}^-(F)\geq \tilde{q}_{\Lambda}^-(G)\geq x_n\to\infty$ as $n\to\infty$. Hence, it follows that
$\sup_{F\in\M}\tilde{q}_{\Lambda}^-(F)=\tilde{q}_{\Lambda}^-(F_{\M}^-)$.

 Next, we suppose that  $z\in\R$. If $F_\M^-(z)<\Lambda(z)$, there exists $G\in\M$ such that $F_\M^-(z)\leq G(z)<\Lambda(z)$. Hence, we have $\sup_{F\in\M}\tilde{q}_{\Lambda}^-(F)\geq \tilde{q}_{\Lambda}^-(G)\geq z$.
 If $F_\M^-(z)\geq \Lambda(z)$, there exists a sequence $y_n\uparrow z$ such that $F_\M^-(y_n)<\Lambda(y_n)$. For any fixed $y_n$, there exists a distribution $G\in\M$ such that $F_\M^-(y_n)\leq G(y_n)<\Lambda(y_n)$. This means $\sup_{F\in\M}\tilde{q}_{\Lambda}^-(F)\geq \tilde{q}_\Lambda^-(G)\geq y_n$. Letting $n\to\infty$, we have $\sup_{F\in\M}\tilde{q}_{\Lambda}^-(F)\geq z$.  We { have established} the inverse inequality.

 We now focus on the second equality. Observe that $F_\M^+(x)\geq F(x),~x\in\R$ for any $F\in\M$. It follows from the monotonicity of $\Lambda$-quantile that $q_{\Lambda}^+(F_{\M}^+)\leq q_{\Lambda}^+(F)$ for all $F\in\M$. Consequently, $q_{\Lambda}^+(F_{\M}^+)\leq \inf_{F\in\M}q_{\Lambda}^+(F)$.

 Let $z=q_{\Lambda}(F_{\M}^+)$. If $z=\infty$, then $q_{\Lambda}^+(F_{\M}^+)=\inf_{F\in\M}q_{\Lambda}^+(F)$ holds trivially. If $z=-\infty$, there exists a sequence of $x_n\downarrow -\infty$ such that  $F_{\M}^+(x_n)> \Lambda(x_n)$.  For any fixed $x_n$, there exists $G\in\M$ such that $F_{\M}^+(x_n)\geq G(x_n)>\Lambda(x_n)$. This implies that $\inf_{F\in\M}q_{\Lambda}^+(F)\leq q_{\Lambda}^+(G)\leq x_n\to -\infty$ as $n\to\infty$. Hence, we obtain $\inf_{F\in\M}q_{\Lambda}^+(F)=q_{\Lambda}(F_{\M}^+)$.

  We suppose that  $z\in\R$. If $F_{\M}^+(z)>\Lambda(z)$, there exists $G\in\M$ such that $F_{\M}^+(z)\geq G(z)>\Lambda(z)$, implying $\inf_{F\in\M} q_{\Lambda}^+(F)\leq q_{\Lambda}^+(G)\leq z$. If $F_{\M}^+(z)\leq \Lambda(z)$, there exists a sequence of $y_n\downarrow z$ such that $F_{\M}^+(y_n)>\Lambda(y_n)$. For fixed $y_n$, there exists $G\in\M$ such that $F_{\M}^+(y_n)\geq G(y_n)>\Lambda(y_n)$, which implies that 
  $\inf_{F\in\M}q_{\Lambda}^+(F)\leq q_{\Lambda}^+(G)\leq y_n\to z$ as $n\to\infty$. We { have established} the inverse inequality.

  Case (ii).  For the first equality, note that $\sup_{F\in\M}\tilde{q}_{\Lambda}^+(F)\leq \tilde{q}_{\Lambda}^+(F_{\M}^-)$ follows from the same reasoning as in the proof of case (i). We only need to show the inverse inequality.

  Denote $\tilde{q}_{\Lambda}^+(F_{\M}^-)$ by $z$. If $z=-\infty$, then $\sup_{F\in\M}\tilde{q}_{\Lambda}^+(F)=\tilde{q}_{\Lambda}^-(F_{\M}^-)=-\infty$ holds obviously.  If $z=\infty$, then there exists a sequence of $x_n$ such that $x_n\uparrow \infty$ and $F_{\M}^-(x_n)\leq \Lambda(x_n)$. For any fixed $x_n$, using the attainability of $F_{\M}^-$, there exists $G\in\M$ such that
$G(x_n)=F_{\M}^-(x_n)\leq \Lambda(x_n)$. This implies $\sup_{F\in\M}\tilde{q}_{\Lambda}^+(F)\geq \tilde{q}_{\Lambda}^+(G)\geq x_n\to\infty$ as $n\to\infty$. Hence, we have
$\sup_{F\in\M}\tilde{q}_{\Lambda}^+(F)=\tilde{q}_{\Lambda}^+(F_{\M}^-)$.

 Next, we suppose that  $z\in\R$. If $F_\M^-(z)\leq \Lambda(z)$, there exists $G\in\M$ such that $G(z)=F_\M^-(z)\leq \Lambda(z)$. Hence, we have  $\sup_{F\in\M}\tilde{q}_{\Lambda}^+(F)\geq \tilde{q}_{\Lambda}^+(G)\geq z$.
 If $F_\M^-(z)>\Lambda(z)$, there exists a sequence $y_n\uparrow z$ such that $F_\M^-(y_n)\leq \Lambda(y_n)$. For any fixed $y_n$, there exists a distribution $G\in\M$ such that $G(y_n)=F_\M^-(y_n)\leq \Lambda(y_n)$. This means that $\sup_{F\in\M}\tilde{q}_{\Lambda}^+(F)\geq \tilde{q}_\Lambda^+(G)\geq y_n$. Letting $n\to\infty$, we have $\sup_{F\in\M}\tilde{q}_{\Lambda}^+(F)\geq z$.  We { have established} the inverse inequality.

  The proof of the second equality of (ii) follows a similar argument as that of  the second equality of Case (i) by applying the attainability of $F_\M^+$. We omit the details.

Case (iii).  
If $\Lambda$ is decreasing, by (i) of Proposition \ref{prop:1}, we have $q_{\Lambda}^{-}(F)=\tilde{q}_{\Lambda}^{-}(F)$ and $q_{\Lambda}^{+}(F)= \tilde{q}_{\Lambda}^{+}(F)$.  The conclusion in (iii) is directly implied by the conclusions in (i)-(ii).
\end{proof}

 { Theorem \ref{Th:main} shows that eight robust $\Lambda$-quantiles can be calculated using the extremal distributions. It is worth mentioning that they may require different conditions to guarantee that the calculation is correct: (i) of Theorem \ref{Th:main} holds for all $\Lambda$ functions and uncertainty sets $\mathcal M$; (ii) of Theorem \ref{Th:main} holds if the extremal distributions satisfy attainability; (iii) of Theorem \ref{Th:main} holds only for decreasing $\Lambda$ and two of them also require the attainability of the extremal distributions. Roughly speaking, only four of robust $\Lambda$-quantiles can be calculated using the extremal distributions with general $\Lambda$ functions; the rest holds true only for  decreasing $\Lambda$ functions. If the required conditions are not satisfied, then the $\Lambda$-quantiles of the extremal distributions can only be the bounds of the robust $\Lambda$-quantiles. Some counterexamples  will be discussed later.}

 Finding the extremal distributions is a problem in probability theory with a long history; see \cite{M81} and \cite{R82} for early results, and see \cite{L18}, \cite{BLLW20} and \cite{PWW20} for the recent results. Roughly speaking, the main message of our results is that if we know the {extremal distributions} and the assumptions are satisfied, applying Theorem \ref{Th:main}, we could obtain the robust $\Lambda$-quantiles immediately. This enables us to obtain the robust $\Lambda$-quantiles with the aid of many results on robust-quantiles in the literature.


  Applying Theorem \ref{Th:main}, we obtain the following conclusion for robust quantiles, which indicates the relation between robust-quantiles and {extremal distributions}.

\begin{corollary}\label{coro:1} For $\alpha\in (0,1)$, we have
\begin{enumerate}[(i)]
 \item $\sup_{F\in\M}q_{\alpha}^-(F)=q_{\alpha}^-(F_{\M}^-)$ and $\inf_{F\in\M}q_{\alpha}^+(F)=q_{\alpha}^+(F_{\M}^+)$.
\item  $\sup_{F\in\M}q_{\alpha}^+(F)=q_{\alpha}^+(F_{\M}^-)$ if $F_{\M}^-$ is attainable, and $\inf_{F\in\M}q_{\alpha}^-(F)=q_{\alpha}^-(F_{\M}^+)$ if $F_{\M}^+$ is attainable.
  \end{enumerate}
\end{corollary}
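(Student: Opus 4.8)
The plan is to derive Corollary~\ref{coro:1} as a direct specialization of Theorem~\ref{Th:main}. The key observation is that constant functions $\Lambda$ are simultaneously increasing and decreasing, so all four $\Lambda$-quantiles collapse. Concretely, fix $\alpha\in(0,1)$ and take $\Lambda\equiv\alpha$. For a distribution $F\in\M$ (or, more generally, for an increasing function $f$ such as $F_\M^-$ or $F_\M^+$), unwinding Definition~\ref{def:1} gives $q_\Lambda^-(f)=\inf\{x: f(x)\ge\alpha\}=q_\alpha^-(f)$ and $q_\Lambda^+(f)=\inf\{x: f(x)>\alpha\}=q_\alpha^+(f)$. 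For the sup-type quantiles, $\tilde q_\Lambda^-(f)=\sup\{x: f(x)<\alpha\}$ and $\tilde q_\Lambda^+(f)=\sup\{x: f(x)\le\alpha\}$; since $f$ is increasing, $\sup\{x: f(x)<\alpha\}=\inf\{x: f(x)\ge\alpha\}$ (the two sets are a down-set and its complementary up-set, sharing a common boundary point), so $\tilde q_\Lambda^-(f)=q_\alpha^-(f)$, and likewise $\tilde q_\Lambda^+(f)=q_\alpha^+(f)$. Thus for constant $\Lambda\equiv\alpha$ all four $\Lambda$-quantiles of any increasing function reduce to $q_\alpha^-$ or $q_\alpha^+$ in the obvious pairing.

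With this identification in hand, I would simply read off the statements. For (i): Theorem~\ref{Th:main}(i) gives $\sup_{F\in\M}\tilde q_\Lambda^-(F)=\tilde q_\Lambda^-(F_\M^-)$, which by the above equals $\sup_{F\in\M}q_\alpha^-(F)=q_\alpha^-(F_\M^-)$; and $\inf_{F\in\M}q_\Lambda^+(F)=q_\Lambda^+(F_\M^+)$ becomes $\inf_{F\in\M}q_\alpha^+(F)=q_\alpha^+(F_\M^+)$. For (ii): Theorem~\ref{Th:main}(ii) gives $\sup_{F\in\M}\tilde q_\Lambda^+(F)=\tilde q_\Lambda^+(F_\M^-)$ under attainability of $F_\M^-$, i.e.\ $\sup_{F\in\M}q_\alpha^+(F)=q_\alpha^+(F_\M^-)$; and $\inf_{F\in\M}q_\Lambda^-(F)=q_\Lambda^-(F_\M^+)$ under attainability of $F_\M^+$ becomes $\inf_{F\in\M}q_\alpha^-(F)=q_\alpha^-(F_\M^+)$. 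One small caveat worth a sentence in the writeup: the reductions $\tilde q_\Lambda^\pm=q_\alpha^\pm$ are applied to the possibly-non-distribution functions $F_\M^-$ and $F_\M^+$, but these are increasing functions bounded in $[0,1]$, which is exactly the regime where the elementary set-identity above holds, so nothing is lost.

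The main (indeed only) obstacle is cosmetic rather than mathematical: making sure the boundary behavior of the inf/sup definitions genuinely matches for increasing but possibly non-right-continuous functions like $F_\M^+$, and confirming that the $\alpha\in(0,1)$ restriction (as opposed to $\alpha\in[0,1]$) is what is needed for the four quantiles to coincide cleanly without degenerate $\pm\infty$ edge effects. Since $0<\alpha<1$, the sets $\{x: f(x)<\alpha\}$ and $\{x: f(x)\ge\alpha\}$ are complementary, and the same for the $\le/>$ pair, so the identifications hold verbatim; there is no genuine difficulty here. I would therefore present the proof as: observe that $\Lambda\equiv\alpha$ is both increasing and decreasing; invoke Proposition~\ref{prop:1}(i) (or a one-line direct check) to get $q_\Lambda^-=\tilde q_\Lambda^-=q_\alpha^-$ and $q_\Lambda^+=\tilde q_\Lambda^+=q_\alpha^+$ on increasing functions; and substitute into Theorem~\ref{Th:main}(i)--(iii).
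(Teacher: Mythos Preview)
Your proposal is correct and follows exactly the approach the paper intends: the paper states the corollary immediately after Theorem~\ref{Th:main} with the remark ``Applying Theorem~\ref{Th:main}, we obtain the following conclusion for robust quantiles,'' i.e., it is obtained by specializing to the constant function $\Lambda\equiv\alpha$. Your extra care in verifying that the identifications $\tilde q_\Lambda^\pm=q_\alpha^\pm$ hold for arbitrary increasing $[0,1]$-valued functions (not just distributions), so that they apply to $F_\M^-$ and $F_\M^+$, is a worthwhile clarification that the paper leaves implicit.
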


  The conclusion in (iii) of Theorem  \ref{Th:main} does not hold for other $\Lambda$ functions even for increasing $\Lambda$ functions.  We can see this  from the following example.
\begin{example} Let $$\Lambda(x)=\left\{\begin{array}{cc}
1/4,& x<0,\\
(1+x)/4,& 0\leq x<1,\\
1/2,& x\geq 1,
\end{array}\right.$$ and $\M=\{F_1, F_2\}$ with  $$F_1(x)=\left\{\begin{array}{cc}
0,& x<0,\\
1/3, & 0\leq x<1,\\
1,& x\geq 1,
\end{array}\right.~\text{and}~ F_2(x)=\left\{\begin{array}{cc}
0,& x<1/2,\\
1,& x\geq 1/2.
\end{array}\right.$$
Direct computation gives $$F_{\M}^-(x)=\left\{\begin{array}{cc}
0,& x<1/2,\\
1/3, & 1/2\leq x<1,\\
1,& x\geq 1,
\end{array}\right.$$
$q_{\Lambda}^-(F_1)=0$, $q_{\Lambda}^-(F_2)=1/2$ and $q_{\Lambda}^-(F_{\M}^-)=1$. Hence
$\sup_{F\in\M}q_{\Lambda}^-(F)=1/2<1=q_{\Lambda}^-(F_{\M}^-)$. We can similarly construct counterexamples for other three $\Lambda$-quantiles in (iii) of Theorem \ref{Th:main} for increasing $\Lambda$ functions.

\end{example}
{ The conclusion in  (ii) of Theorem \ref{Th:main} requires the attainability of $F_{\M}^-$ or $F_{\M}^+$}.  If   we have  finite candidates in $\M$, i.e., $\M=\{F_1,\dots, F_n\}$. Then $F_\M^-(x)=\bigwedge_{i=1}^n F_i(x)$ and $F_\M^+(x)=\bigvee_{i=1}^n F_i(x)$, and both $F_\M^-$ and $F_\M^+$ are attainable, where $\bigwedge_{i=1}^n x_i=\min(x_1,\dots, x_n)$ and  $\bigvee_{i=1}^n x_i=\max(x_1,\dots, x_n)$. Applying Theorem \ref{Th:main}, we arrive at the following results.
\begin{corollary}
  Let $\M=\{F_1,\dots, F_n\}$ and $\Lambda:\R\to [0,1]$. The following statements hold.
\begin{enumerate}[(i)]
 \item $\sup_{F\in\M}\tilde{q}_{\Lambda}^-(F)=\tilde{q}_{\Lambda}^-(\bigwedge_{i=1}^n F_i)$, $\inf_{F\in\M}q_{\Lambda}^+(F)=q_{\Lambda}^+(\bigvee_{i=1}^n F_i)$,  $\sup_{F\in\M}\tilde{q}_{\Lambda}^+(F)=\tilde{q}_{\Lambda}^+(\bigwedge_{i=1}^n F_i)$, and $\inf_{F\in\M}q_{\Lambda}^-(F)=q_{\Lambda}^-(\bigvee_{i=1}^n F_i)$.
 \item If $\Lambda$ is decreasing, then we have $\sup_{F\in\M}q_{\Lambda}^-(F)=q_{\Lambda}^-(\bigwedge_{i=1}^n F_i)$, $\inf_{F\in\M}\tilde{q}_{\Lambda}^+(F)=\tilde{q}_{\Lambda}^+(\bigvee_{i=1}^n F_i)$,
      $\sup_{F\in\M}q_{\Lambda}^+(F)=q_{\Lambda}^+(\bigwedge_{i=1}^n F_i)$,  and $\inf_{F\in\M}\tilde{q}_{\Lambda}^-(F)=\tilde{q}_{\Lambda}^-(\bigvee_{i=1}^n F_i)$.
 \end{enumerate}
\end{corollary}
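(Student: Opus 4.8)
The plan is to derive Corollary \ref{prop:2} as a direct specialization of Theorem \ref{Th:main}, so almost no new work is needed. First I would record the elementary fact that when $\M=\{F_1,\dots,F_n\}$ is finite, the pointwise infimum and supremum are given by $F_\M^-(x)=\bigwedge_{i=1}^n F_i(x)$ and $F_\M^+(x)=\bigvee_{i=1}^n F_i(x)$: this is immediate since an infimum over a finite set is attained as a minimum, and likewise for the supremum. The same observation simultaneously shows that $F_\M^-$ and $F_\M^+$ are \emph{attainable} in the sense of Section \ref{sec:3}: for any fixed $z\in\R$, choosing $i^\ast\in\argmin_i F_i(z)$ gives $F_{i^\ast}\in\M$ with $F_{i^\ast}(z)=F_\M^-(z)$, and symmetrically for $F_\M^+$ with $\argmax$. (Here the index $i^\ast$ may depend on $z$, which is exactly what the definition of attainability permits.)

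With these two facts in hand, the remaining steps are purely a matter of quoting Theorem \ref{Th:main}. For part (i) of the corollary, the first two identities, $\sup_{F\in\M}\tilde q_\Lambda^-(F)=\tilde q_\Lambda^-(\bigwedge_i F_i)$ and $\inf_{F\in\M}q_\Lambda^+(F)=q_\Lambda^+(\bigvee_i F_i)$, are Theorem \ref{Th:main}(i) with $F_\M^-$ and $F_\M^+$ substituted by their explicit finite forms; the last two identities, $\sup_{F\in\M}\tilde q_\Lambda^+(F)=\tilde q_\Lambda^+(\bigwedge_i F_i)$ and $\inf_{F\in\M}q_\Lambda^-(F)=q_\Lambda^-(\bigvee_i F_i)$, are Theorem \ref{Th:main}(ii), whose attainability hypotheses are exactly the facts established in the first paragraph. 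For part (ii), when $\Lambda$ is decreasing one invokes Theorem \ref{Th:main}(iii): this tells us (i) holds with $\tilde q_\Lambda^-$ replaced by $q_\Lambda^-$ and $q_\Lambda^+$ replaced by $\tilde q_\Lambda^+$, and (ii) holds with $\tilde q_\Lambda^+$ replaced by $q_\Lambda^+$ and $q_\Lambda^-$ replaced by $\tilde q_\Lambda^-$, which is precisely the list of four identities claimed in Corollary \ref{prop:2}(ii). Alternatively, one could note that Proposition \ref{prop:1}(i) gives $q_\Lambda^-=\tilde q_\Lambda^-$ and $q_\Lambda^+=\tilde q_\Lambda^+$ for decreasing $\Lambda$, so part (ii) follows from part (i) by textual substitution; I would mention this as the one-line alternative.

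There is essentially no obstacle here: the only thing one must be careful about is that attainability is required \emph{pointwise in $z$}, not uniformly, and the verification above respects that. I would therefore write the proof of Corollary \ref{prop:2} in three or four sentences: state the finite-set formulas for $F_\M^\pm$, observe their attainability by the $\argmin/\argmax$ argument, and then cite the relevant parts of Theorem \ref{Th:main}. No calculation, no case analysis, and no appeal to results outside the excerpt is needed.
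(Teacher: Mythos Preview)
Your proposal is correct and follows essentially the same approach as the paper: note that for finite $\M$ one has $F_\M^-=\bigwedge_i F_i$ and $F_\M^+=\bigvee_i F_i$, both attainable, and then apply Theorem \ref{Th:main}. The paper's argument is exactly this, stated in a single sentence before the corollary.
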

 The property of attainability may not hold if  the candidates in $\mathcal M$ are infinite. We have the following counterexample.
\begin{example} Let $$\Lambda(x)=\left\{\begin{array}{cc}
3/4,& x<0,\\
1/2,& 0\leq x<1,\\
1/4,& x\geq 1,
\end{array}\right.$$ and $\M=\{F_n, n\geq 1\}$ with  $$F_n(x)=\left\{\begin{array}{cc}
0,& x<0,\\
1/2+1/(2n), & 0\leq x<1,\\
1,& x\geq 1.
\end{array}\right.$$
Direct computation gives $$F_{\M}^-(x)=\left\{\begin{array}{cc}
0,& x<0,\\
1/2, & 0\leq x<1,\\
1,& x\geq 1,
\end{array}\right.$$
$\tilde{q}_{\Lambda}^+(F_n)=0,~n\geq 1$ and
$\tilde{q}_{\Lambda}^+(F_{\M}^-)=1$.
 Hence, $\sup_{F\in\M}\tilde{q}_{\Lambda}^+(F)<\tilde{q}_{\Lambda}^+(F_{\M}^-)$. We can similarly construct examples to show that $\inf_{F\in\M}q_{\Lambda}^-(F) >q_{\Lambda}^-(F_{\M}^+)$ if $F_{\M}^+$ is not attainable.
\end{example}
The above example shows that if the attainability fails to hold, then (ii) of Theorem \ref{Th:main} may not be valid. However, we still have the following conclusions: $$\sup_{F\in\M}\tilde{q}_{\Lambda}^+(F)\leq \tilde{q}_{\Lambda}^+(F_{\M}^-)~\text{and}~ \inf_{F\in\M}q_{\Lambda}^-(F)\geq q_{\Lambda}^-(F_{\M}^+).$$  Hence, both $\tilde{q}_{\Lambda}^+(F_{\M}^-)$ and $q_{\Lambda}^-(F_{\M}^+)$ are still useful because they offer the upper and lower bounds of the corresponding robust $\Lambda$-quantiles.

Compared  with   the  attainability of $F_{\M}^+$, the attainability of $F_{\M}^-$  may not be easy to check as it is difficult to employ the limit argument; see the examples of the three uncertainty sets in Section \ref{sec:4}.   Next, we provide an alternative assumption that is much easier to check if $\Lambda$ is decreasing. Let $\widehat{F}_{\M}^-(x)=\inf_{F\in\M} F(x-)$ for all $x\in\R$. We say $\widehat{F}_{\M}^-$ is attainable if for any $x\in\R$, there exists $F\in\M$ such that $F(x-)=\widehat{F}_{\M}^-(x)$. { The attainability of $\widehat{F}_{\M}^-$ for three specific uncertainty sets will be discussed in Propositions \ref{lemma:1}, \ref{lemma:2} and \ref{prop:aggregationF} in Section \ref{sec:4}.}

With the attainability of $\widehat{F}_{\M}^-$ and the monotonicity of $\Lambda$, we obtain the first equality in (ii) of Theorem \ref{Th:main}.
\begin{proposition}\label{prop:attainable} Let $\M$ be a set of distributions and $\Lambda:\R\to [0,1]$. If $\Lambda$ is decreasing and $\widehat{F}_{\M}^-$ is continuous and attainable, we have
$\sup_{F\in\M}\tilde{q}_{\Lambda}^+(F)=\tilde{q}_{\Lambda}^+(F_{\M}^-)$.
\end{proposition}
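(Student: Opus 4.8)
The plan is to use the already-established inequality $\sup_{F\in\M}\tilde q_\Lambda^+(F)\le \tilde q_\Lambda^+(F_\M^-)$ (which follows from monotonicity of the $\Lambda$-quantile and $F\ge F_\M^-$, exactly as in Case (i) of Theorem~\ref{Th:main}), and to produce the reverse inequality by exploiting the continuity and attainability of $\widehat F_\M^-$ rather than of $F_\M^-$ itself. The key observation is that when $\Lambda$ is decreasing, Proposition~\ref{prop:1}(i) gives $\tilde q_\Lambda^+ = q_\Lambda^+$, and $q_\Lambda^+(F)=\inf\{x: F(x)>\Lambda(x)\}$. Here the relevant sublevel/superlevel behaviour of $F$ near a point can be read off from the \emph{left limit} $F(x-)$: since $F$ is right-continuous and increasing, for any $x$ and any $y<x$ we have $F(y)\le F(x-)$, and $F(x-)=\lim_{y\uparrow x}F(y)$. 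So $\widehat F_\M^-(x)=\inf_{F\in\M}F(x-)$ is the natural object controlling $\sup_F q_\Lambda^+(F)$ from the left.

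First I would set $z=\tilde q_\Lambda^+(F_\M^-)=q_\Lambda^+(F_\M^-)$ and dispose of the cases $z=-\infty$ (trivial, since the sup over $\M$ is then forced to $-\infty$ too by the upper bound) and $z=+\infty$ (if $F_\M^-(x)\le\Lambda(x)$ for all $x$, then for each $x$ attainability of $\widehat F_\M^-$ — combined with $\widehat F_\M^-(x)\le F_\M^-(x)\le\Lambda(x)$ and a short argument relating $F(x-)$ to $\Lambda$ on a slightly smaller point using continuity of $\widehat F_\M^-$ — produces $G\in\M$ with $\tilde q_\Lambda^+(G)$ arbitrarily large; note $\widehat{F}_{\M}^-$ being continuous lets us avoid the jump pathology that broke Example~5). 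The main case is $z\in\R$. Here I want to show: for every $\epsilon>0$ there is $G\in\M$ with $\tilde q_\Lambda^+(G)\ge z-\epsilon$. Pick a point $y$ with $z-\epsilon<y<z$. Because $q_\Lambda^+(F_\M^-)=z>y$, we have $F_\M^-(x)\le\Lambda(x)$ for all $x<y$, and in particular, passing to left limits and using continuity of $\widehat F_\M^-$ together with monotonicity of $\Lambda$, one gets $\widehat F_\M^-(y)\le\Lambda(y-)$, in fact I expect $\widehat F_\M^-(y)\le\Lambda(x)$ for all $x<y$. By attainability choose $G\in\M$ with $G(y-)=\widehat F_\M^-(y)$. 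Then for all $x<y$, $G(x)\le G(y-)=\widehat F_\M^-(y)\le\Lambda(x)$, so $\{x:G(x)>\Lambda(x)\}\subseteq[y,\infty)$, whence $q_\Lambda^+(G)=\tilde q_\Lambda^+(G)\ge y>z-\epsilon$. Letting $\epsilon\downarrow 0$ gives $\sup_{F\in\M}\tilde q_\Lambda^+(F)\ge z$, completing the proof.

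The step I expect to be the genuine obstacle is the bookkeeping that converts "$F_\M^-\le\Lambda$ to the left of $z$" into "$\widehat F_\M^-(y)\le\Lambda(x)$ for $x<y$" in a way that is immune to discontinuities — this is precisely where the hypothesis that $\widehat F_\M^-$ is \emph{continuous} must be used (Example~5 shows the conclusion is false without it), and it requires carefully tracking the interplay of: right-continuity of each $F$, the infimum over $\M$ possibly not being attained pointwise by $F_\M^-$ but being attained "from the left" by $\widehat F_\M^-$, and the monotonicity of $\Lambda$. I would phrase this as a small lemma: if $\widehat F_\M^-$ is continuous then $\widehat F_\M^-(x)=F_\M^-(x-)$ and $\widehat F_\M^-$ agrees with the right-continuous function $F_\M^-$ except possibly via limits, so that $F_\M^-(x)\le\Lambda(x)$ for all $x$ in an interval forces $\widehat F_\M^-$ to be dominated by $\Lambda$ there as well. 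Everything else is the same limiting argument already used repeatedly in the proof of Theorem~\ref{Th:main}.
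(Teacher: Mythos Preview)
Your proposal is correct and follows essentially the same route as the paper. The paper's proof is organized slightly differently: it first reruns the proof of Theorem~\ref{Th:main}(ii) with each $F$ replaced by its left-continuous version $\widehat F(x)=F(x-)$ (and $F_\M^-$ by $\widehat F_\M^-$, whose attainability is the hypothesis) to obtain $\sup_{F\in\M}\tilde q_\Lambda^+(\widehat F)=\tilde q_\Lambda^+(\widehat F_\M^-)$, then observes two facts: the sandwich $\widehat F_\M^-(x)\le F_\M^-(x)\le \widehat F_\M^-(x+\epsilon)$ together with continuity of $\widehat F_\M^-$ forces $\widehat F_\M^-=F_\M^-$ (your ``small lemma''), and for decreasing $\Lambda$ one has $\tilde q_\Lambda^+(\widehat F)=\tilde q_\Lambda^+(F)$. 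Your argument unfolds these three ingredients inline: you use $\widehat F_\M^-=F_\M^-$ to get $\widehat F_\M^-(y)\le\Lambda(y)$, use attainability to produce $G$ with $G(y-)=\widehat F_\M^-(y)$, and then the monotonicity of $\Lambda$ gives $G(x)\le G(y-)\le\Lambda(y)\le\Lambda(x)$ for $x<y$, which is exactly the computation hidden in the identity $\tilde q_\Lambda^+(\widehat G)=\tilde q_\Lambda^+(G)$. So the two proofs differ only in packaging; the key ideas (continuity forcing $\widehat F_\M^-=F_\M^-$, attainability at left limits, and decreasingness of $\Lambda$ to transfer the inequality) are identical.
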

\begin{proof}
Let $\widehat{F}(x)=F(x-)$ for all $x\in\R$. Replacing $F$ by $\widehat{F}$ and $F_{\M}^-$ by $\widehat{F}_{\M}^-$ in the proof of (ii) of Theorem \ref{Th:main} and using the attainability of $\widehat{F}_{\M}^-$, we obtain the conclusion that $\sup_{F\in\M}\tilde{q}_{\Lambda}^+(\widehat{F})=\tilde{q}_{\Lambda}^+(\widehat{F}_{\M}^-)$. Note that for any $\epsilon>0$, $\widehat{F}_{\M}^-(x)\leq F_{\M}^-(x)\leq \widehat{F}_{\M}^-(x+\epsilon)$. Hence, by the continuity of $\widehat{F}_{\M}^-$, we have {$\widehat{F}_{\M}^-=F_{\M}^-$}. Moreover, as $\Lambda$ is decreasing, we have $\tilde{q}_{\Lambda}^+(F)=\tilde{q}_{\Lambda}^+(\widehat{F})$. Combing all the above conclusions, we have $\sup_{F\in\M}\tilde{q}_{\Lambda}^+(F)=\tilde{q}_{\Lambda}^+(F_{\M}^-)$.
\end{proof}

Next, we discuss the existence of the worst or best distribution for the robust $\Lambda$-quantiles.

\begin{proposition} Let $\M$ be a set of distributions and $\Lambda:\R\to [0,1]$. Then we have the following conclusion.
\begin{enumerate}[(i)]
\item If $\Lambda$ is decreasing and left-continuous, and $\widehat{F}_{\M}^-$ is continuous and attainable, and $\tilde{q}_{\Lambda}^+(F_{\M}^-)\in\R$, then there exists $F^*\in\M$ such that $\tilde{q}_{\Lambda}^+(F^*)=\sup_{F\in\M}\tilde{q}_{\Lambda}^+(F)$.
\item If $\Lambda$ is right-continuous, $F_{\M}^+$ is attainable and right-continuous and  $q_{\Lambda}^-(F_{\M}^+)\in\R$,  then there exists $F^*\in\M$  such that $q_{\Lambda}^-(F^*)=\inf_{F\in\M}q_{\Lambda}^-(F)$.
\end{enumerate}
\end{proposition}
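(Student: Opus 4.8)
The plan is to establish each part by combining the attainability/continuity hypotheses with the structural description of the relevant $\Lambda$-quantile in Definition \ref{def:1}, and to locate an explicit candidate $F^*$ that realizes the supremum (resp. infimum). For part (i), Proposition \ref{prop:attainable} already gives $\sup_{F\in\M}\tilde{q}_{\Lambda}^+(F)=\tilde{q}_{\Lambda}^+(F_{\M}^-)=x\in\R$, so it suffices to produce $F^*\in\M$ with $\tilde{q}_{\Lambda}^+(F^*)\ge x$. First I would recall that $\tilde{q}_{\Lambda}^+(F_{\M}^-)=\sup\{y:F_{\M}^-(y)\le\Lambda(y)\}=x$, which means there is a sequence $y_n\uparrow x$ with $F_{\M}^-(y_n)\le\Lambda(y_n)$; using $\widehat{F}_{\M}^-=F_{\M}^-$ (shown in the proof of Proposition \ref{prop:attainable}) together with left-continuity of $\Lambda$ and continuity of $F_{\M}^-$, I would pass to the limit to get $F_{\M}^-(x)=F_{\M}^-(x-)\le\Lambda(x-)=\Lambda(x)$, i.e. the defining inequality holds \emph{at} the point $x$ itself. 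Then I would invoke the attainability of $\widehat{F}_{\M}^-$ (equivalently $F_{\M}^-$, by continuity) at $x$ to get $F^*\in\M$ with $F^*(x)=F_{\M}^-(x)\le\Lambda(x)$; this forces $\tilde{q}_{\Lambda}^+(F^*)=\sup\{y:F^*(y)\le\Lambda(y)\}\ge x$, and the reverse inequality $\tilde{q}_{\Lambda}^+(F^*)\le\sup_{F\in\M}\tilde{q}_{\Lambda}^+(F)=x$ is immediate, so $F^*$ is the desired maximizer.

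For part (ii), by Corollary \ref{coro:1}(ii) applied to $\Lambda$-quantiles — more precisely by Theorem \ref{Th:main}(ii) — we have $\inf_{F\in\M}q_{\Lambda}^-(F)=q_{\Lambda}^-(F_{\M}^+)=x\in\R$ under the attainability of $F_{\M}^+$, so again it suffices to exhibit $F^*\in\M$ with $q_{\Lambda}^-(F^*)\le x$. I would use that $q_{\Lambda}^-(F_{\M}^+)=\inf\{y:F_{\M}^+(y)\ge\Lambda(y)\}=x$ and argue, via right-continuity of both $F_{\M}^+$ and $\Lambda$, that the defining inequality $F_{\M}^+(x)\ge\Lambda(x)$ holds at $x$: take $y_n\downarrow x$ with $F_{\M}^+(y_n)\ge\Lambda(y_n)$ and let $n\to\infty$. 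Then attainability of $F_{\M}^+$ at $x$ yields $F^*\in\M$ with $F^*(x)=F_{\M}^+(x)\ge\Lambda(x)$, whence $q_{\Lambda}^-(F^*)=\inf\{y:F^*(y)\ge\Lambda(y)\}\le x$, and combined with $q_{\Lambda}^-(F^*)\ge\inf_{F\in\M}q_{\Lambda}^-(F)=x$ we conclude $q_{\Lambda}^-(F^*)=x$, as claimed. Note part (ii) does not need $\Lambda$ decreasing, consistent with the statement.

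The main obstacle I anticipate is the limiting argument that promotes the approximate inequality (valid on a sequence approaching $x$) to the exact inequality at $x$: this is exactly where the one-sided continuity of $\Lambda$ and of $F_{\M}^+$ (resp. the continuity of $\widehat{F}_{\M}^-$ and left-continuity of $\Lambda$) are used, and one must be careful about which side of $x$ the approximating sequence lies on and whether $F_{\M}^-$ or $F_{\M}^+$ is automatically one-sided continuous there. A secondary subtlety in (i) is the identification $F_{\M}^-=\widehat{F}_{\M}^-$ under the continuity hypothesis, which lets the attainability of $\widehat{F}_{\M}^-$ transfer to $F_{\M}^-$ at the specific point $x$; I would reuse the sandwiching $\widehat{F}_{\M}^-(x)\le F_{\M}^-(x)\le\widehat{F}_{\M}^-(x+\epsilon)$ from the proof of Proposition \ref{prop:attainable} rather than reprove it. Everything else is bookkeeping with the infimum/supremum definitions of the $\Lambda$-quantiles.
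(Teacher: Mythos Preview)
Your treatment of part~(ii) is correct and matches the paper's argument essentially line for line.

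In part~(i), however, there is a genuine gap. You write that attainability of $\widehat{F}_{\M}^-$ is ``equivalently'' attainability of $F_{\M}^-$ once the two functions agree, and then extract $F^*\in\M$ with $F^*(x)=F_{\M}^-(x)$. But the two attainability notions are \emph{not} interchangeable just because the target functions coincide: attainability of $\widehat{F}_{\M}^-$ at $x$ yields $F^*\in\M$ with $F^*(x-)=\widehat{F}_{\M}^-(x)$, which says nothing about $F^*(x)$. Since $F^*$ may jump at $x$, one can only assert $F^*(x)\ge F^*(x-)=F_{\M}^-(x)$, and the inequality $F^*(x)\le\Lambda(x)$ is not justified. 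Consequently your conclusion $\tilde{q}_{\Lambda}^+(F^*)\ge x$ does not follow from what you have.

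The fix is precisely where the hypothesis ``$\Lambda$ decreasing'' enters, which your argument never invokes in the construction of $F^*$. From $F^*(x-)=\widehat{F}_{\M}^-(x)=F_{\M}^-(x)\le\Lambda(x)$ and the monotonicity of $\Lambda$, one gets for every $z<x$ that $F^*(z)\le F^*(x-)\le\Lambda(x)\le\Lambda(z)$, hence $\tilde{q}_{\Lambda}^+(F^*)\ge z$ for all $z<x$, and therefore $\tilde{q}_{\Lambda}^+(F^*)\ge x$. The paper packages this as the identity $\tilde{q}_{\Lambda}^+(F^*)=\tilde{q}_{\Lambda}^+(\widehat{F}^*)$ for decreasing $\Lambda$, with $\widehat{F}^*(\cdot)=F^*(\cdot-)$, and then reads off $\tilde{q}_{\Lambda}^+(\widehat{F}^*)\ge x$ directly from $\widehat{F}^*(x)\le\Lambda(x)$. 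Once you insert this step, the rest of your outline goes through.
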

\begin{proof} (i)
In light of Proposition \ref{prop:attainable}, we have $\sup_{F\in\M}\tilde{q}_{\Lambda}^+(F)=\tilde{q}_{\Lambda}^+(F_{\M}^-)$. For the sake of  simplicity, we denote $\tilde{q}_{\Lambda}^+(F_{\M}^-)$ by $z$. By the definition, we have either $F_{\M}^-(z)\leq \Lambda(z)$ or there exists a sequence $x_n\uparrow z$ such that $F_{\M}^-(x_n)\leq \Lambda(x_n)$. Note that $F_{\M}^-=\widehat{F}_{\M}^-$. Hence, $F_{\M}^-$ is continuous. For the second case, using the continuity of $F_{\M}^-$ and the left-continuity of $\Lambda$, letting $n\to\infty$, we have $F_{\M}^-(z)\leq \Lambda(z)$.  As $\widehat{F}_{\M}^-$ is attainable, there exists {$F^*\in\M$ such that $\widehat{F}^*(z)=\widehat{F}_{\M}^-(z)=F_{\M}^-(z)$} with $\widehat{F}^*(z)=F^*(z-)$. This implies that $\widehat{F}^*(z)\leq \Lambda(z)$; hence, $\tilde{q}_{\Lambda}^+(\widehat{F}^*)\geq z=\sup_{F\in\M}\tilde{q}_{\Lambda}^+(F)$. Note that the decrease of $\Lambda$ implies that  $\tilde{q}_{\Lambda}^+(F^*)=\tilde{q}_{\Lambda}^+(\widehat{F}^*)$. Hence, we have $\tilde{q}_{\Lambda}^+(F^*)\geq \sup_{F\in\M}\tilde{q}_{\Lambda}^+(F)$, which implies that $\tilde{q}_{\Lambda}^+(F^*)=\sup_{F\in\M}\tilde{q}_{\Lambda}^+(F)$.

(ii) First, it follows from Theorem \ref{Th:main} that $\inf_{F\in\M}q_{\Lambda}^-(F)=q_{\Lambda}^-(F_{\M}^+)$. Let $z=q_{\Lambda}^-(F_{\M}^+)$.  By definition, we have either $F_{\M}^+(z)\geq \Lambda(z)$ or there exists a sequence of $x_n\downarrow z$ such that $F_{\M}^+(x_n)\geq \Lambda(x_n)$. For the second case,  it follows from the right-continuity of $F_{\M}^+$ and $\Lambda$ that $F_{\M}^+(z)\geq \Lambda(z)$. The attainability of $F_{\M}^+$ implies  $F^*\in\M$ exists  such that $F^*(z)=F_{\M}^+(z)\geq \Lambda(z)$. This implies $q_{\Lambda}^-(F^*)\leq z=\inf_{F\in\M}q_{\Lambda}^-(F)$. Hence, we have $q_{\Lambda}^-(F^*)=\inf_{F\in\M}q_{\Lambda}^-(F)$.
\end{proof}
Note that for other $\Lambda$-quantiles,  the best-case or worst-case  distribution may not exist, even for the case when $\Lambda=\alpha$. For instance, for the uncertainty set characterized by moment constraints, the optimal distributions for $\sup_{F\in\M}q_{\alpha}^-(F)$ and $\inf_{F\in\M}q_{\alpha}^+(F)$ do not exist; see e.g., Corollary 4.1 of \cite{BPV24}.

Finally, we offer alternative expressions of robust $\Lambda$-quantiles using \eqref{eq:alt_pre} and \eqref{eq:alt_pre2} when $\Lambda$ is decreasing. The expressions can sometimes facilitate the calculation of robust $\Lambda$-quantiles more conveniently;  see Sections \ref{sec:4.1} and \ref{sec:4.2} for details.
\begin{proposition}\label{prop:2}
  If $\Lambda$ is decreasing, then we have the following representations.
  \begin{itemize} \item[(i)] $\sup_{F\in\M}{q}_{\Lambda}^-(F)=  \sup_{x\in \mathbb{R}}    \{q_{\Lambda(x)}^{-} ( F^-_{\mathcal M}) \wedge x\},$ and $\inf_{F\in\M}q_{\Lambda}^+(F)=\inf_{x\in \mathbb{R}}    \{q_{\Lambda(x)}^{+} ( F^+_{\mathcal M}) \vee x\}$.
\item[(ii)] If  $F_{\M}^-$ is attainable, then   $\sup_{F\in\M}{q}_{\Lambda}^+(F)=  \sup_{x\in \mathbb{R}}    \{q_{\Lambda(x)}^{+} ( F^-_{\mathcal M}) \wedge x\}$ , and if $F_{\M}^+$ is attainable, then  $\inf_{F\in\M}q_{\Lambda}^-(F)=\inf_{x\in \mathbb{R}}\{q_{\Lambda(x)}^{-} ( F^+_{\mathcal M}) \vee x\}$.
\end{itemize}
\end{proposition}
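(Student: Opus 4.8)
The plan is to derive Proposition \ref{prop:2} by simply combining Theorem \ref{Th:main} with the alternative representations \eqref{eq:alt_pre} and \eqref{eq:alt_pre2} established earlier. The key observation is that both of those representations are stated for an arbitrary distribution function $F\in\mathcal M$, but their proof only uses the fact that $F$ is an increasing function (for \eqref{eq:alt_pre2} this is made explicit in the displayed computation, which invokes only the monotonicity of $F$ and the definitions of $q^+_\Lambda$ and $q^+_t$); similarly Theorem 3.1 of \cite{HWWX23} underlying \eqref{eq:alt_pre} applies to increasing functions bounded in $[0,1]$. Since $F^-_{\mathcal M}$ and $F^+_{\mathcal M}$ are increasing functions into $[0,1]$ by the discussion preceding Theorem \ref{Th:main}, the identities \eqref{eq:alt_pre} and \eqref{eq:alt_pre2} remain valid with $F$ replaced by $F^-_{\mathcal M}$ or $F^+_{\mathcal M}$ (and the equality $\tilde q^-_\Lambda=q^-_\Lambda$, $\tilde q^+_\Lambda=q^+_\Lambda$ for decreasing $\Lambda$ from Proposition \ref{prop:1}(i) lets us freely interchange tilde and non-tilde versions).

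First I would record that, for decreasing $\Lambda$, Theorem \ref{Th:main}(iii) (together with the tilde/non-tilde identities) gives $\sup_{F\in\M}q_\Lambda^-(F)=q_\Lambda^-(F^-_{\mathcal M})$, $\inf_{F\in\M}q_\Lambda^+(F)=q_\Lambda^+(F^+_{\mathcal M})$, and, under the respective attainability hypotheses, $\sup_{F\in\M}q_\Lambda^+(F)=q_\Lambda^+(F^-_{\mathcal M})$ and $\inf_{F\in\M}q_\Lambda^-(F)=q_\Lambda^-(F^+_{\mathcal M})$. Then I would apply the second equality in \eqref{eq:alt_pre} to $F^-_{\mathcal M}$ to rewrite $q_\Lambda^-(F^-_{\mathcal M})=\sup_{x\in\R}\{q^-_{\Lambda(x)}(F^-_{\mathcal M})\wedge x\}$, giving part (i) first identity; apply the first equality in \eqref{eq:alt_pre2} to $F^+_{\mathcal M}$ to get $q_\Lambda^+(F^+_{\mathcal M})=\inf_{x\in\R}\{q^+_{\Lambda(x)}(F^+_{\mathcal M})\vee x\}$, giving part (i) second identity; apply the second equality in \eqref{eq:alt_pre2} to $F^-_{\mathcal M}$ for the first identity of part (ii); and apply the first equality in \eqref{eq:alt_pre} to $F^+_{\mathcal M}$ for the second identity of part (ii).

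The only point requiring care — and the one I would state explicitly rather than gloss over — is the justification that \eqref{eq:alt_pre} and \eqref{eq:alt_pre2} hold for $F^{\pm}_{\mathcal M}$, since these are not assumed to be genuine distribution functions (indeed $F^+_{\mathcal M}$ need not be right-continuous). I expect this to be the main obstacle, but a mild one: re-examining the proof of the Proposition containing \eqref{eq:alt_pre2}, one sees that the chain of inequalities there uses only "$F(x)\le t\Rightarrow q^+_t(F)\ge x$" and "$F(x)>t\Rightarrow q^+_t(F)\le x$", both of which are purely order-theoretic facts about the generalized inverse of an arbitrary increasing function and hence valid for $F^{\pm}_{\mathcal M}$; the same remark applies to Theorem 3.1 of \cite{HWWX23} for \eqref{eq:alt_pre}. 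So the proof reduces to: (a) cite Theorem \ref{Th:main}(iii) to pass from the robust quantity to a $\Lambda$-quantile of an extreme probability function, and (b) cite the relevant half of \eqref{eq:alt_pre}/\eqref{eq:alt_pre2}, noting its validity for increasing functions, to expand that $\Lambda$-quantile. No further computation is needed.
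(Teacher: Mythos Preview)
Your proposal is correct and matches the paper's own direct argument: the paper states that the results ``directly follow from Theorem \ref{Th:main}, and Equations \eqref{eq:alt_pre}--\eqref{eq:alt_pre2},'' which is exactly your route (and your explicit check that \eqref{eq:alt_pre}--\eqref{eq:alt_pre2} remain valid for the increasing functions $F^{\pm}_{\mathcal M}$ is a point the paper leaves implicit). The paper additionally sketches an alternative proof---apply \eqref{eq:alt_pre} to each $F$ first, swap the two suprema, then invoke Corollary \ref{coro:1}---but your approach is the one it presents as the primary derivation.
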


\begin{proof} The results directly follow from Theorem \ref{Th:main}, and Equations \eqref{eq:alt_pre}-\eqref{eq:alt_pre2}. However, we will present an alternative proof to demonstrate the connection between robust $\Lambda$-quantiles and  quantiles. We take the first one as an example. If  $\Lambda$ is decreasing, by \eqref{eq:alt_pre}, we have  \begin{align*}
 \sup_{F\in\M}  q_{\Lambda}^{-} (F) &=\sup_{F\in\M}   \left\{\sup_{x\in \mathbb{R}} \left\{q_{\Lambda(x)}^{-} (F) \wedge x\right\}\right\}\\& = \sup_{x\in \mathbb{R}}    \left\{\sup_{F\in\M}  \left\{q_{\Lambda(x)}^{-} (F) \wedge x\right\}\right\} = \sup_{x\in \mathbb{R}}    \left\{\left\{\sup_{F\in\M}  q_{\Lambda(x)}^{-} (F)\right\} \wedge x\right\}. \end{align*}
It follows from Corollary \ref{coro:1} that $\sup_{F\in\M}  q_{\Lambda(x)}^{-} (F)= q_{\Lambda(x)}^{-} (F^-_{\mathcal{M}})$ for any $x\in\R$.
 Hence, we have $\sup_{F\in\M}  q_{\Lambda}^{-} (F) = \sup_{x\in \mathbb{R}}    \left\{q_{\Lambda(x)}^{-} ( F^-_{\mathcal M}) \wedge x\right\}.$
\end{proof}

\section{Specific uncertainty sets}\label{sec:4}
In this section, our attention is directed towards the uncertainty sets characterized by three distinct and widely used constraints: (a) moment constraints, (b) probability constraints via the Wasserstein distance; and (c) marginal constraints in risk aggregation. Furthermore, our results are applied to the portfolio selection problem. By using Theorem \ref{Th:main}, we can clearly see that the robust $\Lambda$-quantiles can be derived based on the expressions of robust quantiles. 
Denote by   $\mathcal{M}_p$  the set of  all cdfs on $\mathbb{R}$ with a finite $p$-th moment and  $\mathcal{M}_p\left(\mathbb{R}^n\right)$  the set of all cdfs  on $\mathbb{R}^n$ with a finite $p$-th moment in each component.



\subsection{Moment constraints}\label{sec:4.1}
We  first consider the case with only information on the moments of $F\in\mathcal M$.
For $p>1, m \in \mathbb{R}$, and $v>0$,  let
\begin{equation}\label{def:M_p}
\mathcal{M}_p( m, v)=\left\{F \in \mathcal{M}_p: \int_{0}^{1}F^{-1}(t)\d t=m, \int_{0}^{1}|F^{-1}(t)-m|^p\d t \leqslant v^p\right\}.\end{equation}
Note that $\mathcal{M}_p( m, v)$  imposes the constraints on the mean and the $p$-th central moment.  
 The worst-case and best-case  values of some risk measures over  $\M_p(m,v)$, especially for $p=2$, have been extensively studied in the literature; see e.g., \cite{GOO03}, \cite{ZF09},  \cite{CHZ11}, \cite{L18}, \cite{PWW20}, \cite{SZ23}, \cite{SZ24}, \cite{CLM23}  and the references therein.

{For an increasing function $g:(0,1)\to\R$, define $g_+^{-1}(x)=\inf\{t\in(0,1): g(t)>x\}$ with the convention that $\inf\emptyset=1$.}
In the following results, we derive the {extremal distributions} over the moment uncertainty sets, i.e., $F_{\M_p(m,v)}^-$ and $F_{\M_p(m,v)}^+$.
\begin{proposition}\label{lemma:1}
 For $p>1, m \in \mathbb{R}$ and $v>0$, we have
 \begin{align}\label{eq:mv_1}
 F_{\M_p(m,v)}^-(x)=l_{+}^{-1}(x),~~F_{\M_p(m,v)}^+(x)=u_+^{-1}(x),~x\in\R,
 \end{align}
 where, for $\alpha\in (0,1)$,
 \begin{align*}
l(\alpha)= m+v \alpha\left(\alpha^p(1-\alpha)+(1-\alpha)^p \alpha\right)^{-1 / p},~
u(\alpha)= m-v(1-\alpha)\left(\alpha^p(1-\alpha)+(1-\alpha)^p \alpha\right)^{-1 / p}.
\end{align*} 
In particular, for $p=2$, we have    $$ { F^-_{\M_2(m,v)}(x)=\frac{(x-m)^2}{v^2+(x-m)^2} \id_{\{x\geq m\}}}, ~~ \text{and}  ~~F^+_{\M_2(m,v)}(x)=  \frac{v^2}{v^2+(m-x)^2}\id_{\{x \leq m\}}+\id_{\{x>m\}}.$$
Moreover, both $\widehat{F}_{\M_p(m,v)}^-$ and $F_{\M_p(m,v)}^+$ are continuous and attainable.
\end{proposition}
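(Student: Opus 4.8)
\textbf{Proof proposal for Proposition~\ref{lemma:1}.}

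The plan is to compute the extreme probabilities $F^-_{\M_p(m,v)}$ and $F^+_{\M_p(m,v)}$ by passing through robust quantiles via Corollary~\ref{coro:1}, and then to verify continuity and attainability directly from the resulting closed forms. First I would recall that the worst-case quantile over moment-constrained sets is classical: for $F\in\M_p(m,v)$ the sharp bound $\sup_{F\in\M_p(m,v)} q_\alpha^-(F)$ has an explicit value going back to \cite{GOO03}, and likewise the sharp lower bound $\inf_{F\in\M_p(m,v)} q_\alpha^+(F)$. By Corollary~\ref{coro:1}(i), $\sup_{F\in\M_p(m,v)} q_\alpha^-(F)=q_\alpha^-(F^-_{\M_p(m,v)})$ and $\inf_{F\in\M_p(m,v)} q_\alpha^+(F)=q_\alpha^+(F^+_{\M_p(m,v)})$, so the known quantile bounds pin down the generalized inverses of $F^-_{\M_p(m,v)}$ and $F^+_{\M_p(m,v)}$. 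Concretely, I would identify $q_\alpha^-(F^-_{\M_p(m,v)}) = l(\alpha)$ with the stated $l$, and $q_\alpha^+(F^+_{\M_p(m,v)}) = u(\alpha)$ with the stated $u$; then inverting via the $g_+^{-1}$ operation introduced just before the proposition yields $F^-_{\M_p(m,v)}(x)=l_+^{-1}(x)$ and $F^+_{\M_p(m,v)}(x)=u_+^{-1}(x)$, which is \eqref{eq:mv_1}.

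Next I would specialize to $p=2$. Here $\alpha^p(1-\alpha)+(1-\alpha)^p\alpha = \alpha(1-\alpha)$, so $l(\alpha)=m+v\alpha(\alpha(1-\alpha))^{-1/2}=m+v\sqrt{\alpha/(1-\alpha)}$ and $u(\alpha)=m-v\sqrt{(1-\alpha)/\alpha}$; these are strictly increasing bijections of $(0,1)$ onto $(m,\infty)$ and $(-\infty,m)$ respectively. Solving $l(\alpha)=x$ for $x\ge m$ gives $\alpha = (x-m)^2/(v^2+(x-m)^2)$, and extending by $0$ for $x<m$ (since $l$ never drops below $m$) yields the stated formula for $F^-_{\M_2(m,v)}$; solving $u(\alpha)=x$ for $x\le m$ gives $\alpha = v^2/(v^2+(m-x)^2)$, and extending by $1$ for $x>m$ gives $F^+_{\M_2(m,v)}$. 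This is routine algebra once \eqref{eq:mv_1} is in hand.

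For the continuity and attainability claims I would argue as follows. Since $l$ is continuous and strictly increasing on $(0,1)$ with range $(m,\infty)$, its inverse function extended appropriately is continuous; the only subtlety is the jump at $x=m$, where $F^-_{\M_p(m,v)}$ has the left limit $0=\widehat F^-_{\M_p(m,v)}(m)$, so passing to the left-continuous version $\widehat F^-_{\M_p(m,v)}$ removes that discontinuity and makes it continuous on all of $\R$. Attainability of $\widehat F^-_{\M_p(m,v)}$ and of $F^+_{\M_p(m,v)}$ then requires exhibiting, for each $z\in\R$, an explicit $F\in\M_p(m,v)$ achieving the infimum/supremum at $z$; these are the two-atom (or two-plus-one-atom) distributions that saturate the moment inequality, namely the mixtures $\alpha\delta_{a}+(1-\alpha)\delta_{b}$ with $a<b$ chosen so that the mean equals $m$, the $p$-th central moment equals $v^p$, and $F$ jumps across level $l^{-1}(z)$ (resp.\ $u^{-1}(z)$) exactly at $z$. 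I expect this last step — writing down the extremal distributions and checking that they lie in $\M_p(m,v)$ while attaining the bound at an \emph{arbitrary} prescribed point $z$, not merely at one level — to be the main obstacle, since one must handle the boundary cases $z=m$ and $z$ at the edges of the support carefully, and must confirm the $p$-th central moment is met with equality (or $\le v^p$) rather than violated. For $p=2$ this is completely explicit; for general $p$ I would cite the corresponding extremal constructions from \cite{GOO03} and the robust-quantile literature referenced after \eqref{def:M_p}.
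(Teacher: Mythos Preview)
Your derivation of \eqref{eq:mv_1} and the $p=2$ formulas follows the same route as the paper: invoke the known worst- and best-case quantile formulas (the paper cites Corollary~1 of \cite{PWW20} for general $p$; \cite{GOO03} covers only $p=2$), identify them with $q_\alpha^-(F^-_{\M_p(m,v)})$ and $q_\alpha^+(F^+_{\M_p(m,v)})$ via Corollary~\ref{coro:1}(i), and invert. One minor slip: there is no jump of $F^-_{\M_p(m,v)}$ at $x=m$. Since $l(0^+)=m$ and $l$ is strictly increasing and continuous, $l_+^{-1}$ is continuous on all of $\R$ with $l_+^{-1}(x)=0$ for $x\le m$; hence $F^-_{\M_p(m,v)}=\widehat F^-_{\M_p(m,v)}$ and both are continuous, with no correction needed.

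Where you diverge from the paper is in the attainability argument. You propose to exhibit, for each $z$, the explicit two-point extremal distribution achieving the bound at $z$; this is constructive and works (for $z>m$ the two-point law with masses $\alpha$ and $1-\alpha$ at $a$ and $z$, with $a$ fixed by the mean, has $p$-th central moment exactly $v^p$ and $F(z-)=\alpha$; for $z\le m$ the degenerate law $\delta_m$ suffices). The paper instead argues non-constructively: take a minimizing sequence $F_k\in\M_p(m,v)$, use Chebyshev to get tightness, extract a weak limit $F$, and verify $F\in\M_p(m,v)$ by Fatou (for the $p$-th moment) plus uniform integrability (for the mean). Your route gives the extremizer explicitly but requires case analysis and a separate verification for each $p$; the paper's compactness argument is blind to the form of $l$ and $u$ and is reused verbatim for the Wasserstein and aggregation uncertainty sets later in Section~\ref{sec:4}, which is its main advantage.
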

\begin{proof} By Corollary 1  of \cite{PWW20}, for $\alpha \in(0,1), p>1, m \in \mathbb{R}$ and $v>0$, we have
$$
\sup _{F \in \mathcal{M}_p( m, v)} q_\alpha^{-}(F)=m+v \alpha\left(\alpha^p(1-\alpha)+(1-\alpha)^p \alpha\right)^{-1 / p}:=l(\alpha).
$$
Moreover, in light of (i) of Corollary \ref{coro:1}, we have $\sup _{F \in \mathcal{M}_p( m, v)} q_\alpha^{-}(F)=q_{\alpha}^-(F_{\M_p(m,v)}^-)$. Hence, $q_{\alpha}^-(F_{\M_p(m,v)}^-)=l(\alpha)$ for $\alpha\in (0,1)$.  Note that $F_{\M_p(m,v)}^-(\cdot)$ is right-continuous.  Therefore, $F_{\M_p(m,v)}^-(x)=l_{+}^{-1}(x)$ for $x\in\R$. One can easily check that $l$ is strictly increasing and continuous over $(0,1)$. Thus, we have that $F_{\M_p(m,v)}^-$ is continuous over $\R$, which further implies  that $\widehat{F}_{\M_p(m,v)}^-$ is continuous over $\R$.

  Again, by  Corollary 1  of \cite{PWW20}, we have $$
\inf_{F \in \mathcal{M}_p( m, v)} q^+_\alpha(F)=m-v(1-\alpha)\left(\alpha^p(1-\alpha)+(1-\alpha)^p \alpha\right)^{-1 / p}:=u(\alpha).$$
Moreover, it follows from (i) of Corollary \ref{coro:1}, we have $\inf _{F \in \mathcal{M}_p( m, v)} q_\alpha^{+}(F)=q_{\alpha}^+(F_{\M_p(m,v)}^+)$. By combining the above two equations, we have $q_{\alpha}^+(F_{\M_p(m,v)}^+)=u(\alpha)$ for $\alpha\in (0,1)$. Note that $u(\alpha)$ is strictly increasing and continuous over $(0,1)$. Hence, $F_{\M_p(m,v)}^+(x)=u_+^{-1}(x)$ for $x\in\R$ and $F_{\M_p(m,v)}^+$ is continuous.

For $\alpha\in(0,1)$ and $p=2$,  we have
$l(\alpha)=m+v \sqrt{{\alpha}/{(1-\alpha)}}$ and $u(\alpha)=m-v \sqrt{(1-\alpha)/\alpha}$.
We obtain the desired results by computing $l_+^{-1}$ and $u_+^{-1}$.

Next, we show the attainability of $\widehat{F}_{\M_p(m,v)}^-$ and $F_{\M_p(m,v)}^+$. For any $x\in\R$, there exists $F_k\in \M_p(m,v)$ such that
$\lim_{k\to\infty} F_k(x-)=\widehat{F}_{\M_p(m,v)}^-(x)$. By Chebyshev's inequality, we have $\sup_{k\geq 1}(1-F_k(z))\leq \frac{v^p}{|z-\mu|^p}$ for $z>\mu$ and $\sup_{k\geq 1}F_k(z)\leq \frac{v^p}{|z-\mu|^p}$ for $z<\mu$. Hence, $\{F_k,~k\geq 1\}$ is tight. It follows from Theorem 25.10 of \cite{B95}) that there exist a subsequence $\{F_{k_l},~l\geq 1\}$ and a distribution $F$ such that $F_{k_l}\to F$ weakly as $l\to\infty$.  This implies that $F_{k_l}^{-1}(t) \to F^{-1}(t)$ over $(0,1)$ a.e..

Next, we show $F\in \M_p(m,v)$. By Fatou's lemma, we have
$$ \int_{0}^{1}|F^{-1}(t)-m|^p\d t=\int_{0}^{1}\liminf_{l\to\infty}|F_{k_l}^{-1}(t)-m|^p\d t\leq \liminf_{l\to\infty}\int_{0}^{1}|F_{k_l}^{-1}(t)-m|^p\d t\leqslant v^p.$$
Given that  $\sup_{l\geq 1}\int_{0}^{1}|F_{k_l}^{-1}(t)|^p\d t\leq (v+|m|)^p$, we have for $y>0$
$$\int_{\{t: |F_{k_l}^{-1}(t)|\geq y\}}|F_{k_l}^{-1}(t)|\d t\leq \int_{\{t: |F_{k_l}^{-1}(t)|\geq y \}}|F_{k_l}^{-1}(t)|^p y^{1-p}\d t\leq (v+|m|)^py^{1-p},$$
which implies $\lim_{y\to\infty}\sup_{l\geq 1}\int_{\{t: |F_{k_l}^{-1}(t)|\geq y\}}|F_{k_l}^{-1}(t)|\d t=0.$ It means that $\{F_{k_l}^{-1}, l\geq 1\}$ is uniformly integrable. Hence, we have $\int_{0}^{1}F^{-1}(t)\d t=\lim_{l\to\infty}\int_{0}^{1}F_{k_l}^{-1}(t)\d t=m$.
Consequently, $F\in \M_p(m,v)$.

Note that there exists a sequence $\epsilon_n\downarrow 0$  as $n\to \infty$ such that $F$ is continuous at all $x-\epsilon_n$. Using the weak convergence of $F_{k_l}$ to $F$, we have $F(x-\epsilon_n)=\lim_{l\to\infty}F_{k_l}(x-\epsilon_n)\leq \lim_{l\to\infty}F_{k_l}(x-)=\widehat{F}_{\M_p(m,v)}^-(x)$. Letting $n\to\infty$, we have $F(x-)\leq\widehat{F}_{\M_p(m,v)}^-(x)$, which implies
$F(x-)=\widehat{F}_{\M_p(m,v)}^-(x)$. Therefore, we conclude that $\widehat{F}_{\M_p(m,v)}^-$ is attainable.

For any $x\in\R$, there exists a sequence  $F_k\in \M_p(m,v)$ such that
$\lim_{k\to\infty} F_k(x)=F_{\M_p(m,v)}^+(x)$. Similarly as in the proof of the attainability of $\widehat{F}_{\M_p(m,v)}^-$, there exists a subsequence $\{F_{k_l},~l\geq 1\}$ and $F\in \M_p(m,v)$ such that $F_{k_l}\to F$ weakly as $l\to\infty$. Moreover, we could find $\epsilon_n\downarrow 0$ such that $F$ is continuous at all $x+\epsilon_n$. Using the fact that $F_{k_l}\to F$ weakly as $l\to\infty$, we have
$F(x+\epsilon_n)=\lim_{l\to\infty} F_{k_l}(x+\epsilon_n)\geq \lim_{l\to\infty} F_{k_l}(x)=F_{\M_p(m,v)}^+(x)$. Letting $n\to\infty$, we obtain $F(x)\geq F_{\M_p(m,v)}^+(x)$, which implies $F(x)=F_{\M_p(m,v)}^+(x)$. Hence, $F_{\M_p(m,v)}^+$ is attainable.
\end{proof}


Combing  Theorem \ref{Th:main}, Propositions \ref{prop:attainable} and \ref{lemma:1}, we immediately arrive at the following result.

 \begin{theorem}\label{thm:2}
Let $F_{\M_p(m,v)}^-$ and $F_{\M_p(m,v)}^+$ be given by \eqref{eq:mv_1} and $\Lambda:\R\to [0,1]$.  Then we have
\begin{enumerate}[(i)]
 \item $\sup_{F\in \M_p(m,v)}\tilde{q}_{\Lambda}^-(F)=\tilde{q}_{\Lambda}^-(F_{\M_p(m,v)}^-)$, $\inf_{F\in \M_p(m,v)}q_{\Lambda}^+(F)=q_{\Lambda}^+(F_{\M_p(m,v)}^+)$, and     $\inf_{F\in\M_p(m,v)}q_{\Lambda}^-(F)=q_{\Lambda}^-(F_{\M_p(m,v)}^+)$;
\item  If $\Lambda$ is decreasing, then (i) remains true by replacing $\tilde{q}_\Lambda^-$ by $q_\Lambda^-$, $q_\Lambda^+$  by $\tilde{q}_\Lambda^+$ and  $q_\Lambda^-$ by $\tilde{q}_\Lambda^-$; Moreover, $\sup_{F\in\M_p(m,v)}\tilde{q}_{\Lambda}^+(F)=\sup_{F\in\M_p(m,v)}q_{\Lambda}^+(F)=\tilde{q}_{\Lambda}^+(F_{\M_p(m,v)}^-)$.
 \end{enumerate}
 \end{theorem}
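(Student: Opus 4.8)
The plan is to deduce Theorem~\ref{thm:2} directly from the abstract machinery already in place, namely Theorem~\ref{Th:main} and Proposition~\ref{prop:attainable}, once the attainability facts from Proposition~\ref{lemma:1} are invoked. Concretely, Proposition~\ref{lemma:1} gives us three structural inputs for the uncertainty set $\M=\M_p(m,v)$: (a) $F_{\M_p(m,v)}^-$ and $F_{\M_p(m,v)}^+$ are the explicit functions $l_+^{-1}$ and $u_+^{-1}$; (b) $\widehat F_{\M_p(m,v)}^-$ is continuous and attainable, and $F_{\M_p(m,v)}^+$ is continuous and attainable; (c) via the continuity of $l$ on $(0,1)$, $F_{\M_p(m,v)}^-$ is itself continuous, so by the squeeze argument $\widehat F_{\M_p(m,v)}^- = F_{\M_p(m,v)}^-$ and hence $F_{\M_p(m,v)}^-$ is attainable as well.

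For part (i), the first two equalities are exactly the statements in Theorem~\ref{Th:main}(i) with no attainability hypothesis, so they hold for an arbitrary $\M$ and in particular here. The third equality, $\inf_{F\in\M_p(m,v)} q_\Lambda^-(F) = q_\Lambda^-(F_{\M_p(m,v)}^+)$, is the second half of Theorem~\ref{Th:main}(ii), which requires $F_{\M}^+$ to be attainable --- and that is precisely input (b) above. So part (i) is immediate.

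For part (ii), the assertion ``(i) remains true by replacing $\tilde q_\Lambda^-$ by $q_\Lambda^-$, $q_\Lambda^+$ by $\tilde q_\Lambda^+$, and $q_\Lambda^-$ by $\tilde q_\Lambda^-$'' when $\Lambda$ is decreasing follows from Proposition~\ref{prop:1}(i), which gives $q_\Lambda^- = \tilde q_\Lambda^-$ and $q_\Lambda^+ = \tilde q_\Lambda^+$ pointwise on $\M$ (and on the extreme-probability functions, which are increasing functions $\R\to[0,1]$ to which Definition~\ref{def:1} applies verbatim); hence each substituted equality is literally the same identity as in part (i). For the last claim, $\sup_{F\in\M_p(m,v)}\tilde q_\Lambda^+(F)=\sup_{F\in\M_p(m,v)} q_\Lambda^+(F)=\tilde q_\Lambda^+(F_{\M_p(m,v)}^-)$: the first equality is again $q_\Lambda^+=\tilde q_\Lambda^+$ from Proposition~\ref{prop:1}(i); the second is Theorem~\ref{Th:main}(ii) applied to $\tilde q_\Lambda^+$, which needs $F_{\M_p(m,v)}^-$ attainable --- or, alternatively and more cheaply, Proposition~\ref{prop:attainable}, which needs only that $\widehat F_{\M_p(m,v)}^-$ is continuous and attainable, exactly input (b). I would use Proposition~\ref{prop:attainable} here since it is the ``easier to check'' route the authors flagged.

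I do not anticipate a genuine obstacle: the theorem is a packaging result, and the only nontrivial ingredient --- verifying attainability of the extreme probabilities for the moment set --- has already been carried out in Proposition~\ref{lemma:1} via a tightness/weak-convergence argument with Fatou's lemma and uniform integrability. The one point that warrants a sentence of care is making explicit that, since $F_{\M_p(m,v)}^-$ is continuous, it coincides with $\widehat F_{\M_p(m,v)}^-$, so attainability of the latter transfers to the former; with that remark in hand every displayed equality in the theorem is a direct citation of Theorem~\ref{Th:main}, Proposition~\ref{prop:1}(i), or Proposition~\ref{prop:attainable}. So the write-up will be short: state the attainability/continuity facts pulled from Proposition~\ref{lemma:1}, then match each equality to its source.
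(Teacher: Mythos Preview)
Your proposal is correct and matches the paper's own proof, which simply states that the result follows by combining Theorem~\ref{Th:main}, Proposition~\ref{prop:attainable}, and Proposition~\ref{lemma:1}; you have merely unpacked which piece of each result is invoked for each displayed equality. One small remark: your input (c), that attainability of $\widehat F_{\M_p(m,v)}^-$ together with $\widehat F_{\M_p(m,v)}^-=F_{\M_p(m,v)}^-$ yields attainability of $F_{\M_p(m,v)}^-$, is not quite immediate (knowing $F(x-)=F_{\M}^-(x)$ does not force $F(x)=F_{\M}^-(x)$), but since you ultimately route the last equality through Proposition~\ref{prop:attainable} rather than Theorem~\ref{Th:main}(ii), this side claim is unnecessary and the argument stands.
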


For the case where $p=2$,  we obtain the explicit formulas for  $F^-_{\M_2(m,v)}$ and $F^+_{\M_2(m,v)}$.  For general case  where $p\neq2$, the values of   $F^-_{\M_{(p, m, v)}}$ and $F^+_{\M_{(p, m, v)}}$  can be computed numerically using \eqref{eq:mv_1}.  For the particular case in which  $\Lambda$ is decreasing, it is more convenient to compute  the robust $\Lambda$-quantiles using the expressions in Proposition \ref{prop:2} since we do not need to compute the inverse functions $l$ and $u$ to obtain $F_{\M_p{( m, v)}}^-$ and $F_{\M_p{(m, v)}}^+$.

 In the following example, we compare the $\Lambda$-quantiles of normal,  exponential  and uniform distributions  with  their worst-case and best-case  values over $\M_2(m,v)$,  respectively.  
\begin{example}\label{exm:4}  Let $p=2$, $m=v=1$, and  $\Lambda_{(\alpha,\beta;z)}: x\mapsto \alpha \id_{\{x< z\}} + \beta \id_{\{x\geq z\}}$.   Using  Theorem \ref{thm:2} and Proposition \ref{prop:2} , the numerical results of robust $\Lambda$-quantiles are displayed in Tables \ref{tab:1} and \ref{tab:2} for two different $\Lambda$ functions.
\begin{table}[h]
\def\arraystretch{1} \begin{center}  \caption{Robust  $\Lambda$-quantiles   for an increasing $\Lambda$ with $\mathcal M=\mathcal M_2(m,v)$ }   \label{tab:1}  \begin{tabular}{c|cccccc}   $\Lambda_{(0.8,0.95;2)}$ & $ q_{\Lambda}^{-}  $ &   $q_{\Lambda}^{+} $&$ \tilde{q}_{\Lambda}^{-}  $&$ \tilde{q}_{\Lambda}^{+}$   \\ \hline
 $ \mathrm{N}(1,1)$ &   1.84  &  1.84  &  2.64 & 2.64   \\
   $ \mathrm{exp}(1)$    &1.61 & 1.61& 2.98 & 2.98  \\
   $ U[1-\sqrt{3},1+\sqrt{3} ]$  &2.57  &2.57 & 2.57& 2.57   \\\hline
  Best-case &   0.51  &  0.51& \textbf { 0.51} &\textbf {0.51}   \\
 Worst-case &  \textbf {5.36}  &\textbf {5.36}  &5.36&\textbf {5.36}  \\ \hline  \end{tabular}  \end{center}\end{table}
 \begin{table}[h]
\def\arraystretch{1} \begin{center}  \caption{ Robust  $\Lambda$-quantiles   for a decreasing  $\Lambda$ with $\mathcal M=\mathcal M_2(m,v)$}  \label{tab:2}  \begin{tabular}{c|cccccc}   $\Lambda_{(0.95,0.8;2)}$ & $ q_{\Lambda}^{-}  $ &   $q_{\Lambda}^{+} $&$ \tilde{q}_{\Lambda}^{-}  $&$ \tilde{q}_{\Lambda}^{+}$   \\ \hline
 $\mathrm{N}(1,1)$ &   2.00  &  2.00  &   2.00 &  2.00  \\
   $  \mathrm{exp}(1)$    &2.00& 2.00& 2.00& 2.00 \\
    $ U[1-\sqrt{3},1+\sqrt{3} ]$  &2.05  &2.05 & 2.05 & 2.05  \\\hline
  Best-case & 0.90 & 0.90   & 0.90   &  0.90    \\
 Worst-case &  3.00  &3.00  &3.00  &3.00  \\ \hline  \end{tabular}  \end{center}\end{table}

For an increasing $\Lambda$, it can be observed from Table \ref{tab:1}  that $q_{\Lambda}^{-}$ is strictly smaller than $\tilde{q}_{\Lambda}^{-}$, and $q_{\Lambda}^{+}$ is strictly smaller than $\tilde{q}_{\Lambda}^{+}$. This finding  verifies the conclusion in (i) of Proposition \ref{prop:1}.  Note that the bold numbers in Table 1 may not be  the true  best-case  or worst-case values; instead, they represent the lower bounds of the worst-case values or  the upper bounds of the best-case values, respectively.
  Moreover, we observe that $\tilde{q}_{\Lambda}^{-}$, $\tilde{q}_{\Lambda}^{+}$, and the worst-case $\Lambda$-quantiles in Table \ref{tab:1} are consistently larger than those in Table \ref{tab:2}. This shows that $\Lambda$-quantiles with  an increasing $\Lambda$ tend to penalize more in scenarios with significant capital losses. {Finally, we observe from Table \ref{tab:2} that the numerical values of the four 
$\Lambda$-quantiles are the same. In fact, their theoretical values are also the same.  This is because of the special choice of $\Lambda$ and distribution functions. In general, those four values are not the same; see e.g., Examples 4 and 5 in \cite{BP22}.}\end{example}
  
Our results  in Theorem \ref{thm:2} can be applied to robust portfolio selection. Let $\mathbf{X}=(X_1,...,X_n)\in\mathcal  X^n$ represent the loss or negative return for $n$ different assets. The set of all possible portfolio positions is denoted by $\Delta_n=\left\{\mathbf{w}=\left(w_1, \ldots, w_n\right) \in [0,1]^n: \sum_{i=1}^n w_i=1\right\}$. Note that here short-selling is not allowed. For $\mathbf{w} \in \Delta _n$, the risk of the portfolio $\mathbf{w}^{\top} \mathbf{X}$  is evaluated by $\rho\left(F_{\mathbf{w}^{\top} \mathbf{X}}\right)$, where $\rho$ is some risk measure. Suppose that we know the mean and the upper bound of the covariance matrix  of $\mathbf{X}$. Then the uncertainty set of the portfolio is given by \begin{equation}\label{eq:Mp_mv}\widehat{\M}(\mathbf{w}, \boldsymbol{\mu}, \Sigma)=\left\{F_{\mathbf{w}^{\top} \mathbf{X}}\in\mathcal M_2: \mathbb{E}[\mathbf{X}]=\boldsymbol{\mu}, \operatorname{Cov}(\mathbf{X})\preceq \Sigma\right\},
\end{equation}
where $\Sigma$ is a semidefinite symmetric matrix and for a semidefinite symmetric matrix $B$, $B\preceq \Sigma$ means $\Sigma-B$ is positive semidefinite. It follows from the result of \cite{P07} that $\widehat{\M}(\mathbf{w}, \boldsymbol{\mu}, \Sigma)=\M_2(\mathbf{w}^\top\boldsymbol{\mu}, \sqrt{\mathbf{w}^\top\Sigma \mathbf{w}})$. The optimal portfolio selection with known mean and upper bound of the covariance matrix is formulated as follows
$$ \min _{\mathbf{w} \in {\Delta_n}} \sup_{F\in\widehat{\M}(\mathbf{w}, \boldsymbol{\mu}, \Sigma)} \rho\left(F\right).
$$
We refer to \cite{PP18} for an overview of the portfolio selection with model uncertainty.
Note that return of the portfolio can also be considered as a constraint in the above optimization problem.  For instance, we can impose the constraint $\sum_{i=1}^n w_i\mu_i\leq c$ for some $c<0$ on the above optimization problem, requiring that the expected return of the portfolio is larger than $-c$. Then the optimization problem becomes $\min _{\mathbf{w} \in {\Delta_n'}} \sup_{F\in\widehat{\M}(\mathbf{w}, \boldsymbol{\mu}, \Sigma)} \rho\left(F\right)$ with $\Delta_n'=\left\{\mathbf{w}\in \Delta_n: \sum_{i=1}^n w_i\mu_i\leq c\right\}$, which does not change the nature of the problem. To simplify the problem, we here consider the portfolio selection problem with $c\geq \max_{i=1}^n \mu_i$.
 \begin{proposition}\label{prop:4}
Let $\widehat{\M}(\mathbf{w}, \boldsymbol{\mu}, \Sigma)$  be given  in  \eqref{eq:Mp_mv} and $\Lambda:\R\to [0,1]$.   For  $\rho=\tilde{q}_{\Lambda}^+$ with decreasing $\Lambda$ or $\rho=\tilde{q}_{\Lambda}^-$, we have
\begin{align}\label{mean-variance} \min _{\mathbf{w} \in {\Delta_n}} \sup_{F\in \widehat{\M}(\mathbf{w}, \boldsymbol{\mu}, \Sigma)} \rho\left(F\right)=\min _{\mathbf{w} \in {\Delta_n}} \rho\left(F^-_{\mathcal{M}_2(\mathbf{w}^{\top} \boldsymbol{\mu}, \sqrt{\mathbf{w}^{\top} \Sigma \mathbf{w}})}\right),
\end{align} where   $$  F^-_{\widehat{\M}(\mathbf{w}, \boldsymbol{\mu}, \Sigma)}(x)=\frac{(x-\mathbf{w}^{\top} \boldsymbol{\mu})^2}{(\mathbf{w}^{\top} \Sigma \mathbf{w}+(x-\mathbf{w}^{\top} \boldsymbol{\mu})^2)} \id_{\{x\geq \mathbf{w}^{\top} \boldsymbol{\mu}\}}.$$
Moreover, the optimal portfolio positions are given by the minimizer of the right-hand side of \eqref{mean-variance}.
 \end{proposition}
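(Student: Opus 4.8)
The plan is to reduce the portfolio problem to a pointwise application of Theorem~\ref{thm:2} and the structural result $\M_2(\mathbf{w},\boldsymbol\mu,\Sigma)=\M_2(\mathbf{w}^\top\boldsymbol\mu,\sqrt{\mathbf{w}^\top\Sigma\mathbf{w}})$ from \cite{P07}. First I would fix an arbitrary $\mathbf{w}\in\Delta_n$ and rewrite the inner supremum. By the cited identity of \cite{P07}, the uncertainty set $\M_2(\mathbf{w},\boldsymbol\mu,\Sigma)$ coincides with the one-dimensional moment set $\M_2(m_{\mathbf w},v_{\mathbf w})$ where $m_{\mathbf w}=\mathbf{w}^\top\boldsymbol\mu$ and $v_{\mathbf w}=\sqrt{\mathbf{w}^\top\Sigma\mathbf{w}}$. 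Then Theorem~\ref{thm:2} applies directly: for $\rho=\tilde q_\Lambda^-$ part (i) gives $\sup_{F\in\M_2(m_{\mathbf w},v_{\mathbf w})}\tilde q_\Lambda^-(F)=\tilde q_\Lambda^-(F^-_{\M_2(m_{\mathbf w},v_{\mathbf w})})$, while for $\rho=\tilde q_\Lambda^+$ with decreasing $\Lambda$ part (ii) gives $\sup_{F\in\M_2(m_{\mathbf w},v_{\mathbf w})}\tilde q_\Lambda^+(F)=\tilde q_\Lambda^+(F^-_{\M_2(m_{\mathbf w},v_{\mathbf w})})$. In both cases the inner supremum equals $\rho(F^-_{\M_2(m_{\mathbf w},v_{\mathbf w})})$, so taking the minimum over $\mathbf{w}\in\Delta_n$ on both sides yields \eqref{mean-variance}.

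Next I would record the explicit formula for $F^-_{\M_2(\mathbf{w},\boldsymbol\mu,\Sigma)}$. Since $p=2$, Proposition~\ref{lemma:1} gives $F^-_{\M_2(m,v)}(x)=\frac{(x-m)^2}{v^2+(x-m)^2}\id_{\{x\ge m\}}$; substituting $m=\mathbf{w}^\top\boldsymbol\mu$ and $v^2=\mathbf{w}^\top\Sigma\mathbf{w}$ produces the stated expression. This also clarifies that the right-hand side of \eqref{mean-variance} is a well-defined finite-dimensional optimization over the compact simplex $\Delta_n$ (here the constraint $c\ge\max_i\mu_i$ is what keeps $\Delta_n$ rather than $\Delta_n'$, so no feasibility subtlety arises). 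The final sentence of the proposition—that the optimal portfolio positions are the minimizers of the right-hand side—is then immediate from the equality of the two optimization problems, since a minimizer of the right-hand side achieves the common optimal value and conversely.

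The only real content to check carefully is the interchange of $\sup_{F}$ and the pointwise identification of the uncertainty set, i.e.\ that the attainability hypotheses needed in Theorem~\ref{thm:2} hold for each fixed $\mathbf{w}$. For $\rho=\tilde q_\Lambda^-$ no attainability is needed (part (i) of Theorem~\ref{thm:2}), so that case is unconditional. For $\rho=\tilde q_\Lambda^+$ with decreasing $\Lambda$, part (ii) of Theorem~\ref{thm:2} (which rests on Proposition~\ref{prop:attainable}) requires $\widehat F^-_{\M_2(m_{\mathbf w},v_{\mathbf w})}$ to be continuous and attainable; Proposition~\ref{lemma:1} already established exactly this for $\M_p(m,v)$, and in particular for $p=2$ and any admissible $(m,v)$ with $v>0$. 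Thus the hypotheses transfer verbatim. The main obstacle, such as it is, is making sure the identification $\M_2(\mathbf{w},\boldsymbol\mu,\Sigma)=\M_2(\mathbf{w}^\top\boldsymbol\mu,\sqrt{\mathbf{w}^\top\Sigma\mathbf{w}})$ is applied at the level of distribution sets (so that $F^-$ and the attainability statements match up), which is precisely the content of the result of \cite{P07}; everything else is bookkeeping.
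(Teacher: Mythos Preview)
Your proposal is correct and follows essentially the same approach as the paper: invoke the projection identity $\M_2(\mathbf{w},\boldsymbol\mu,\Sigma)=\M_2(\mathbf{w}^\top\boldsymbol\mu,\sqrt{\mathbf{w}^\top\Sigma\mathbf{w}})$ from \cite{P07}, apply Theorem~\ref{thm:2} to evaluate the inner supremum, and then minimize over $\mathbf{w}$. Your additional care in verifying the attainability hypotheses via Proposition~\ref{lemma:1} and in spelling out the explicit form of $F^-_{\M_2(m,v)}$ is sound and simply makes explicit what the paper leaves implicit.
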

 \begin{proof}According to the general projection property in \cite{P07},   the two sets $\mathcal{M}_2(\mathbf{w}, \boldsymbol{\mu}, \Sigma)$ and $\mathcal{M}_2(\mathbf{w}^{\top} \boldsymbol{\mu}, \sqrt{\mathbf{w}^{\top} \Sigma \mathbf{w}})$ are identical. By Theorem \ref{thm:2}, we obtain  $$\sup_{F\in \widehat{\M}(\mathbf{w}, \boldsymbol{\mu}, \Sigma)} \rho\left(F\right)= \rho\left(F^-_{\mathcal{M}_2(\mathbf{w}^{\top} \boldsymbol{\mu}, \sqrt{\mathbf{w}^{\top} \Sigma \mathbf{w}})}\right),$$
 which further implies
 $$ \min _{\mathbf{w} \in {\Delta_n}} \sup_{F\in \widehat{\M}(\mathbf{w}, \boldsymbol{\mu}, \Sigma)} \rho\left(F\right)=\min _{\mathbf{w} \in {\Delta_n}} \rho\left(F^-_{\mathcal{M}_2(\mathbf{w}^{\top} \boldsymbol{\mu}, \sqrt{\mathbf{w}^{\top} \Sigma \mathbf{w}})}\right).$$
We complete the proof.
 \end{proof}

 {\begin{example}\label{exm:5} We  consider the case of $n=2$. Let   $p=2$ and $\Lambda_{(\alpha,\beta;z)}: x\mapsto \alpha \id_{\{x< z\}} + \beta \id_{\{x\geq z\}}$. Assume that $\boldsymbol{\mu}_1=(0.5,1)^\top$, $\boldsymbol{\mu}_2=(5,6)^\top$, and  the  covariance matrices are  given by  \begin{equation}\label{eq:matric}\Sigma_1=\left(\begin{array}{cc}1 & 0.5 \\ 0.5 & 1\end{array}\right),~~\text{and}~~\Sigma_2=\left(\begin{array}{cc}1 & -0.5 \\- 0.5 & 1\end{array}\right).\end{equation}
Here, we consider the worst-case values  of $\tilde{q}^-_{\Lambda}$, as $\sup_{F \in \M_p(m,v)} \tilde{q}_{\Lambda}^-(F) = \tilde{q}_{\Lambda}^-(F_{\M_p(m,v)}^-)$ holds for both increasing and decreasing $\Lambda$. As a comparison, we also consider the worst-case values of $q_{\alpha}^-$ for some $\alpha\in (0,1)$

Figures \ref{fig:1} and \ref{fig:2} demonstrate the worst-case values of  the portfolio as functions of $w_1$, showing that the optimal portfolio positions under the $\Lambda$-quantile criterion can differ from those under the VaR criterion. Additionally, for an increasing $\Lambda$ in Figure \ref{fig:1}, when the mean of the portfolio is relatively small, the  DM uses a small probability level of $0.8$ to determine the portfolio positions; when the mean of the portfolio is relatively large, a larger  probability  level of $ 0.95$ is used. However, if  $\Lambda$ is a decreasing function in Figure \ref{fig:2}, this situation is reversed. This phenomenon is due to the fact that  an increasing $\Lambda$ may penalize large losses, while a decreasing $\Lambda$ suggests that the DM accepts a higher probability for larger potential losses.  Moreover, when two
assets are positively correlated, the worst-case values are higher compared to the case with
negatively correlated assets, as a negative correlation results in a hedging effect. 
	 \begin{figure}[hbt!]
\centering
 \includegraphics[width=15cm]{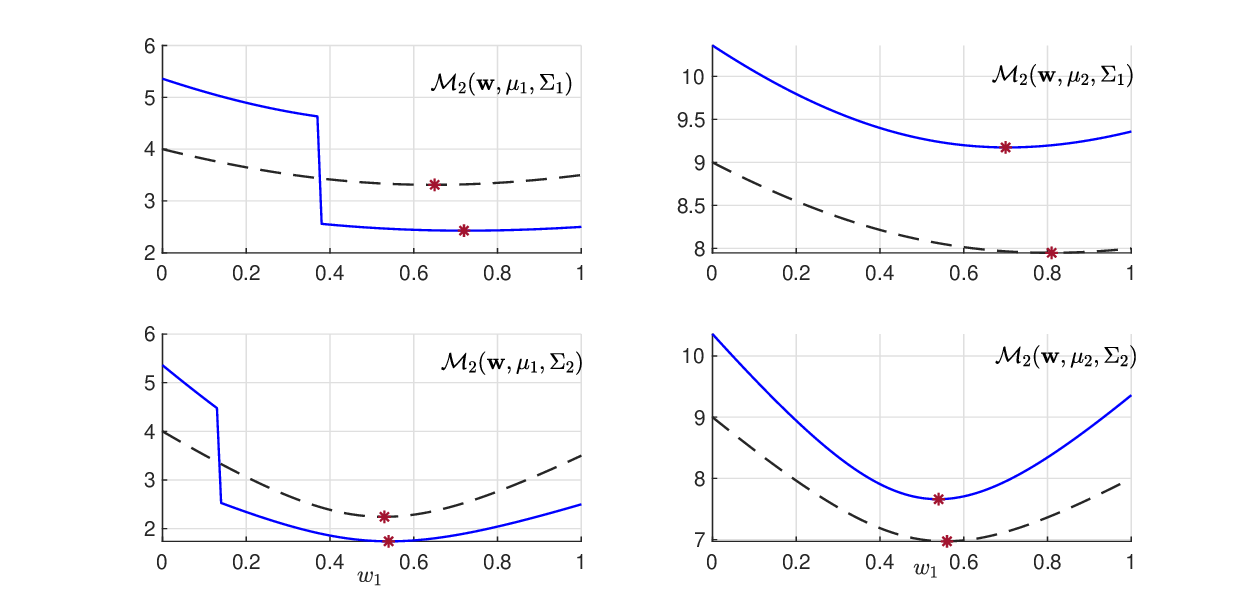}
 \caption{{ The worst-case values of $\tilde q^-_{\Lambda}$ with  $\Lambda=\Lambda_{(0.8,0.95;3)}$ (blue solid line)   and $q^-_\alpha$  with $\alpha=0.9$ (black dashed line) as functions of $w_1$}  }\label{fig:1}

\centering
 \includegraphics[width=15cm]{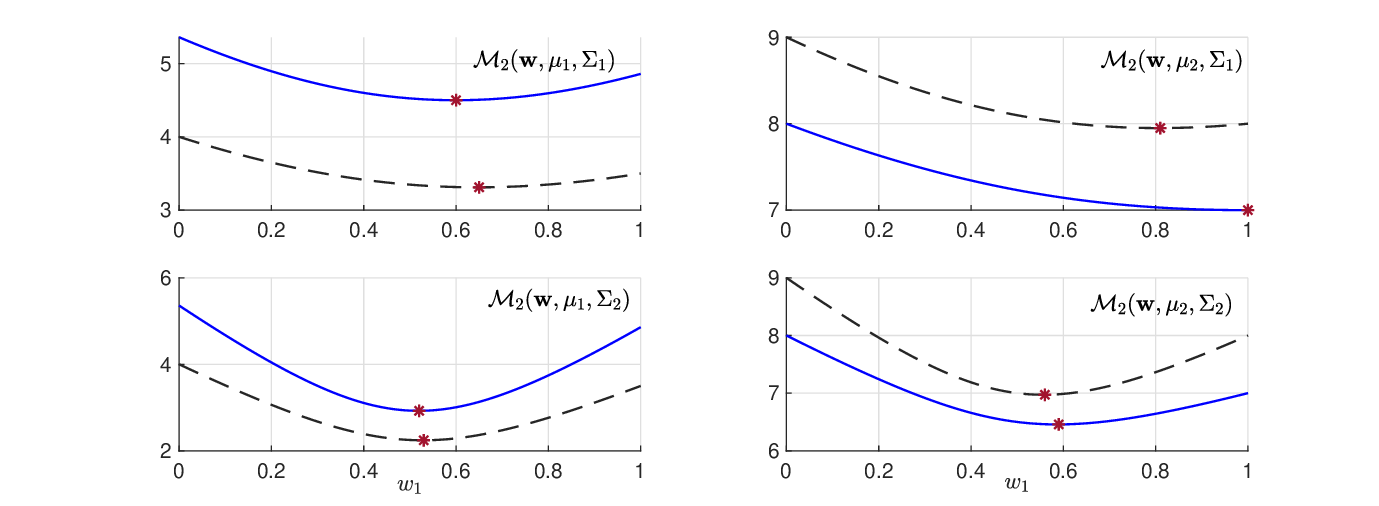}
 \caption{{ The worst-case values of $\tilde q^-_{\Lambda}$  with  $\Lambda=\Lambda_{(0.95,0.8;6)}$ (blue solid line)  and $q^-_\alpha$ with $\alpha=0.9$ (black dashed line) as functions of $w_1$}}\label{fig:2}
\end{figure}
		\end{example}
	
\subsection{Probability constraints via Wasserstein distance} \label{sec:4.2}
The Wasserstein metric is a popular notion used in mass transportation and distributionally robust optimization; see e.g., \cite{EK18} and \cite{BM19}. In the one-dimensional setting, the Wasserstein metric has an explicit formula.   For $p \geqslant 1$ and $F, G \in \mathcal{M}_p$, the $p$-Wasserstein distance between $F$ and $G$ is defined as
\begin{equation}\label{eq:W_p}
W_p(F, G)=\left(\int_0^1\left|F^{-1}(u)-G^{-1}(u)\right|^p \mathrm{~d} u\right)^{1 / p}.
\end{equation}
For $\varepsilon \geqslant 0$, the uncertainty set of an $\varepsilon$-Wasserstein ball around a baseline distribution $G \in \mathcal{M}_p$ is given by
\begin{equation}\label{def:M_ep}
\mathcal{M}_p({G}, \varepsilon)=\left\{F \in \mathcal{M}_p: W_p(F, {G}) \leqslant \varepsilon\right\},
\end{equation}
where the parameter $\epsilon$ represents the magnitude of uncertainty. 
  In the following result, we will give the expressions of  $F_{\mathcal{M}_p(G,\varepsilon)}^-$ and $F_{\mathcal{M}_p(G,\varepsilon)}^+$.
\begin{proposition}
\label{lemma:2}  For $\varepsilon>0, p \geqslant 1$ and $G \in \mathcal{M}_p$, we have
\begin{align}\label{eq:eps1}
F_{\M_p(G,\epsilon)}^-(x)=l_+^{-1}(x),~F_{\M_p(G,\epsilon)}^+(x)=u_+^{-1}(x),~x\in\R,
\end{align}
where for {each} $\alpha\in (0,1)$, $l(\alpha)$ is the unique solution of the following equation
$$
\int_\alpha^1\left(l(\alpha)-q^-_t (G)\right)_{+}^p \mathrm{~d} t=\epsilon^p,
$$
and $u(\alpha)$ is the unique solution of the following equation
\begin{equation*}
\int_0^\alpha\left(q_t^-(G)-u(\alpha)\right)_{+}^p \mathrm{~d} t=\varepsilon^p.
\end{equation*}
Moreover,  both $\widehat{F}_{\M_p(G,\epsilon)}^-$ and $F_{\M_p(G,\epsilon)}^+$ are continuous and attainable.
\end{proposition}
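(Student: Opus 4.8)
The plan is to establish \eqref{eq:eps1} by invoking the known robust quantile formulas over the Wasserstein ball and then matching them to the extreme probabilities via Corollary \ref{coro:1}, exactly as in the proof of Proposition \ref{lemma:1}. First I would recall from the literature (e.g.\ \cite{BPV24} or \cite{LMWW22}) that for $\alpha \in (0,1)$ one has $\sup_{F \in \mathcal{M}_p(G,\epsilon)} q_\alpha^-(F) = l(\alpha)$, where $l(\alpha)$ is characterized as the unique solution of $\int_\alpha^1 (l(\alpha)-q_t^-(G))_+^p \,\d t = \epsilon^p$; the existence and uniqueness of this solution follow because the left-hand side is continuous, strictly increasing in $l(\alpha)$ on the range where it is positive, equals $0$ when $l(\alpha) \le q_\alpha^-(G)$, and tends to $\infty$. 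Similarly, $\inf_{F \in \mathcal{M}_p(G,\epsilon)} q_\alpha^+(F) = u(\alpha)$ with $u(\alpha)$ the unique solution of $\int_0^\alpha (q_t^-(G)-u(\alpha))_+^p \,\d t = \epsilon^p$. By (i) of Corollary \ref{coro:1}, these equal $q_\alpha^-(F_{\mathcal{M}_p(G,\epsilon)}^-)$ and $q_\alpha^+(F_{\mathcal{M}_p(G,\epsilon)}^+)$ respectively. Since $F_{\mathcal{M}_p(G,\epsilon)}^-$ is right-continuous, inverting $\alpha \mapsto l(\alpha)$ gives $F_{\mathcal{M}_p(G,\epsilon)}^-(x) = l_+^{-1}(x)$; the analogous inversion yields $F_{\mathcal{M}_p(G,\epsilon)}^+(x) = u_+^{-1}(x)$. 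To make the inversions clean I would check that $l$ and $u$ are strictly increasing and continuous on $(0,1)$, which gives continuity of $F_{\mathcal{M}_p(G,\epsilon)}^-$ (hence of $\widehat{F}_{\mathcal{M}_p(G,\epsilon)}^-$) and of $F_{\mathcal{M}_p(G,\epsilon)}^+$; monotonicity of $l$ follows by differentiating the defining identity implicitly (or by a direct comparison argument), and continuity from the implicit function structure.

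\textbf{Attainability.} For attainability of $\widehat{F}_{\mathcal{M}_p(G,\epsilon)}^-$ and $F_{\mathcal{M}_p(G,\epsilon)}^+$ I would follow the tightness/weak-compactness argument from the proof of Proposition \ref{lemma:1} essentially verbatim. Fix $x \in \R$ and take a sequence $F_k \in \mathcal{M}_p(G,\epsilon)$ with $F_k(x-) \to \widehat{F}_{\mathcal{M}_p(G,\epsilon)}^-(x)$ (resp.\ $F_k(x) \to F_{\mathcal{M}_p(G,\epsilon)}^+(x)$). The family $\{F_k\}$ is tight: since $W_p(F_k,G) \le \epsilon$, the $p$-th moments $\int_0^1 |F_k^{-1}(t)|^p \,\d t$ are uniformly bounded by $(\epsilon + \|G\|_p)^p$ where $\|G\|_p = (\int_0^1 |G^{-1}(t)|^p\,\d t)^{1/p}$, so Markov/Chebyshev gives uniform control of the tails. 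By Theorem 25.10 of \cite{B95} some subsequence $F_{k_l}$ converges weakly to a distribution $F$, hence $F_{k_l}^{-1}(t) \to F^{-1}(t)$ for a.e.\ $t \in (0,1)$. To see $F \in \mathcal{M}_p(G,\epsilon)$, apply Fatou's lemma to $\int_0^1 |F_{k_l}^{-1}(t)-G^{-1}(t)|^p\,\d t$ to get $W_p(F,G)^p \le \liminf_l W_p(F_{k_l},G)^p \le \epsilon^p$; in particular $F \in \mathcal{M}_p$. Then, choosing $\epsilon_n \downarrow 0$ with $F$ continuous at $x - \epsilon_n$ (resp.\ $x + \epsilon_n$), weak convergence gives $F(x-\epsilon_n) = \lim_l F_{k_l}(x-\epsilon_n) \le \lim_l F_{k_l}(x-) = \widehat{F}_{\mathcal{M}_p(G,\epsilon)}^-(x)$, and letting $n \to \infty$ forces $F(x-) = \widehat{F}_{\mathcal{M}_p(G,\epsilon)}^-(x)$; symmetrically $F(x) = F_{\mathcal{M}_p(G,\epsilon)}^+(x)$.

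\textbf{Main obstacle.} The only genuinely non-routine point is the correct identification of the robust quantile formulas over the Wasserstein ball and, relatedly, the verification that the defining integral equations for $l(\alpha)$ and $u(\alpha)$ have unique solutions that are strictly increasing and continuous in $\alpha$. I expect this to hinge on a careful monotonicity analysis: for fixed $\alpha$, $\ell \mapsto \int_\alpha^1 (\ell - q_t^-(G))_+^p\,\d t$ is non-decreasing and, once $\ell > q_\alpha^-(G)$, strictly increasing and continuous with range $(0,\infty)$, giving a unique $l(\alpha) > q_\alpha^-(G)$; strict monotonicity in $\alpha$ comes from the fact that shrinking the interval $[\alpha,1]$ strictly decreases the integral (using atomlessness of Lebesgue measure and $q_t^-(G) < l(\alpha)$ on a set of positive measure near $t=\alpha$), so $l$ must increase to compensate. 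Everything else — the inversion to get $l_+^{-1}, u_+^{-1}$, the passage to $\widehat{F}_{\mathcal{M}_p(G,\epsilon)}^-$, and the tightness argument — is a direct transcription of the $\mathcal{M}_p(m,v)$ case, with the moment bound now supplied by the Wasserstein constraint rather than by Chebyshev applied to the central moment.
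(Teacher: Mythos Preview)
Your proposal is correct and follows essentially the same approach as the paper's proof: invoke the robust-quantile formula over the Wasserstein ball from \cite{LMWW22}, match it to the extreme probabilities via Corollary \ref{coro:1}, check strict monotonicity of $l$ (and $u$) to invert, and establish attainability by the same tightness/Fatou/continuity-point argument as in Proposition \ref{lemma:1}, with the uniform $p$-th moment bound now coming from the Wasserstein constraint. The only cosmetic difference is that the paper derives the formula for $u(\alpha)$ by the reflection $\inf_{F\in \mathcal{M}_p(G,\epsilon)} q_\alpha^+(F) = -\sup_{F\in \mathcal{M}_p(\overline{G},\epsilon)} q_{1-\alpha}^-(F)$ rather than citing it directly, and its strict-monotonicity argument for $l$ is the direct comparison you sketch (not implicit differentiation).
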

\begin{proof} By (iii) of Proposition 4 of \cite{LMWW22}, we have for $\alpha\in (0,1)$,  $ \sup_{F\in \mathcal{M}_p(G, \varepsilon)} q^-_\alpha(F) =l(\alpha)$ is the unique solution of the following equation
$$
\int_\alpha^1\left(l(\alpha)-q^-_t (G)\right)_{+}^p \mathrm{~d} t=\epsilon^p.
$$ By (i) of Corollary \ref{coro:1}, we have $\sup_{F\in\M_p(G,\epsilon)}q_{\alpha}^-(F)=q_{\alpha}^-(F_{\M_p(G,\epsilon)}^-)=l(\alpha)$. For $0<\alpha_1<\alpha_2<1$, it follows that $\int_{\alpha_2}^1\left(l(\alpha_1)-q^-_t (G)\right)_{+}^p \mathrm{~d} t<\epsilon^p$. This implies that $l(\alpha_2)>l(\alpha_1)$.
 Hence, $l$ is strictly increasing over $(0,1)$, which implies that $F_{\M_p(G,\epsilon)}^-(x)=l_+^{-1}(x)$ for $x\in\R$ and $F_{\M_p(G,\epsilon)}^-$ is continuous.

Next, we show the second statement.
  Note that $\inf_{F\in \M_p(G,\epsilon)}q_{\alpha}^+(F)=-\sup_{F\in \M_p(\overline{G},\epsilon)}q_{1-\alpha}^-(F):=u(\alpha)$, where $\overline{G}^{-1}(t)=-G^{-1}(1-t),~t\in (0,1)$. Using the above conclusion, we have $-u(\alpha)$ is the unique solution of
 $$
\int_{1-\alpha}^1\left(-u(\alpha)+q^-_{1-t} (G)\right)_{+}^p \mathrm{~d} t=\epsilon^p.
$$
Hence, $u(\alpha)$ is the unique solution of
$$
\int_0^\alpha\left(q^-_{t} (G)-u(\alpha)\right)_{+}^p \mathrm{~d} t=\epsilon^p.
$$
It follows from  (i) of Corollary \ref{coro:1} that $q_{\alpha}^+(F_{\M_p(G,\epsilon)}^+)=\inf_{F\in \M_p(G,\epsilon)}q_{\alpha}^+(F)=u(\alpha)$.
Moreover, for $\alpha_1<\alpha_2$, it follows that $\int_0^{\alpha_2}\left(q^-_{t} (G)-u(\alpha_1)\right)_{+}^p \mathrm{~d} t>\epsilon^p$, which implies $u(\alpha_2)>u(\alpha_1)$. Hence, $u$ is strictly increasing on $(0,1)$. Consequently, we have $F_{\M_p(G,\epsilon)}^+(x)=u_+^{-1}(x),~x\in\R$ and $F_{\M_p(G,\epsilon)}^+$ is continuous.

Next, we show the attainability of $\widehat{F}_{\M_p(G,\epsilon)}^-$ and $F_{\M_p(G,\epsilon)}^+$.
 For any $x\in\R$, there exists a sequence $F_k\in \M_p(G,\epsilon)$ such that
$\lim_{k\to\infty} F_k(x-)=\widehat{F}_{\M_p(G,\epsilon)}^-(x)$. It follows that
$$\int_{0}^{1}|F_k^{-1}(t)|^p\d t\leq \left(W_p(\delta_0,G)+W_p(F_k, G)\right)^{p}\leq \left(W_p(\delta_0,G)+\epsilon\right)^{p},$$
where $\delta_x$ is the distribution with probability mass 1 on $x$.
By Chebyshev's inequality, we have $\sup_{k\geq 1}(1-F_k(x))\leq \frac{\left(W_p(\delta_0,G)+\epsilon\right)^{p}}{x^p}$ for $x>0$ and $\sup_{k\geq 1}F_k(x)\leq \frac{\left(W_p(\delta_0,G)+\epsilon\right)^{p}}{|x|^p}$ for $x<0$. Hence, $\{F_{k},~k\geq 1\}$ is tight. By Theorem 25.10 of \cite{B95}), there exist a subsequence $\{F_{k_l},~l\geq 1\}$ and a distribution $F$ such that $F_{k_l}\to F$ weakly as $l\to\infty$, which implies $F_{k_l}^{-1}(t) \to F^{-1}(t)$ over $(0,1)$ a.s.. It follows from Fatou's lemma that
$$ \int_{0}^{1}|F^{-1}(t)-G^{-1}(t)|^p\d t=\int_{0}^{1}\liminf_{l\to\infty}|F_{k_l}^{-1}(t)-G^{-1}(t)|^p\d t\leq \liminf_{l\to\infty}\int_{0}^{1}|F_{k_l}^{-1}(t)-G^{-1}(t)|^p\d t\leqslant \epsilon^p.$$
Hence, $F\in \M_p(G,\epsilon)$.  Using the same argument as in the proof of Proposition \ref{lemma:1}, we obtain $F(x-)=\widehat{F}_{\M_p(G,\epsilon)}^-(x)$. Hence, $\widehat{F}_{\M_p(G,\epsilon)}^-$ is attainable. Using the similar argument, we can show that $F_{\M_p(G,\epsilon)}^+$ is attainable. The details are omitted.
\end{proof}

Following  Theorem \ref{Th:main}, Proposition \ref{prop:attainable}, and Proposition \ref{lemma:2}, we immediately obtain the following conclusions.
 \begin{theorem}\label{thm:3} For $\varepsilon>0, p \geqslant 1$, $ G \in \mathcal{M}_p$, and $\Lambda:\R\to [0,1]$, let $F_{\mathcal{M}_p(G, \varepsilon)}^-$ and $F_{\mathcal{M}_p(G, \varepsilon)}^+$ be given by \eqref{eq:eps1}. Then we have
 \begin{enumerate}[(i)]
 \item $\sup_{F\in \mathcal{M}_p(G, \varepsilon)}\tilde{q}_{\Lambda}^-(F)=\tilde{q}_{\Lambda}^-(F_{\mathcal{M}_p(G, \varepsilon)}^-)$; $\inf_{F\in \mathcal{M}_p(G, \varepsilon)}q_{\Lambda}^+(F)=q_{\Lambda}^+(F_{\mathcal{M}_p(G, \varepsilon)}^+)$;  $\inf_{F\in\mathcal{M}_p(G, \varepsilon)}q_{\Lambda}^-(F)=q_{\Lambda}^-(F_{\mathcal{M}_p(G, \varepsilon)}^+)$;
\item  If $\Lambda$ is decreasing, then (i) remains true by replacing $\tilde{q}_\Lambda^-$ by $q_\Lambda^-$, $q_\Lambda^+$  by $\tilde{q}_\Lambda^+$ and  $q_\Lambda^-$ by $\tilde{q}_\Lambda^-$; Moreover, $\sup_{F\in\mathcal{M}_p(G, \varepsilon)}\tilde{q}_{\Lambda}^+(F)=\sup_{F\in\mathcal{M}_p(G, \varepsilon)}q_{\Lambda}^+(F)=\tilde{q}_{\Lambda}^+(F_{\mathcal{M}_p(G, \varepsilon)}^-)$.
 \end{enumerate}
  \end{theorem}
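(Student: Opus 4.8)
The plan is to obtain the theorem as a direct consequence of three facts already in hand: the general decomposition in Theorem~\ref{Th:main}, the attainability-replacement statement in Proposition~\ref{prop:attainable}, and the structural properties of the extreme probabilities recorded in Proposition~\ref{lemma:2}. The two features of Proposition~\ref{lemma:2} that I would single out at the outset are that $F_{\mathcal M_p(G,\varepsilon)}^+$ is attainable and that $\widehat{F}_{\mathcal M_p(G,\varepsilon)}^-$ is both continuous and attainable; these are precisely the hypotheses required below, so the argument reduces to matching each claimed identity to the right case of the cited results.

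First I would dispatch part (i). The identity $\sup_{F\in\mathcal M_p(G,\varepsilon)}\tilde q_\Lambda^-(F)=\tilde q_\Lambda^-(F_{\mathcal M_p(G,\varepsilon)}^-)$ and the identity $\inf_{F\in\mathcal M_p(G,\varepsilon)}q_\Lambda^+(F)=q_\Lambda^+(F_{\mathcal M_p(G,\varepsilon)}^+)$ are literal instances of Theorem~\ref{Th:main}(i), which needs no attainability hypothesis. For the remaining identity $\inf_{F\in\mathcal M_p(G,\varepsilon)}q_\Lambda^-(F)=q_\Lambda^-(F_{\mathcal M_p(G,\varepsilon)}^+)$ I would invoke Theorem~\ref{Th:main}(ii), whose hypothesis is the attainability of $F_{\mathcal M}^+$, supplied by Proposition~\ref{lemma:2}. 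No monotonicity of $\Lambda$ is needed for any of this.

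Next I would treat part (ii). When $\Lambda$ is decreasing, Proposition~\ref{prop:1}(i) gives $q_\Lambda^-=\tilde q_\Lambda^-$ and $q_\Lambda^+=\tilde q_\Lambda^+$ pointwise on $\mathcal M$, so each identity of (i) survives the indicated interchange of $\tilde q_\Lambda^\pm$ and $q_\Lambda^\pm$; this yields the first assertion of (ii). For the ``Moreover'' part, $\sup_{F\in\mathcal M_p(G,\varepsilon)}\tilde q_\Lambda^+(F)=\tilde q_\Lambda^+(F_{\mathcal M_p(G,\varepsilon)}^-)$, I would apply Proposition~\ref{prop:attainable}, whose hypotheses ($\Lambda$ decreasing, and $\widehat F_{\mathcal M}^-$ continuous and attainable) are exactly those granted by Proposition~\ref{lemma:2}; the second equality $\sup_F q_\Lambda^+(F)=\sup_F\tilde q_\Lambda^+(F)$ then follows once more from $q_\Lambda^+=\tilde q_\Lambda^+$ for decreasing $\Lambda$.

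Since every step is a citation, there is essentially no technical obstacle; the only thing demanding care is the bookkeeping — ensuring each claimed identity is paired with the correct case of Theorem~\ref{Th:main} or Proposition~\ref{prop:attainable}, and that its attainability/continuity hypothesis appears in the list provided by Proposition~\ref{lemma:2}. In particular one must not conflate attainability of $F_{\mathcal M}^-$ with that of $\widehat F_{\mathcal M}^-$: only the latter is available here, which is precisely why the $\sup\tilde q_\Lambda^+$ statement is obtained through Proposition~\ref{prop:attainable} under the decreasing-$\Lambda$ hypothesis rather than directly from Theorem~\ref{Th:main}(ii).
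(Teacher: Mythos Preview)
Your proposal is correct and follows exactly the paper's own route: the paper simply states that the result follows from Theorem~\ref{Th:main}, Proposition~\ref{prop:attainable}, and Proposition~\ref{lemma:2}, and you have spelled out precisely which clause of each is used for which identity, including the subtle point that the $\sup\tilde q_\Lambda^+$ statement must go through Proposition~\ref{prop:attainable} (via attainability of $\widehat F_{\mathcal M}^-$) rather than Theorem~\ref{Th:main}(ii). The only cosmetic remark is that the equalities $q_\Lambda^\pm=\tilde q_\Lambda^\pm$ for decreasing $\Lambda$ are needed not just on $\mathcal M$ but also at the increasing functions $F_{\mathcal M_p(G,\varepsilon)}^\pm$; since Definition~\ref{def:1} and Proposition~\ref{prop:1}(i) are stated for arbitrary increasing $f$, this is already covered.
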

 In the following example, we  consider the uncertainty set $\mathcal{M}_p(G,\varepsilon)$ with $G$ being  normal, exponential and uniform distributions. Note that $F_{ \M_p(G,\epsilon)}^-$  and  $F_{ \M_p(G,\epsilon)}^+$  can be computed numerically via \eqref{eq:eps1}.
   For decreasing $\Lambda$,  it is more convenient to calculate  the best-case and worst case of $\Lambda$ -quantiles using the expressions in  Proposition \ref{prop:2} as we do not need to compute the inverse functions of $l$ and $u$.
 \begin{example}\label{exm:6} Let  $p=1$, $\epsilon=0.1$ and  $\Lambda_{(\alpha,\beta;z)}: x\mapsto \alpha \id_{\{x<z\}} + \beta \id_{\{x\geq z\}}$.    Applying the results in Theorem \ref{thm:3} and Proposition \ref{prop:2}, we obtain the robust $\Lambda$-quantiles numerically, which are displayed in  Tables \ref{tab:3} and \ref{tab:4}.   Again,   the bold numbers in Table \ref{tab:3}  represent the upper bounds for the worst-case values and the lower bounds for the best-case values.

{We observe that  the values and bounds of the four $\Lambda$-quantiles displayed in Tables \ref{tab:3} and \ref{tab:4} are all the same.} Moreover, the worst-case values of $\tilde{q}_{\Lambda}^{-}$ with an increasing $\Lambda$ are larger than those with a decreasing $\Lambda$. This finding indicates that $\tilde{q}_{\Lambda}^{-}$ with increasing $\Lambda$ functions are more conservative under this uncertainty set.
\begin{table}[h]
\def\arraystretch{1} \begin{center}  \caption{Robust  $\Lambda$-quantiles   for an increasing $\Lambda$ with  $\mathcal M=\mathcal{M}_p(G, \varepsilon)$}   \label{tab:3} \begin{tabular}{c|cccccc}
 $\Lambda_{(0.8,0.95;2)}/ \text{best-case}$ & $ q_{\Lambda}^{-}  $ &   $q_{\Lambda}^{+} $&$ \tilde{q}_{\Lambda}^{-}  $&$ \tilde{q}_{\Lambda}^{+}$   \\ \hline
 $ \mathrm{N}(1,1)$ &  1.90  &    1.90  &    \textbf{1.90}& \textbf{1.90}  \\
   $  \mathrm{exp}(1)$    &2.00&2.00&\textbf{2.00}& \textbf{2.00}  \\ $U[1-\sqrt{3},1+\sqrt{3} ]$  &1.89  &1.89 & \textbf{1.89}& \textbf{1.89}  \\  \hline
   $\Lambda_{(0.8,0.95;2})/ \text{worst-case}$ & $ q_{\Lambda}^{-}  $ &   $q_{\Lambda}^{+} $&$ \tilde{q}_{\Lambda}^{-}  $&$ \tilde{q}_{\Lambda}^{+}$   \\ \hline
 $\mathrm{N}(1,1)$ & \textbf{5.06}&    \textbf{5.06}  &
5.06 &  \textbf{5.06}  \\
   $  \mathrm{exp}(1)$    &\textbf{6.00}&\textbf{6.00}&6.00& \textbf{6.00}\\
   $ U[1-\sqrt{3},1+\sqrt{3} ]$  &\textbf{4.65}  &\textbf{4.65} & 4.65& \textbf{4.65}   \\
    \hline
   \end{tabular}  \end{center}\end{table}
 \begin{table}[h]
\def\arraystretch{1} \begin{center}  \caption{ Robust  $\Lambda$-quantiles   for a decreasing  $\Lambda$ with  $\mathcal M=\mathcal{M}_p(G, \varepsilon)$}  \label{tab:4}  \begin{tabular}{c|cccccc}   $\Lambda_{(0.95,0.8;2)}/ \text{best-case}$ & $ q_{\Lambda}^{-}  $ &   $q_{\Lambda}^{+} $&$ \tilde{q}_{\Lambda}^{-}  $&$ \tilde{q}_{\Lambda}^{+}$   \\ \hline
 $ \mathrm{N}(1,1)$ &    1.06&  1.06 &     1.06   &    1.06    \\
   $ \mathrm{exp}(1)$    &2.11&2.11& 2.11& 2.11 \\
   $ U[1-\sqrt{3},1+\sqrt{3} ]$  &0.65 &0.65 &0.65& 0.65   \\
   \hline  $\Lambda_{(0.95,0.8;2)}/ \text{worst-case}$ & $ q_{\Lambda}^{-}  $ &   $q_{\Lambda}^{+} $&$ \tilde{q}_{\Lambda}^{-}  $&$ \tilde{q}_{\Lambda}^{+}$   \\ \hline
 $\mathrm{N}(1,1)$ &   2.90  &  2.90  &  2.90  &  2.90   \\
   $ \mathrm{exp}(1)$    &3.11&3.11& 3.11& 3.11 \\ $ U[1-\sqrt{3},1+\sqrt{3} ]$  &2.88  &2.88  & 2.88 & 2.88    \\\hline
 \end{tabular}  \end{center}\end{table}

\end{example}

We next focus on the portfolio selection problem.  For $p \geqslant 1$ and $a \geqslant 1$, the $p$-Wasserstein metric on $\mathbb{R}^n$ between $F, G \in \mathcal{M}_p\left(\mathbb{R}^n\right)$ is defined as
$$
W_{a, p}^n(F, G)=\inf _{F_ \mathbf{X}=F, F_ \mathbf{Y}=G}\left(\mathbb E\left[\|\mathbf{X}-\mathbf{Y}\|_a^p\right]\right)^{1 / p},
$$
where   $F_{\mathbf{X}}$ is  the cdf  of $\mathbf{X}$,  and  $\|\cdot\|_a$ is the $L^a$ norm on $\mathbb{R}^n$.  If $n=1$, then $W_{a, p}^d$ is $W_p$ in \eqref{eq:W_p}  where the infimum is attained by comonotonicity via the Fr\'echet-Hoeffing inequality. Define the Wasserstein uncertainty set for a benchmark distribution $G \in \mathcal{M}_p(\mathbb{R}^n)$ as
$$
\mathcal{M}_{a, p}^n(G,\epsilon)=\left\{F \in \mathcal{M}_p\left(\mathbb{R}^n\right): W_{a, p}^n(F, G) \leqslant \varepsilon\right\}, ~\varepsilon >0.
$$
 The univariate uncertainty set for the cdf of $\mathbf{w}^{\top} \mathbf{X}$ is denoted by
\begin{equation*}\label{eq:M_a_p}
\mathcal{M}_{\mathbf{w}, a, p}(G,\varepsilon)=\left\{F_{\mathbf{w}^{\top} \mathbf{X}}: F_{\mathbf{X}} \in \mathcal{M}_{a, p}^n(G,\epsilon)\right\}, \quad G \in \mathcal{M}_p(\mathbb{R}^d).
\end{equation*}
  We next solve the following robust portfolio selection problem
\begin{equation}\label{eq:opt2}
 \min _{\mathbf{w} \in {\Delta_n}} \sup_{F\in\mathcal{M}_{\mathbf{w}, a, p}(G,\varepsilon)} \rho\left(F\right).
\end{equation}
\begin{lemma}[Theorem 5 of \cite{MWW22}]\label{lem:3} For $\varepsilon \geqslant 0, a \geqslant 1$ and $p \geqslant 1$, random vector $\mathbf{X}$ with $F_{\mathbf{X}} \in \mathcal{M}_p\left(\mathbb{R}^n\right)$ and $\mathbf{w} \in \mathbb{R}^n$, we have
$$
\mathcal{M}_{\mathbf{w}, a, p}\left(F_{\mathbf{X}},\epsilon\right)=\mathcal{M}_{p}\left(F_{\mathbf{w}^{\top} \mathbf{X}}, \epsilon\|\mathbf{w}\|_b\right),
$$
where $b$ satisfies $1 / a+1 / b=1$.
\end{lemma}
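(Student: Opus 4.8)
The plan is to prove the set identity by two inclusions, using H\"older's inequality on $\mathbb R^n$ (together with its equality case) as the bridge between the norm $\|\cdot\|_a$ and the scalar transport cost. Throughout I would treat $\mathbf w\neq\mathbf 0$; the case $\mathbf w=\mathbf 0$ is immediate, since then $\mathbf w^\top\mathbf X=0$ a.s., $\|\mathbf w\|_b=0$, and both sides collapse to $\{\delta_0\}$.

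For $\mathcal{M}_{\mathbf{w}, a, p}(F_{\mathbf{X}},\epsilon)\subseteq\mathcal{M}_{p}(F_{\mathbf{w}^{\top}\mathbf{X}},\epsilon\|\mathbf{w}\|_b)$: take $H$ in the left-hand set, so $H=F_{\mathbf w^\top\mathbf Y}$ for some $\mathbf Y$ with $W_{a,p}^n(F_{\mathbf Y},F_{\mathbf X})\le\epsilon$. Pick an optimal (or, for any $\delta>0$, a $\delta$-near-optimal) coupling of $(\mathbf X,\mathbf Y)$, so that $(\mathbb E[\|\mathbf X-\mathbf Y\|_a^p])^{1/p}\le\epsilon+\delta$. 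Then $(\mathbf w^\top\mathbf X,\mathbf w^\top\mathbf Y)$ is a coupling of $F_{\mathbf w^\top\mathbf X}$ and $H$, and H\"older's inequality $|\mathbf w^\top\mathbf v|\le\|\mathbf w\|_b\|\mathbf v\|_a$ gives $W_p(F_{\mathbf w^\top\mathbf X},H)^p\le\mathbb E[|\mathbf w^\top(\mathbf X-\mathbf Y)|^p]\le\|\mathbf w\|_b^p(\epsilon+\delta)^p$; letting $\delta\downarrow0$ yields $H\in\mathcal{M}_{p}(F_{\mathbf{w}^{\top}\mathbf{X}},\epsilon\|\mathbf{w}\|_b)$.

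The reverse inclusion is the main obstacle, and the idea is to realize a one-dimensional optimal perturbation of $\mathbf w^\top\mathbf X$ and lift it back to $\mathbb R^n$ along the cheapest direction. Fix $\mathbf u\in\mathbb R^n$ attaining equality in H\"older against $\mathbf w$ and normalized so that $\mathbf w^\top\mathbf u=1$; by the dual-norm identity this gives $\|\mathbf u\|_a=1/\|\mathbf w\|_b=\min\{\|\mathbf v\|_a:\mathbf w^\top\mathbf v=1\}$, the minimizer existing by compactness of the unit sphere in finite dimensions. Now take $H$ with $W_p(F_{\mathbf w^\top\mathbf X},H)\le\epsilon\|\mathbf w\|_b$, write $Z=\mathbf w^\top\mathbf X$, and, using that the probability space is atomless (enlarging it if necessary, which changes neither the law of the constructed vector nor the coupling cost), construct $Z'\sim H$ comonotone with $Z$, so that $\mathbb E[|Z-Z'|^p]=W_p(F_Z,H)^p$. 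Set $\mathbf Y=\mathbf X+(Z'-Z)\mathbf u$. Then $\mathbf w^\top\mathbf Y=Z+(Z'-Z)(\mathbf w^\top\mathbf u)=Z'$, hence $F_{\mathbf w^\top\mathbf Y}=H$, and $\mathbf Y$ has finite $p$-th moments since $\mathbf X$ and $Z'$ do. Finally $(\mathbf X,\mathbf Y)$ is an admissible coupling with $\mathbb E[\|\mathbf X-\mathbf Y\|_a^p]=\|\mathbf u\|_a^p\,\mathbb E[|Z-Z'|^p]\le\|\mathbf w\|_b^{-p}\cdot\epsilon^p\|\mathbf w\|_b^p=\epsilon^p$, so $W_{a,p}^n(F_{\mathbf Y},F_{\mathbf X})\le\epsilon$ and $H\in\mathcal{M}_{\mathbf{w}, a, p}(F_{\mathbf{X}},\epsilon)$.

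The two points needing care are: (i) the existence of the comonotone $Z'$, which is where atomlessness enters, via the standard fact that any $Z$ admits $U\sim\mathrm{U}[0,1]$ with $Z=F_Z^{-1}(U)$ a.s., after which $Z'=H^{-1}(U)$ does the job; and (ii) the direction of the inequalities relative to the infimum defining $W_{a,p}^n$ — exhibiting a single coupling suffices for the upper bound in the $\supseteq$ direction, whereas in the $\subseteq$ direction one passes to a (near-)optimal coupling, the infimum being attained on the Polish space $\mathbb R^n$. Everything else is bookkeeping with H\"older's inequality and its equality case.
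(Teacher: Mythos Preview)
Your argument is correct. The paper does not actually prove this lemma; it is quoted verbatim as Theorem~5 of \cite{MWW22} and used as a black box, so there is no in-paper proof to compare against. Your two-inclusion argument via H\"older's inequality (for $\subseteq$) and the explicit lift $\mathbf Y=\mathbf X+(Z'-Z)\mathbf u$ along a dual-norm-extremal direction $\mathbf u$ (for $\supseteq$) is exactly the natural route, and the supporting details---existence of the H\"older extremizer $\mathbf u$ with $\|\mathbf u\|_a=1/\|\mathbf w\|_b$, the comonotone coupling $Z'=H^{-1}(U)$ attaining the one-dimensional $W_p$ cost, and the moment check for $\mathbf Y$---are all handled cleanly. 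The only cosmetic remark is that the optimal coupling in the $\subseteq$ direction is in fact attained (as you note at the end), so the $\delta$-approximation can be dropped; and in the $\supseteq$ direction you might state explicitly that the auxiliary uniform $U$ can be built as a measurable function of $Z$ together with an independent uniform on the enlarged space, so that $\mathbf X$, $Z$, and $Z'$ genuinely live together---but you already flag this with your parenthetical about enlarging the space.
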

Based on Lemma \ref{lem:3},   the  optimization problem \eqref{eq:opt2} can be solved using Theorem \ref{thm:3}.
 \begin{proposition}\label{prop:5}Suppose that $\varepsilon>0$,  $p \geqslant 1$, $a \geqslant 1$, a random vector $\mathbf{X}$ satisfying $F_{\mathbf{X}} \in \mathcal{M}_p\left(\mathbb{R}^n\right)$ and $\Lambda:\R\to [0,1]$.  For $\rho=\tilde{q}_{\Lambda}^+$ with decreasing $\Lambda$ or $\rho=\tilde{q}_{\Lambda}^-$,  we have
  \begin{align}\label{eq:selection1}\min _{\mathbf{w} \in {\Delta_n}} \sup_{F\in\mathcal{M}_{\mathbf{w}, a, p}(G,\varepsilon)}\rho\left(F\right)=\min _{\mathbf{w} \in {\Delta_n}} \rho\left(F^-_{\mathcal{M}_{p}\left(F_{\mathbf{w}^{\top} \mathbf{X}}, \epsilon\|\mathbf{w}\|_{a/(a-1)}\right)}\right),
  \end{align}
 where
 $F^-_{\mathcal{M}_{p}\left(F_{\mathbf{w}^{\top} \mathbf{X}}, \epsilon\|\mathbf{w}\|_{a/(a-1)}\right)}(x)=u_+^{-1}(x),~x\in\R$ with $u(\alpha)$ being the unique solution of the following equation
$$\int_\alpha^1\left(u(\alpha)-F_{\mathbf{w}^\top\mathbf{X}}^{-1}(s)\right)_{+}^p \mathrm{~d} s=(\epsilon\|\mathbf{w}\|_{a/(a-1)})^p, \quad \alpha \in(0,1).$$
Moreover, the optimal portfolio positions are given by the minimizer of the right-hand side of \eqref{eq:selection1}.
 \end{proposition}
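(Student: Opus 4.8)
The plan is to reduce the multivariate problem \eqref{eq:opt2} to the one-dimensional Wasserstein problem already treated in Theorem \ref{thm:3}, and then read off the worst-case distribution from the explicit formula \eqref{eq:eps1} of Proposition \ref{lemma:2}. Fix $\mathbf{w}\in\Delta_n$ and use $F_{\mathbf X}$ as the benchmark vector distribution. First I would apply Lemma \ref{lem:3} to rewrite $\mathcal{M}_{\mathbf{w},a,p}(F_{\mathbf X},\varepsilon)=\mathcal{M}_{p}\big(F_{\mathbf{w}^{\top}\mathbf X},\,\varepsilon\|\mathbf{w}\|_{b}\big)$, where $b$ is the H\"older conjugate of $a$, that is $b=a/(a-1)$ (with $b=\infty$, i.e.\ $\|\mathbf{w}\|_\infty=\max_i w_i$, when $a=1$). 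Before invoking the univariate results I would verify their hypotheses for this reduced set: each $X_i$ has finite $p$-th moment and $\mathbf{w}$ is bounded, so Minkowski's inequality gives $F_{\mathbf{w}^{\top}\mathbf X}\in\mathcal M_p$; and since $\mathbf{w}\in\Delta_n$ satisfies $\sum_i w_i=1$ with $w_i\ge 0$, we have $\|\mathbf{w}\|_b>0$, hence the radius $\varepsilon\|\mathbf{w}\|_b$ is strictly positive, so Proposition \ref{lemma:2} applies and the associated $\widehat F^-$ and $F^+$ are continuous and attainable.

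Next I would apply Theorem \ref{thm:3} with $G$ replaced by $F_{\mathbf{w}^{\top}\mathbf X}$ and $\varepsilon$ replaced by $\varepsilon\|\mathbf{w}\|_b$. Part (i) gives $\sup_{F\in\mathcal M_p(F_{\mathbf{w}^{\top}\mathbf X},\varepsilon\|\mathbf{w}\|_b)}\tilde q_\Lambda^-(F)=\tilde q_\Lambda^-\big(F^-_{\mathcal M_p(F_{\mathbf{w}^{\top}\mathbf X},\varepsilon\|\mathbf{w}\|_b)}\big)$, and part (ii), applicable when $\Lambda$ is decreasing, gives the same identity with $\tilde q_\Lambda^-$ replaced by $\tilde q_\Lambda^+$; thus, in both admissible cases for $\rho$, we obtain for each fixed $\mathbf{w}$
\begin{equation*}
\sup_{F\in\mathcal M_p(F_{\mathbf{w}^{\top}\mathbf X},\varepsilon\|\mathbf{w}\|_b)}\rho(F)=\rho\big(F^-_{\mathcal M_p(F_{\mathbf{w}^{\top}\mathbf X},\varepsilon\|\mathbf{w}\|_b)}\big).
\end{equation*}
I would then substitute the explicit form from \eqref{eq:eps1}: with benchmark $F_{\mathbf{w}^{\top}\mathbf X}$ and radius $\varepsilon\|\mathbf{w}\|_b$, this extreme probability equals $u_+^{-1}$, where $u(\alpha)$ is the unique solution of $\int_\alpha^1\big(u(\alpha)-q_s^-(F_{\mathbf{w}^{\top}\mathbf X})\big)_+^p\,\d s=(\varepsilon\|\mathbf{w}\|_b)^p$; since $q_s^-(F_{\mathbf{w}^{\top}\mathbf X})=F_{\mathbf{w}^{\top}\mathbf X}^{-1}(s)$, this is exactly the equation displayed in the statement.

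Finally I would take the minimum over $\mathbf{w}\in\Delta_n$ on both sides of the displayed identity, obtaining \eqref{eq:selection1}; since the identity holds for each $\mathbf{w}$ separately, any minimizer of the right-hand side is an optimal portfolio position, which gives the last claim, exactly as in the proof of Proposition \ref{prop:4}. I do not anticipate a genuine obstacle here: the argument is the composition of Lemma \ref{lem:3}, Theorem \ref{thm:3} (which itself rests on Theorem \ref{Th:main}, Proposition \ref{prop:attainable}, and Proposition \ref{lemma:2}), and the explicit extreme-probability formula \eqref{eq:eps1}. The only point requiring mild care is checking that the hypotheses of those results survive the reduction for every $\mathbf{w}\in\Delta_n$ — finiteness of the $p$-th moment of $F_{\mathbf{w}^{\top}\mathbf X}$, strict positivity of $\varepsilon\|\mathbf{w}\|_b$, and, when $\rho=\tilde q_\Lambda^+$, the continuity and attainability of the one-sided infimal cdf — all of which are provided by Proposition \ref{lemma:2}.
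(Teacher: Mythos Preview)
Your proposal is correct and follows exactly the route the paper takes: the paper's proof is the one-line ``Using Theorem \ref{thm:3} and Lemma \ref{lem:3}, we obtain the results immediately,'' and your argument is precisely the spelled-out version of that composition, together with the explicit formula from Proposition \ref{lemma:2}. Your additional hypothesis checks (finiteness of the $p$-th moment of $\mathbf{w}^\top\mathbf{X}$, strict positivity of $\varepsilon\|\mathbf{w}\|_b$, and attainability/continuity of $\widehat F^-$) are not in the paper's proof but are harmless and indeed appropriate.
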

\begin{proof} Using Theorem \ref{thm:3} and Lemma \ref{lem:3}, we obtain the results immediately.
\end{proof}

\begin{example} Let the benchmark distribution $F_0=\textbf{t}(\tau,\boldsymbol{\mu}_i, \Sigma_i)$ ($i=1,2$), and denote by $F_\tau=\textbf{t} (\tau,0,1)$. Then for $x\in\R$,  we have $F^-_{\mathcal{M}_{p}\left(F_{\mathbf{w}^{\top} \mathbf{X}}, \epsilon\|\mathbf{w}\|_{a/(a-1)}\right)}(x)=u_+^{-1}(x)$ with $u(\alpha)$ being the unique solution of the following equation  $$ \int_\alpha^1\left(u(\alpha)-\mathbf{w}^{\top} \boldsymbol{\mu}_i-\sqrt{\mathbf{w}^{\top} \Sigma_i \mathbf{w}} F_\tau^{-1}(s)\right)^p_+\d s= (\epsilon\|\mathbf{w}\|_{a/(a-1)})^p.$$  We consider the case of   $n=2$. Let  $p=1$,  $a=2$, $\epsilon=0.1$, $\tau=3$ and $\Lambda_{(\alpha,\beta;z)}: x\mapsto \alpha \id_{\{x< z\}} + \beta \id_{\{x\geq z\}}$.  We assume that $\boldsymbol{\mu}_1=(0.5,1)^\top$, $\boldsymbol{\mu}_2=(3,4)^\top$, and  the  covariance matrices are  given by \eqref{eq:matric}.

 Figures \ref{fig:3} and \ref{fig:4}   show  the worst-case values of the   portfolio { as functions of  $w_1$}.   The optimal portfolio positions under the {$\Lambda$-quantile}  criterion can differ from those under the VaR criterion.   In Figure \ref{fig:3}, with  an increasing $\Lambda$, when the mean of the benchmark distribution is relatively small, the DM uses the small probability level of $0.8$ to determine the portfolio positions; when the mean of the portfolio is relatively large, a larger  probability  level of $0.95$ is used. But in Figure \ref{fig:4},  where  $\Lambda$ is a decreasing function, this situation is reversed.  Additionally, the optimal portfolio positions are higher when the covariance matrix of the  benchmark distribution is  positively correlated compared to the case with negatively correlated assets. All these phenomena can be explained similarly to Example \ref{exm:5}.

\begin{figure}[h!]
\centering
 \includegraphics[width=15cm]{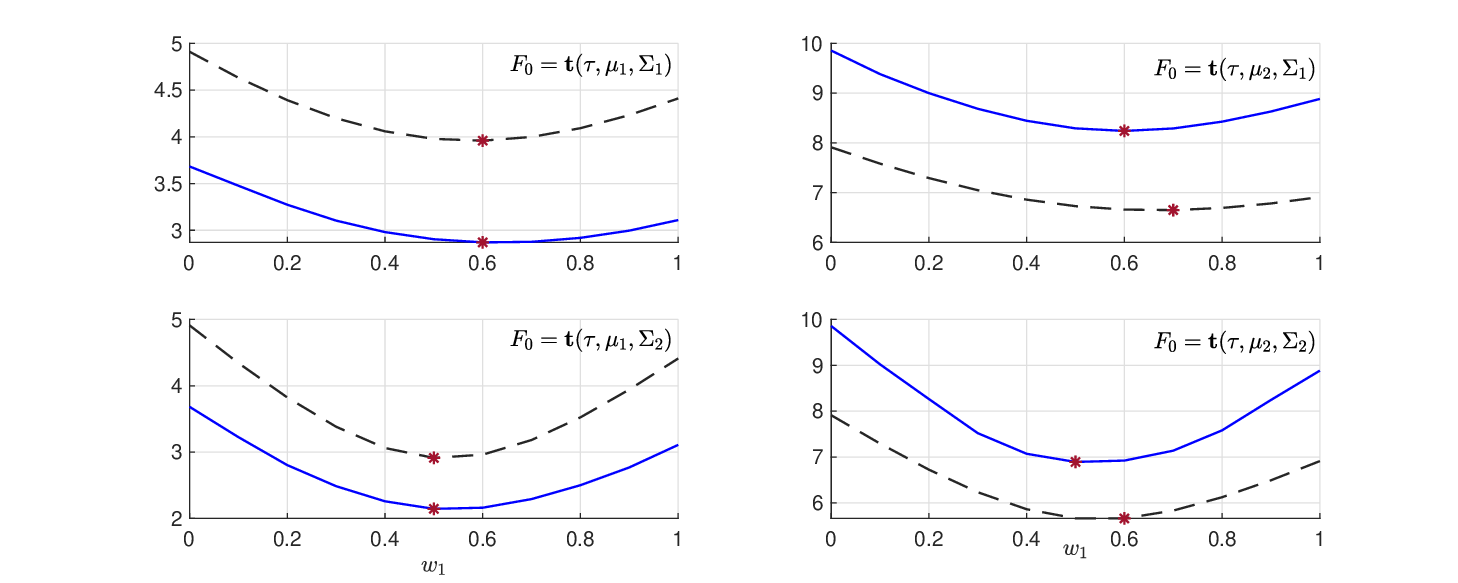}
 \caption{{ The worst-case values of $\tilde q^-_{\Lambda}$ with $\Lambda=\Lambda_{(0.8,0.95;6)}$ (blue solid line)   and  $q^-_\alpha$   with $\alpha=0.9$ (black dashed line) as functions of $w_1$}}\label{fig:3}
  \includegraphics[width=15cm]{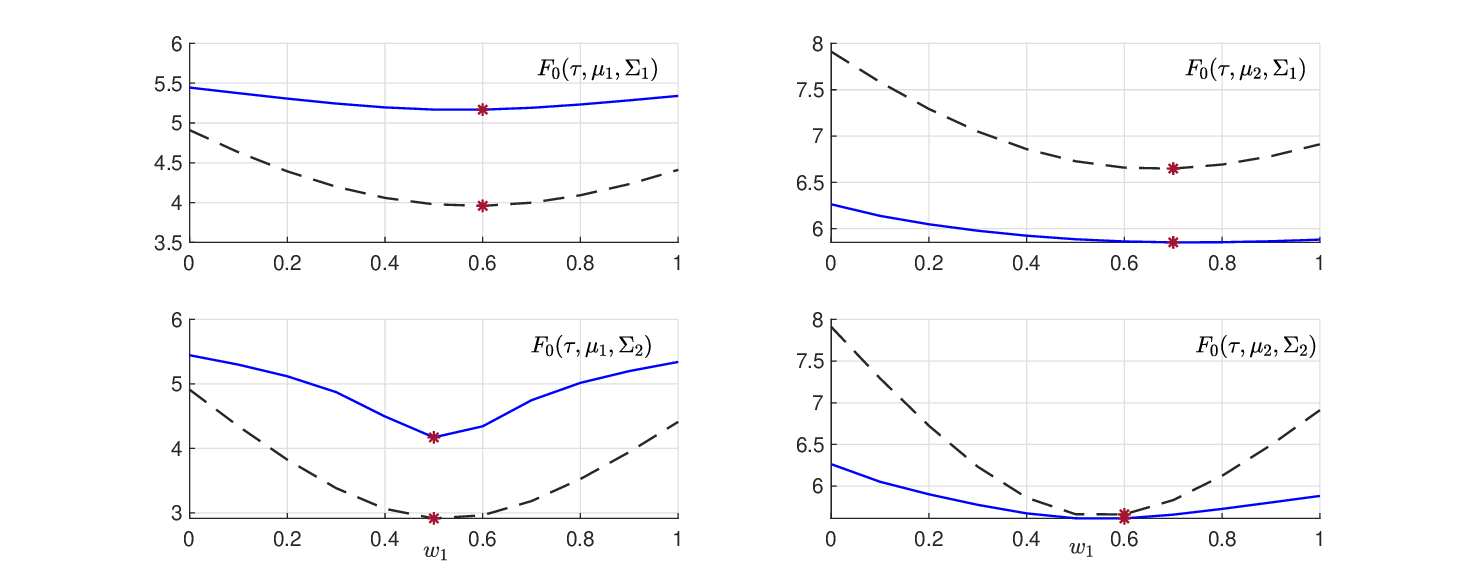}
   \caption{{ The worst-case values of $\tilde q^-_{\Lambda}$ with   $\Lambda=\Lambda_{(0.95,0.8;6)}$ (blue solid line)  and $q^-_\alpha$ with $\alpha=0.9$ (black dashed line) as functions of $w_1$}}\label{fig:4}
\end{figure}

 \end{example}

\subsection{Marginal constraints in risk aggregation}\label{Sec:4.3}  For a loss vector $(X_1,\dots, X_n)$, suppose that the marginal distribution $X_i\sim F_i$ is known but the dependence structure is completely unknown. This assumption is motivated from the context where data from different correlated products are
separately collected and thus no dependence information is available; see \cite{EPR13} and \cite{EWW15}. All the possible distributions of the total loss $X_1+\dots+X_n$ are characterized by the aggregation set as below $$ \mathcal{D}_n(\mathbf{F})=\left\{\text { cdf of } X_1+\cdots+X_n: X_i \sim F_i, i=1, \ldots, n\right\} ,$$
where $\mathbf{F}=\left(F_1, \ldots, F_n\right) \in \mathcal{M}^n$.
 The worst-case and best-case values of $\rho$ over this uncertainty set are given by $$ \overline{\rho}(\mathbf{F})=\sup \left\{\rho(G): G \in \mathcal{D}_n(\mathbf{F})\right\}, ~~\text{and}~~\underline{\rho}(\mathbf{F})=\inf \left\{\rho(G): G \in \mathcal{D}_n(\mathbf{F})\right\}.$$
 The worst-case value of quantiles over the aggregation set has been studied extensively in the literature.  It is well known that the  worst-case value of  quantiles in risk aggregation generally does not admit any analytical formula.  Typically, the results in the literature either offer explicit expressions for some special type of marginal distributions or provide some upper bounds for the general marginal distributions.
We refer to \cite{EP06} for bounds, and \cite{WW16}, \cite{JHW16} and \cite{BLLW20} for the analytical expressions.

{ We first offer a variant version of  Theorem \ref{Th:main} and Proposition \ref{prop:attainable}, which would be useful to our results in this subsection.
For $\Lambda:\R\to [0,1]$, let $\alpha_{\Lambda}^-=\inf_{x\in\R}\Lambda(x)$ and $\alpha_{\Lambda}^+=\sup_{x\in\R}\Lambda(x)$. The following observation follows directly from the definition of $\Lambda$-quantiles.
\begin{lemma}\label{lem:1}
    For an increasing function $f:\R\to [0,1]$ and $\rho \in \{q_{\Lambda}^-, q_{\Lambda}^+, \tilde{q}_{\Lambda}^-, \tilde{q}_{\Lambda}^+\}$, if $\beta_1<\alpha_{\Lambda}^-$ and $\beta_2>\alpha_{\Lambda}^+$, then we have 
$\rho(f)=\rho(\beta_1\vee f)=\rho(\beta_2\wedge f)$. 
\end{lemma}

For $t\in\R$, we say that $t\vee F_{\M}^-$ (resp. $t\wedge F_{\M}^+$) is \emph{attainable}  if for any $z\in\R$, there exists $F\in\M$ such that $t\vee F(z)=t\vee F_{\M}^-(z)$ (resp. $t\wedge F(z)=t\wedge F_{\M}^+(z)$).
Based on the above lemma, we have the following version of Theorem \ref{Th:main} and Proposition \ref{prop:attainable}. 
\begin{proposition}\label{prop:new} For  $\beta_1<\alpha_{\Lambda}^-$ and $\beta_2>\alpha_{\Lambda}^+$,  Theorem \ref{Th:main} and Proposition \ref{prop:attainable} are still valid if $F_{\M}^-$ is replaced by $\beta_1\vee F_{\M}^-$, $F_{\M}^+$ by $\beta_2\wedge F_{\M}^+$ and $\widehat{F}_{\M}^-$ by $\beta_1\vee \widehat{F}_{\M}^-$.
\end{proposition}
\begin{proof}
By Lemma \ref{lem:1}, we immediately obtain that (i) of Theorem \ref{Th:main} still holds when $F_{\M}^-$ is replaced  by  $\beta_1\vee F_{\M}^-$ and $F_{\M}^+$ by $\beta_2\wedge F_{\M}^+$.   Furthermore, using Lemma \ref{lem:1} and the attainability of $\beta_1\vee F_{\M}^-$ or $\beta_2\wedge F_{\M}^+$, we can show  that (ii) of Theorem \ref{Th:main} still holds after the replacement. The proof is almost the same as before and hence it is omitted.  With the replacement in place,  (iii) of Theorem \ref{Th:main} follows directly from (i) and (ii) of Theorem \ref{Th:main},  along with the fact that $\Lambda$ is decreasing. 
Finally, the proof of Proposition \ref{prop:attainable} is almost the same as before and hence, it is omitted.
\end{proof}}

{We say that  a distribution $F$ has a monotone density beyond its $t$-quantile for $t\in [0,1)$ if $\frac{(F(x)-t)_+}{1-t}$ has a monotone density on its support. Similarly, we say that a distribution $F$ has a monotone density below its $t$-quantile for $t\in (0,1]$ if $\frac{F(x)\wedge t}{t}$ has a monotone density on its support.
 Let $\M_{t,D}^+$ (resp. $\M_{t,I}^+$) denote the set of all distributions with decreasing (resp. increasing) densities beyond their corresponding $t$-quantiles, and $\M_{t,D}^-$ (resp. $\M_{t,I}^-$) denote the set of all distributions with decreasing (resp. increasing) densities below their corresponding $t$-quantiles.}
 The expressions of the {extremal distributions} over the  uncertainty set $\mathcal{D}_n(\mathbf{F})$ are displayed in the following proposition.
\begin{proposition}\label{prop:aggregationF} Suppose that  $F_1^{-1},\cdots, F_n^{-1}$ are continuous over $(0,1)$.
For $\mathbf F\in \left(\M_{t,D}^+\right)^n \cup \left(\M_{t,I}^+\right)^n$ with $t\in [0,1)$, we have
 \begin{align}\label{eq:aggregation}
 t\vee F_{\mathcal{D}_n(\mathbf{F})}^-=t\vee (1-H_{\mathbf F}),
 \end{align}
 where $$H_{\mathbf F}(x)=\inf_{\mathbf r\in \Theta_n(x)}\left\{\sum_{i=1}^{n}\frac{1}{x-r}\int^{x-r+r_i}_{r_i}(1-F_i(y))\d y\right\}$$
 with $\mathbf r=(r_1,\dots, r_n)$, $r=\sum_{i=1}^{n}r_i$ and $
  \Theta_n = \{(r_1,\dots,r_n) \in \mathbb R^n: \sum_{i=1}^{n}r_i< x  \}$.
  
  For  $\mathbf F\in \left(\M_{t,D}^-\right)^n \cup \left(\M_{t,I}^-\right)^n$ with $t\in (0,1]$, we have
 \begin{align}\label{eq:aggregationG}
t\wedge F_{\mathcal{D}_n(\mathbf{F})}^+(x)=t\wedge H_{\overline{\mathbf F}},
 \end{align}
 where $\overline{\mathbf{F}}=(\overline{F}_1,\dots, \overline{F}_n)$ with $\overline{F}_i(x)=\lim_{y\downarrow x}1-F_i(-y),~x\in\R$.
 
  Moreover, both of $t\vee\widehat{F}_{\mathcal{D}_n(\mathbf{F})}^-$ and $t\wedge F_{\mathcal{D}_n(\mathbf{F})}^+$ are continuous and attainable.
\end{proposition}
\begin{proof}
First, we suppose that $t\in (0,1)$. In light of Theorems 2 and 4 of \cite{BLLW20}, we have
\begin{align}\label{eq:tech1}\sup_{F\in \mathcal{D}_n(\mathbf{F})}q_\alpha^{+}(F)=H_{\mathbf F}^{-1}(1-\alpha),~\alpha\in [t,1),
\end{align}
where $H_{\mathbf F}^{-1}(\alpha)=\inf\{x\in\R: H_{\mathbf F}(x)<\alpha\},~\alpha\in (0,1)$.
Using the fact $F_1^{-1},\dots, F_n^{-1}$ are continuous over $(0,1)$, by Lemma 4.5 of \cite{BJW14}, we have
 \begin{align}\label{eq:tech2}\sup_{F\in \mathcal{D}_n(\mathbf{F})}q_\alpha^{-}(F)=\sup_{F\in \mathcal{D}_n(\mathbf{F})}q_\alpha^{+}(F),~\alpha\in (0,1).
 \end{align}
 Moreover, in light of (i) of Corollary \ref{coro:1}, we have $\sup_{F\in \mathcal{D}_n(\mathbf{F})}q_\alpha^{-}(F)=q_{\alpha}^-(F_{\mathcal{D}_n(\mathbf{F})}^-)$ for $\alpha\in (0,1)$. Consequently, $q_{\alpha}^-(F_{\mathcal{D}_n(\mathbf{F})}^-)=H_{\mathbf F}^{-1}(1-\alpha),~\alpha\in [t,1).$  As it is stated in the proof of Theorem 4 of \cite{BLLW20}, $H_{\mathbf F}(x)$ is continuous over $\R$ and strictly decreasing over $(-\infty, \sum_{i=1}^{n}q_1^{-1}(F_i))$ and is equal to $0$ on $[\sum_{i=1}^{n}q_1^{-1}(F_i),\infty)$. Hence, we have  $F_{\mathcal{D}_n(\mathbf{F})}^-(x)=1-H_{\mathbf F}(x), ~x\geq H_{\mathbf F}^{-1}(1-t)$,  which implies that $t\vee F_{\mathcal{D}_n(\mathbf{F})}^-=t\vee (1-H_{\mathbf F})$. We can similarly show that the conclusion holds for $t=0$.

Suppose $t\in (0,1)$. By  (i) of Corollary \ref{coro:1}, we have $\inf_{F\in \mathcal{D}_n(\mathbf{F})}q_\alpha^{+}(F)=q_{\alpha}^+(F_{\mathcal{D}_n(\mathbf{F})}^+)$ for $\alpha\in (0,1)$. Direct computation gives $\inf_{F\in \mathcal{D}_n(\mathbf{F})}q_\alpha^{+}(F)=-\sup_{F\in \mathcal{D}_n(\overline{\mathbf{F}})}q_{1-\alpha}^{-}(F)$, where $\overline{\mathbf{F}}=(\overline{F}_1,\dots, \overline{F}_n)$. Note that $\overline{\mathbf{F}}\in \left(\M_{1-t, D}^-\right)^n \cup \left(\M_{1-t, I}^-\right)^n$. Hence, using \eqref{eq:tech1} and \eqref{eq:tech2}, we have
 $$q_{\alpha}^+(F_{\mathcal{D}_n(\mathbf{F})}^+)=-\sup_{F\in \mathcal{D}_n(\overline{\mathbf{F}})}q_{1-\alpha}^{-}(F)=-\sup_{F\in \mathcal{D}_n(\overline{\mathbf{F}})}q_{1-\alpha}^{+}(F)=-H_{\overline{\mathbf F}}^{-1}(\alpha),~\alpha\in (0,t].$$
 Note that $H_{\overline{\mathbf F}}(x)$ is strictly decreasing and continuous over $(-\infty, \sum_{i=1}^{n}q_1^{-1}(\overline{F}_i))$ and is equal to $0$ on $[\sum_{i=1}^{n}q_1^{-1}(\overline{F}_i),\infty)$. Hence, we have $F_{\mathcal{D}_n(\mathbf{F})}^+(x)=H_{\overline{\mathbf F}}(-x),~x\leq -H_{\overline{\mathbf F}}^{-1}(t)$. This implies
$t\wedge F_{\mathcal{D}_n(\mathbf{F})}^+(x)=t\wedge H_{\overline{\mathbf F}}$. We can similarly show that the conclusion holds for $t=1$.

Clearly, $t\vee\widehat{F}_{\mathcal{D}_n(\mathbf{F})}^-$ and $t\wedge F_{\mathcal{D}_n(\mathbf{F})}^+$ are continuous.
 Next, we show the attainability of $\widehat{F}_{\mathcal{D}_n(\mathbf{F})}^-$ and $F_{\mathcal{D}_n(\mathbf{F})}^+$, which directly implies the attainability of $t\vee\widehat{F}_{\mathcal{D}_n(\mathbf{F})}^-$ and $t\wedge F_{\mathcal{D}_n(\mathbf{F})}^+$.  For any $x\in\R$, there exists a sequence of $F_k\in \mathcal{D}_n(\mathbf{F}), k\geq 1$ such that $\lim_{k\to\infty}F_k(x-)=\widehat{F}_{\mathcal{D}_n(\mathbf{F})}^-(x)$. For each $F_k$, there exists a copula $C_k$ such that the joint distribution of $(X_1,\dots, X_n)$ is $C_k(F_1,\dots, F_n)$ and $X_1+\dots+X_n\sim F_k$. It follows from Theorem 25.10 of \cite{B95} that we can find a subsequence $\{C_{k_m}, m\geq 1\}$ and a copula $C$ such that $C_{k_m}\to C$ weakly as $m\to\infty$, which implies $F_{k_m}\to F$ weakly as $m\to\infty$, where $(X_1,\dots, X_n)\sim C(F_1,\dots, F_n)$ and $X_1+\dots+X_n\sim F$.  Hence, we have $F\in\mathcal{D}_n(\mathbf{F})$. Note that there exists a sequence of $y_l\uparrow x$ such that $F$ is continuous at all $y_l$. Hence, we have $F(y_l)=\lim_{m\to\infty}F_{k_m}(y_l)\leq \liminf_{m\to\infty}F_{k_m}(x-)=\widehat{F}_{\mathcal{D}_n(\mathbf{F})}^-(x)$. Letting $l\to\infty$, it follows that $F(x-)\leq \widehat{F}_{\mathcal{D}_n(\mathbf{F})}^-(x)$, implying the attainability of $\widehat{F}_{\mathcal{D}_n(\mathbf{F})}^-$. We can similarly show the attainability of $\widehat{F}_{\mathcal{D}_n(\mathbf{F})}^+$.
\end{proof}
It is worthwhile to mention that the expressions of the {extremal distributions} in Proposition \ref{prop:aggregationF} are essentially the dual bounds in Theorem 4.17 of \cite{R13}.
In light of  Propositions \ref{prop:new} and   \ref{prop:aggregationF}, we obtain the following result.
\begin{theorem}\label{thm:4}
Suppose $F_1^{-1},\cdots, F_n^{-1}$ are continuous over $(0,1)$.
For  $\mathbf F\in \left(\M_{t,D}^+\right)^n \cup \left(\M_{t,I}^+\right)^n$ with $0\leq t<\alpha_{\Lambda}^-$, 
  let  $t\vee F_{\mathcal{D}_n(\mathbf{F})}^-$  be given by \eqref{eq:aggregation}. Then we have
  \begin{enumerate}[(i)]
 \item $\sup_{F\in \mathcal{D}_n(\mathbf{F})}\tilde{q}_{\Lambda}^-(F)=\tilde{q}_{\Lambda}^-(t\vee F_{\mathcal{D}_n(\mathbf{F})}^-)$;
\item  If $\Lambda$ is decreasing, then (i) remains true by replacing $\tilde{q}_\Lambda^-$ by $q_\Lambda^-$; Moreover, $\sup_{F\in\mathcal{D}_n(\mathbf{F})}\tilde{q}_{\Lambda}^+(F)=\sup_{F\in\mathcal{D}_n(\mathbf{F})}q_{\Lambda}^+(F)=\tilde{q}_{\Lambda}^+(t\vee F_{\mathcal{D}_n(\mathbf{F})}^-)$.
 \end{enumerate}
  For $\mathbf F\in \left(\M_{t,D}^-\right)^n \cup \left(\M_{t,I}^-\right)^n$ with $\alpha_{\Lambda}^+<t\leq 1$, let $t\wedge F_{\mathcal{D}_n(\mathbf{F})}^+$ be given by \eqref{eq:aggregationG}. Then we have
  \begin{enumerate}[(i)]
 \item $\inf_{F\in \mathcal{D}_n(\mathbf{F})}q_{\Lambda}^+(F)=q_{\Lambda}^+(t\wedge F_{\mathcal{D}_n(\mathbf{F})}^+)$;  $\inf_{F\in\mathcal{D}_n(\mathbf{F})}q_{\Lambda}^-(F)=q_{\Lambda}^-(t\wedge F_{\mathcal{D}_n(\mathbf{F})}^+)$;
\item  If $\Lambda$ is decreasing, then (i) remains true by replacing $q_\Lambda^+$  by $\tilde{q}_\Lambda^+$ and  $q_\Lambda^-$ by $\tilde{q}_\Lambda^-$.
 \end{enumerate}
 
 \end{theorem}

 Note that in (i) of Theorem \ref{thm:4}, the marginals have monotone densities beyond their corresponding $t$-quantiles for $0\leq t<\alpha_{\Lambda}^-$. If $\alpha_{\Lambda}^-$ is chosen to be sufficiently large such as $\alpha_{\Lambda}^-=0.9$ or $0.95$, then we can choose  $t$ to be very close to $1$ such as $t=0.89$ or $0.94$. In this case, (i) of Theorem \ref{thm:4} is valid for many commonly used marginal distributions in finance and insurance such as normal, lognormal,  $t$, exponential, gamma, and Pareto distributions. 
 
In what follows, we consider the portfolio selection problem, i.e., to choose a optimal portfolio position $\boldsymbol{w} \in \Delta_n$ such that
$$
R_\rho(\boldsymbol{w})=\rho\left(\sum_{i=1}^n w_i X_i\right)
$$ is minimized,  where $\rho$ is a risk measure and  $\left(X_1, \ldots, X_n\right)$ represents the negative returns of $n$ different assets. Here we suppose that $X_1,\dots, X_n$ have the  identical distribution $F$ as we aim to check whether diversification can reduce the risk.

Recall that a doubly stochastic matrix is a square matrix with nonnegative entries and the sum of each column/row is equal to 1. Let $\mathcal{Q}_n$ denote the set of all $n\times n$ doubly stochastic matrices.   For two portfolio positions $\boldsymbol{w}, \boldsymbol{\gamma}\in\Delta_n$, we can say that $\boldsymbol{\gamma}$ is more diversified than $\boldsymbol{w}$, denoted by $\boldsymbol{\gamma} \prec \boldsymbol{w}$, if $\boldsymbol{\gamma}={ A} \boldsymbol{w}$ for some ${ A} \in \mathcal{Q}_n$. Note that this binary relationship is also called the \emph{majorization order}; see e.g., \cite{M11}.
Here, we suppose the marginal distribution $X_i\sim F$ is known but the dependence structure is completely unknown. Hence, we consider the worst-case scenario. That is to find the optimal portfolio position $\boldsymbol{w}\in\Delta_n$ such that
$$
\overline{R}_\rho(\boldsymbol{w})=\sup \left\{\rho\left(\sum_{i=1}^n w_i Y_i\right): Y_1, \ldots, Y_n \sim F\right\}.$$
is minimized. Note that in the above setup, we only minimize the risk but do not consider the return of the portfolio. That is because the expected return of the portfolio is a constant and any constraint on the expected return makes no real sense.

\begin{proposition}\label{prop:diverse}
 Suppose that  $\rho=\tilde{q}_{\Lambda}^+$ with decreasing $\Lambda$ or $\rho=\tilde{q}_{\Lambda}^-$. Moreover, suppose that  $\left(X_1, \ldots, X_n\right)$ has an identical marginal distribution $F\in \M_{t,D}^+ \cup \M_{t,I}^+$ with $0\leq t<\alpha_{\Lambda}^-$ and $F^{-1}$ is continuous over $(0,1)$.  If $\boldsymbol{\gamma} \prec \boldsymbol{w}$ , then $\overline{R}_\rho(\boldsymbol \gamma) \geqslant \overline{R}_\rho(\boldsymbol{w})$. 
\end{proposition}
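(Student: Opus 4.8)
The plan is to read the statement as the assertion that the worst-case $\Lambda$-quantile $\overline R_\rho$ is Schur-concave on $\Delta_n$, and to deduce this by reducing it, via Theorem~\ref{thm:4}, to the corresponding Schur-concavity of worst-case quantiles established in \cite{CLLW22}. For $\boldsymbol v\in\Delta_n$ write $\mathbf F_{\boldsymbol v}=(F_{v_1X_1},\dots,F_{v_nX_n})$ for the vector of marginal laws of $(v_1X_1,\dots,v_nX_n)$; since $X_i\sim F\in\M_D\cup\M_I$ and a strictly positive rescaling preserves a monotone density, $\mathbf F_{\boldsymbol v}\in\M_D^n\cup\M_I^n$ whenever all $v_i>0$ (if some $v_i=0$ that coordinate simply drops out and the argument is run in the remaining strictly positive block). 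As $\rho$ is law-invariant, $\overline R_\rho(\boldsymbol v)=\sup\{\rho(G):G\in\mathcal D_n(\mathbf F_{\boldsymbol v})\}$, so Theorem~\ref{thm:4}(i) (for $\rho=\tilde q_\Lambda^-$, any $\Lambda$) and Theorem~\ref{thm:4}(ii) (for $\rho=\tilde q_\Lambda^+$ with decreasing $\Lambda$) give
\begin{equation*}
\overline R_\rho(\boldsymbol v)=\rho\big(F^-_{\mathcal D_n(\mathbf F_{\boldsymbol v})}\big),\qquad \boldsymbol v\in\Delta_n .
\end{equation*}
Hence it suffices to compare $\rho\big(F^-_{\mathcal D_n(\mathbf F_{\boldsymbol\gamma})}\big)$ with $\rho\big(F^-_{\mathcal D_n(\mathbf F_{\boldsymbol w})}\big)$.

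Next I would record that both $\tilde q_\Lambda^-$ and $\tilde q_\Lambda^+$ are antitone with respect to the pointwise order on their function argument: if $f_1\le f_2$ then $\{x:f_2(x)<\Lambda(x)\}\subseteq\{x:f_1(x)<\Lambda(x)\}$ and $\{x:f_2(x)\le\Lambda(x)\}\subseteq\{x:f_1(x)\le\Lambda(x)\}$, so $\tilde q_\Lambda^-(f_1)\ge\tilde q_\Lambda^-(f_2)$ and $\tilde q_\Lambda^+(f_1)\ge\tilde q_\Lambda^+(f_2)$ — this is exactly the monotonicity of $\Lambda$-quantiles already used in the proof of Theorem~\ref{Th:main}. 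Consequently the proposition reduces to the single pointwise domination
\begin{equation*}
F^-_{\mathcal D_n(\mathbf F_{\boldsymbol\gamma})}(x)\ \le\ F^-_{\mathcal D_n(\mathbf F_{\boldsymbol w})}(x)\qquad\text{for all }x\in\R ,
\end{equation*}
equivalently, to showing that $\boldsymbol v\mapsto F^-_{\mathcal D_n(\mathbf F_{\boldsymbol v})}(x)$ is Schur-convex on $\Delta_n$ for every fixed $x$.

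To establish this I would pass back to quantiles. By Corollary~\ref{coro:1}(i), for every $\alpha\in(0,1)$,
\begin{equation*}
q_\alpha^-\big(F^-_{\mathcal D_n(\mathbf F_{\boldsymbol v})}\big)=\sup_{G\in\mathcal D_n(\mathbf F_{\boldsymbol v})}q_\alpha^-(G)=\overline{\VaR}_\alpha\Big(\textstyle\sum_{i=1}^n v_iX_i\Big),
\end{equation*}
the worst-case Value-at-Risk of the homogeneous portfolio, which by \cite{CLLW22} is Schur-concave in $\boldsymbol v$ on $\Delta_n$ (diversification is penalized for quantiles when the common marginal has a monotone density). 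Since $\boldsymbol\gamma\prec\boldsymbol w$, this yields $q_\alpha^-\big(F^-_{\mathcal D_n(\mathbf F_{\boldsymbol\gamma})}\big)\ge q_\alpha^-\big(F^-_{\mathcal D_n(\mathbf F_{\boldsymbol w})}\big)$ for all $\alpha\in(0,1)$. Because $F^-_{\mathcal D_n(\mathbf F_{\boldsymbol v})}$ is an increasing right-continuous function (indeed a continuous distribution function, by Proposition~\ref{prop:aggregationF} and its proof), the standard equivalence $q_\alpha^-(F)\ge q_\alpha^-(G)$ for all $\alpha$ $\iff$ $F\le G$ pointwise then gives the desired domination $F^-_{\mathcal D_n(\mathbf F_{\boldsymbol\gamma})}\le F^-_{\mathcal D_n(\mathbf F_{\boldsymbol w})}$. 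Combining this with the antitonicity of the previous paragraph and the identity $\overline R_\rho(\boldsymbol v)=\rho\big(F^-_{\mathcal D_n(\mathbf F_{\boldsymbol v})}\big)$ yields $\overline R_\rho(\boldsymbol\gamma)\ge\overline R_\rho(\boldsymbol w)$.

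The substantive input is the Schur-concavity of worst-case VaR from \cite{CLLW22}, which I would take as given; granted that, the remaining steps are essentially bookkeeping. The point needing the most care is the clean translation between the ordering of all left-quantiles of $F^-_{\mathcal D_n(\mathbf F_{\boldsymbol\gamma})}$ and $F^-_{\mathcal D_n(\mathbf F_{\boldsymbol w})}$ and the reverse pointwise ordering of these envelopes — here I would rely on the structural description of $F^-_{\mathcal D_n(\mathbf F)}=1-H_{\mathbf F}$ recalled in the proof of Proposition~\ref{prop:aggregationF} (right-continuity, and the behaviour of $H_{\mathbf F}$ at the right endpoint of the support, where $F^-_{\mathcal D_n}$ levels off at $1$) — and on the reduction to a strictly positive weight block to dispose of degenerate coordinates $v_i=0$, for which $\mathbf F_{\boldsymbol v}$ would otherwise fail to lie in $\M_D^n\cup\M_I^n$.
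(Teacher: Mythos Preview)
Your argument is correct and follows essentially the same route as the paper: both reduce $\overline R_\rho(\boldsymbol v)$ to $\rho\big(F^-_{\mathcal D_n(\mathbf F_{\boldsymbol v})}\big)$ via Theorem~\ref{thm:4}, invoke the Schur-concavity of worst-case quantiles from \cite{CLLW22} (the paper cites Theorem~3 there) to order the quantiles of the two envelopes, pass to the reverse pointwise order of the envelopes by right-continuity, and conclude by monotonicity of the $\Lambda$-quantile. Your handling of the degenerate coordinates $v_i=0$ by dropping to the strictly positive block is a detail the paper leaves implicit.
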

\begin{proof}
 Write $\boldsymbol{\gamma}=\left(\gamma_1, \ldots, \gamma_n\right)$ and $\boldsymbol{w}=\left(w_1, \ldots, w_n\right)$. Take $X \sim F$, and let $\mathbf{F}$ and $\mathbf{G}$ be the tuples of marginal distributions of $\left(\gamma_1 X, \ldots, \gamma_n X\right)$ and $\left(w_1 X, \ldots, w_n X\right)$, respectively. It follows that
\begin{equation*}\label{eq:bar R}
\overline{R}_\rho(\boldsymbol{\gamma})=\overline{\rho}(\mathbf{F}) \quad \text { and } \quad \overline{R}_\rho(\boldsymbol{w})=\overline{\rho}(\mathbf{G}).
\end{equation*}
 Using $\boldsymbol{\gamma} \prec \boldsymbol{w}$, there exists ${ A}\in \mathcal{Q}_n$ such that $(F_1^{-1},\dots, F_n^{-1})={ A}(G_1^{-1},\dots, G_n^{-1})$, denoted by $\mathbf{F}=A \otimes \mathbf{G}$.
By a slight extension of Theorem 3 of \cite{CLLW22}, for  $\alpha \in[t,1)$, we have $$\sup_{G\in \mathcal D_n(\mathbf{G})}q_{\alpha}^-(G)\leq \sup_{G\in \mathcal D_n({A} \otimes \mathbf{G})}q_{\alpha}^-(G)=\sup_{G\in \mathcal D_n(\mathbf{F})}q_{\alpha}^-(G).$$  Hence, it follows from (i) of Corollary \ref{coro:1} that $q_{\alpha}^-(F_{\mathcal D_n(\mathbf G)}^-)\leq q_{\alpha}^-(F_{\mathcal D_n(\mathbf F)}^-)$ for  $\alpha\in [t,1)$. By Proposition \ref{prop:aggregationF}, we have  $t\vee F_{\mathcal D_n(\mathbf F)}^-$ and $t\vee F_{\mathcal D_n(\mathbf G)}^-$ are continuous over $\R$. Hence, we have $t\vee F_{\mathcal D_n(\mathbf F)}^-(x)\leq t\vee F_{\mathcal D_n(\mathbf G)}^-(x)$ for all $x\in\R$.  By the monotonicity of $\rho$, we have $\rho(t\vee F_{\mathcal D_n(\mathbf F)}^-)\geq \rho(t\vee F_{\mathcal D_n(\mathbf G)}^-)$. Moreover, by Theorem \ref{thm:4},  we have $\overline{R}_\rho(\boldsymbol{\gamma})=\rho(t\vee F_{\mathcal D_n(\mathbf F)}^-)$ and $\overline{R}_\rho(\boldsymbol{w})=\rho(t\vee F_{\mathcal D_n(\mathbf G)}^-)$.  Hence, we have $ \overline{R}_\rho(\boldsymbol{w})\leq \overline{R}_\rho(\boldsymbol {\gamma}).$
\end{proof}

Note that Proposition \ref{prop:diverse} shows that for the assets with same marginal distributions $F\in \M_{t,D}^+ \cup \M_{t,I}^+$ with $0\leq t<\alpha_{\Lambda}^-$,  more diversified portfolio has higher risk under dependence uncertainty if  $\rho=\tilde{q}_{\Lambda}^+$ with decreasing $\Lambda$ or $\rho=\tilde{q}_{\Lambda}^-$ is applied. This is not surprising as some similar conclusions are shown in Proposition 8 of \cite{CLLW22} for quantiles under the assumption that the marginal distribution $F$ has a monotone density. Note that if $\Lambda$ is a constant, $\tilde{q}_{\Lambda}^-$ boils down to quantiles, and if $t=0$, then $F$ has a monotone density. Therefore, the conclusion in Proposition \ref{prop:diverse} can be viewed as an extension of Proposition 8 of \cite{CLLW22}.
Moreover, Proposition \ref{prop:diverse} also implies that under those assumptions,  the optimal portfolio positions are the ones with only a single asset.

 \section{Conclusion}\label{sec:5}

This paper summarizes some properties of $\Lambda$-quantiles and explores distributionally robust models of $\Lambda$-quantiles. We obtain robust $\Lambda$-quantiles on general uncertainty sets under some conditions, showing that obtaining the robust $\Lambda$-quantiles relies on finding the {extremal distributions} over the same uncertainty sets. This finding significantly simplifies  the problem,  enabling us to utilize many existing results on the robust quantiles from the literature.  We provide closed-form solutions for the uncertainty sets characterized by three different constraints: (i) moment constraints; (ii) probability constraints via the Wasserstein distance; and (iii) marginal constraints in risk aggregation.  These results are applied to optimal portfolio selection under model uncertainty, and  can also be extended to other problems such as optimal reinsurance, which will be discussed in the future.

\subsection*{Acknowledgments}
The authors thank the editor, an associate editor, and two anonymous referees for their helpful comments.
 The  research of  Xia Han  is supported by the National Natural Science Foundation of China (Grant Nos. 12301604, 12371471, and 12471449). Peng Liu is the corresponding author.

\subsection*{Data availability statement}

N/A. This is a theoretical paper and it does not make use of real data.

\subsection*{Conflict of interest disclosure}

None of the authors have a conflict of interest to disclose.

\appendix

\label{Appendix:proofs}

\end{document}